\documentclass[submission,copyright,creativecommons]{eptcs}
\providecommand{\event}{AiML 2026} 

\usepackage{aiml26}
\usepackage{iftex}
\usepackage{amsmath}
\usepackage{amssymb}
\usepackage{amsthm}
\usepackage{bussproofs} 
\usepackage{graphicx}
\usepackage{tabularx}
\usepackage{cite}
\usepackage{multicol}
\usepackage[ruled,linesnumbered,noend,lined,commentsnumbered]{algorithm2e}
\usepackage[dvipsnames]{xcolor}

\SetAlFnt{\scriptsize}
\SetAlCapNameFnt{\scriptsize}
\SetAlCapFnt{\scriptsize}
\SetNlSty{}{}{:}
\SetAlgoNlRelativeSize{-1}
\SetVlineSkip{0pt}
\definecolor{keywordcolor}{RGB}{153, 0, 0}
\colorlet{commentcolor}{blue}

\SetCommentSty{mycommfont}
\SetKwComment{Comment}{\%~}{}
\SetKwComment{tcp}{\%~}{}
\SetKwIF{If}{ElseIf}{Else}{\textcolor{keywordcolor}{if}}{\textcolor{keywordcolor}{then}}{\textcolor{keywordcolor}{else if}}{\textcolor{keywordcolor}{else}}{\textcolor{keywordcolor}{end if}}%
\SetKwFor{While}{\textcolor{keywordcolor}{while}}{\textcolor{keywordcolor}{do}}{\textcolor{keywordcolor}{endw}}
\SetKw{Return}{\textcolor{keywordcolor}{return}}

\newcolumntype{Y}{>{\centering\arraybackslash}X}

\ifpdf
\usepackage{underscore}         
\usepackage[T1]{fontenc}
\else
\usepackage{breakurl}    
\fi

\newcommand{\Prop}[0]{\mathsf{P}}
\newcommand{\InqB}[0]{\mathsf{InqB}}
\newcommand{\InqBtensor}[0]{\mathsf{InqB}_{\tensor}}
\newcommand{\PID}[0]{\mathsf{PID}}
\newcommand{\PIDtensor}[0]{\mathsf{PID}_{\tensor}}
\newcommand{\LangTL}[0]{\mathcal{L}}
\newcommand{\LangCL}[0]{\mathcal{L}_{c}}

\newcommand{\Langlog}[1]{\mathcal{L}(#1)}
\newcommand{\dep}[1]{{=}(#1)}
\newcommand{\tensor}[0]{\vee}
\newcommand{\globdis}[0]{\dvee}
\newcommand{\Conn}[0]{C}
\newcommand{\Val}[0]{\mathcal{V}(\Prop)}
\newcommand{\sem}[0]{\models}
\newcommand{\semL}[0]{\models_{\Logic}}
\newcommand{\Logic}[0]{L}
\newcommand{\Hsystem}[1]{\mathcal{H}(#1)}
\newcommand{\Gsystem}[1]{\mathcal{G}(#1)}
\newcommand{\HInqB}[0]{\mathsf{HInqB}}
\newcommand{\HInqBtensor}[0]{\mathsf{HInqB}_{\tensor}}
\newcommand{\HPID}[0]{\mathsf{HPID}}
\newcommand{\HPIDtensor}[0]{\mathsf{HPID}_{\tensor}}
\newcommand{\HilbertAxiom}[1]{\mathrm{(#1)}}
\newcommand{\AxA}[1]{\HilbertAxiom{A#1}}
\newcommand{\AxSplit}[0]{\HilbertAxiom{S}}
\newcommand{\AxDN}[0]{\HilbertAxiom{DN}}
\newcommand{\AxIntro}{\HilbertAxiom{Intro}}
\newcommand{\AxMon}{\HilbertAxiom{Mon}}
\newcommand{\AxElim}{\HilbertAxiom{Elim}}
\newcommand{\AxCom}{\HilbertAxiom{Com}}
\newcommand{\AxDis}{\HilbertAxiom{Dis}}
\newcommand{\AxCons}{\HilbertAxiom{C}}
\newcommand{\GInqB}[0]{\mathsf{GInqB}}
\newcommand{\GInqBtensor}[0]{\mathsf{GInqB}_{\tensor}}
\newcommand{\GPID}[0]{\mathsf{GPID}}
\newcommand{\GPIDtensor}[0]{\mathsf{GPID}_{\tensor}}
\newcommand{\provH}[0]{\vdash}
\newcommand{\provG}[0]{\vdash}
\newcommand{\Svar}[0]{\mathfrak{S}}
\newcommand{\Avar}[0]{\mathfrak{A}}
\newcommand{\LAB}[0]{\mathsf{Lab}}
\newcommand{\LABint}[0]{\mathsf{Lab}_{\cap}}
\newcommand{\lab}[2]{#1:#2}
\newcommand{\relat}[2]{#1\subseteq #2}
\newcommand{\equat}[2]{#1 = #2}
\newcommand{\seq}[0]{\Rightarrow}
\newcommand{\extseq}[0]{\Gamma\Rightarrow\Delta}
\newcommand{\Pow}[1]{\mathcal{P}(#1)}
\newcommand{\myfontsize}[0]{\scriptsize}
\newcommand{\Lrule}[1]{#1_{L}}
\newcommand{\Rrule}[1]{#1_{R}}
\newcommand{\RuleFont}[1]{\mathsf{#1}}
\newcommand{\ax}[0]{\RuleFont{ax}}
\newcommand{\axBot}[0]{\RuleFont{ax}_\bot}
\newcommand{\axEmpty}[0]{\RuleFont{ax}_\emptyset}
\newcommand{\Latom}[0]{\RuleFont{at}_{L}}
\newcommand{\Ratom}[0]{\RuleFont{at}_{R}}
\newcommand{\rf}[0]{\RuleFont{rf}}
\newcommand{\tr}[0]{\RuleFont{tr}}
\newcommand{\ul}[0]{\RuleFont{ul}}
\newcommand{\ur}[0]{\RuleFont{ur}}
\newcommand{\sg}[0]{\RuleFont{sg}}
\newcommand{\cd}[0]{\RuleFont{cd}}
\newcommand{\il}[0]{\RuleFont{il}}
\newcommand{\ir}[0]{\RuleFont{ir}}
\newcommand{\dis}[0]{\RuleFont{dis}}
\newcommand{\fin}[0]{\RuleFont{fin}}
\newcommand{\FL}[0]{\RuleFont{flat}}
\newcommand{\MP}[0]{\RuleFont{mp}}
\newcommand{\auxi}[0]{\RuleFont{aux}_{1}}
\newcommand{\auxii}[0]{\RuleFont{aux}_{2}}
\newcommand{\WL}[0]{\RuleFont{w}_{L}}
\newcommand{\WR}[0]{\RuleFont{w}_{R}}
\newcommand{\CL}[0]{\RuleFont{c}_{L}}
\newcommand{\CR}[0]{\RuleFont{c}_{R}}
\newcommand{\cut}[0]{\RuleFont{cut}}
\newcommand{\A}[0]{\pi}
\newcommand{\B}[0]{\sigma}
\newcommand{\C}[0]{\tau}
\newcommand{\rulegen}{R}
\newcommand{\subst}[2]{(#1/#2)}
\newcommand{\dg}[1]{dg(#1)}
\newcommand{\rk}[1]{rk(#1)}
\newcommand{\unionlab}[1]{\theta(#1)}
\newcommand{\altunlab}[1]{\xi(#1)}
\newcommand{\FLfin}[1]{\FL[#1]}
\newcommand{\Vvar}[0]{\mathfrak{V}}
\newcommand{\Splits}[1]{S(#1)}
\newcommand{\I}[0]{I}
\newcommand{\GTsystem}[1]{\mathcal{G}^{\star}(#1)}
\newcommand{\provGT}[0]{\vdash}
\newcommand{\notprovGT}[0]{\nvdash}
\newcommand{\Tree}[0]{\mathfrak{T}}
\newcommand{\Gdown}[0]{{\downarrow}\Gamma_{B}}
\newcommand{\Ddown}[0]{{\downarrow}\Delta_{B}}
\newcommand{\Ltop}[0]{\mathsf{L}(B)}
\newcommand{\Rtop}[0]{\mathsf{R}(B)}
\newcommand{\ProofSearch}[0]{\textsc{ProofSearch}(\Logic)}
\newcommand{\IB}[0]{I_{B}}

\newlength{\mylength}
\newlength{\mybeforehead}
\newlength{\myafterhead}
\usepackage{pict2e,xfp}
\makeatletter
\newcommand{\dvee}{\mathbin{\mathpalette\d@vee\relax}}
\newcommand{\d@vee}[2]{%
  \begingroup
  \sbox\z@{$\m@th#1\vee$}%
  \setlength{\unitlength}{\ht\z@}%
  \kern0.1\wd\z@
  \begin{picture}(\fpeval{0.4\wd0/\unitlength},1)
  \roundcap\d@vee@thickness{#1}
  \Line(0,1.03)(\fpeval{0.4\wd0/\unitlength},0)
  \end{picture}%
  \kern-0.3\wd\z@\box\z@
  \endgroup
}
\newcommand{\d@vee@thickness}[1]{
  \linethickness{%
    1\fontdimen8
      \ifx#1\displaystyle\textfont\else
      \ifx#1\textstyle\textfont\else
      \ifx#1\scriptstyle\scriptfont\else
      \scriptscriptfont\fi\fi\fi 3
  }%
}
\makeatother

\title{Labelled Sequent Calculi for Propositional Team Logics}
\author{
Fausto Barbero
\institute{University of Helsinki}
\email{}
\and
Marianna Girlando
\institute{University of Amsterdam,\\
University of Southern Denmark}
\email{}
\and
Valentin M{\"u}ller
\institute{University of Bern}
\email{}
\and
Fan Yang
\institute{Utrecht University}
\email{}
\and
}

\newcommand{\titlerunning}{Labelled Sequent Calculi for Team-Based  Logics}
\newcommand{\authorrunning}{F. Barbero, M. Girlando, V. M{\"u}ller \&{} F. Yang}

\hypersetup{
  bookmarksnumbered,
  pdftitle    = {\titlerunning},
  pdfauthor   = {\authorrunning},
  pdfsubject  = {EPTCS},
  pdfkeywords = {}
}

\begin{document}
\maketitle

\begin{abstract}
Team semantics is a general framework where formulas are not interpreted with respect to a single point of evaluation, but with respect to sets of such points. Team semantics is used in dependence logic, to reason about dependencies between variables, and in inquisitive logic, to formalize the meaning of questions. We provide sound and complete labelled sequent calculi for four logics based on team semantics: basic inquisitive logic, propositional intuitionistic dependence logic, and their respective extensions with tensor disjunction. For technical reasons, we restrict ourselves to languages with finitely many propositional atoms. The rules of weakening, contraction and cut are shown to be admissible in each of our calculi. In the last part of the paper, we present terminating proof search procedures for variants of our proof systems, in which labels have a simplified structure.
\end{abstract}

\section{Introduction}\label{sec:introduction}

In this paper, we introduce labelled sequent calculi for a family of propositional logics based on team semantics. \emph{Team semantics} is a generalization of Tarskian semantics where formulas are evaluated over sets of evaluation points (called \emph{teams}) rather than single evaluation points. It was introduced by Hodges~\cite{hodges:1997} to characterize dependence and independence between variables. This framework was later systematically developed in \emph{dependence logic}~\cite{vaananen:2007}. \emph{Inquisitive semantics}~\cite{ciardelli:roelofsen:2011} also uses the framework to formalize the meaning of interrogative sentences in natural language. Team semantics has found applications in diverse areas, such as  database theory (e.g.,~\cite{HannulaKL16}), formal semantics of natural language~\cite{ciardelli:2016,Aloni2022}, social choice theory~\cite{PacuitYan16} and quantum information theory~\cite{CoranderHKPV16,AlbertGradel2022,AbramskyPuljVaan2026}.

Most prominent logics based on team semantics are \emph{downward closed}, i.e., their satisfaction relation is preserved under taking subteams. In this work, we focus on this class of logics, and in particular we consider basic inquisitive logic ($\InqB$, from~\cite{ciardelli:roelofsen:2011}), propositional intuitionistic dependence logic ($\PID$, from ~\cite{yang:vaananen:2016}) and their respective extensions with tensor disjunction $\vee$ ($\InqBtensor$ and $\PIDtensor$). All these logics include operators for intuitionistic implication $\to$ and global (or inquisitive) disjunction $\dvee$. The logics $\PID$ and $\PIDtensor$ additionally include dependence atoms. A Hilbert-style system for the logic $\InqB$ was presented in \cite{ciardelli:roelofsen:2011}, while natural deduction systems for propositional dependence logic (which does not have an implication) and its variants (including $\PID$, $\PIDtensor$ and $\InqBtensor$) were introduced in~\cite{yang:vaananen:2016,ciardelli:2016}. 

While the semantic properties of downward closed team logics are well understood in the literature, their proof theory has not been fully investigated. In particular, designing analytic proof systems for these logics remains a challenging task. All the calculi in the literature are generalizations of the Gentzen-style sequent formalism. For instance, Frittella \emph{et al.}~\cite{frittella2016multi-type} introduced a multi-type display calculus for $\InqB$ in the style of Belnap~\cite{belnap1982display}, and Anttila, Iemhoff and Yang~\cite{anttila2025deep} developed a sequent calculus for $\InqBtensor$ employing a form of deep inference. Both systems enrich the structure of Gentzen sequents. 

An alternative approach is to enrich the language of the calculus instead, which is the approach we take in this paper for our labelled sequent calculi for the logics $\InqB$, $\InqBtensor$, $\PID$ and $\PIDtensor$. The calculi we present for these logics are \emph{modular}, meaning that by adding rules they can be adapted to capture logics built up from different sets of connectives or atoms. For instance, by extending our calculi for $\InqB$ and $\InqBtensor$ with two natural rules for dependence atoms, we obtain calculi for $\PID$ and $\PIDtensor$. The modularity feature sets our proof systems apart from the structured calculi presented in \cite{frittella2016multi-type, anttila2025deep}. 

Labelled sequent calculi for the logic $\InqB$ (which does not include the tensor disjunction $\tensor$) have already been presented by Chen and Ma \cite{chen:ma:2017} and later by M{\"u}ller \cite[Chapter~3]{muller:msc:2023}\footnote{In \cite{muller:msc:2023}, M{\"u}ller also presents labelled sequent calculi for \emph{intuitionistic inquisitive logic} \cite{ciardelli:iemhoff:yang:2020} and \emph{inquisitive Kripke logics} \cite[Chapter~6]{ciardelli:phd:2016}. The calculi for the latter were subsequently extended to various systems of \emph{inquisitive modal logic} (see \cite{muller:2024,muller:2026}).} as well as by Litak and Sano \cite[Remark~4]{litak:sano:2026}. Our calculi for the four logics under consideration have a similar design as the ones presented in \cite{chen:ma:2017,muller:msc:2023}, i.e., we also use labels to represent arbitrary teams, singleton teams as well as certain set-theoretic operations on them. Our labelled calculus for $\InqB$, however, is simpler than the ones from \cite{chen:ma:2017} and \cite[Chapter~3]{muller:msc:2023}, since it does not include labels representing intersections. In fact, we will use such labels only for the logics involving tensor disjunction (i.e., for $\InqBtensor$ and $\PIDtensor$). 

A key technical challenge in designing calculi for logics with tensor disjunction is its different behaviour on classical and general formulas, which makes it difficult to define labelled rules for $\InqBtensor$ and $\PIDtensor$ using the standard techniques. To address this difficulty, we restrict our attention to languages containing only \emph{finitely many} propositional variables. This assumption (which does not affect validity in $\InqBtensor$ or $\PIDtensor$) restricts the semantics to finitely many valuations and teams, thus allowing us to introduce a new order rule ($\fin$) that provides enough labels to describe arbitrary valuations and teams. Thanks to this rule, we can simulate the behaviour of the tensor over classical formulas. 

\looseness=-1 Using syntactic arguments, we prove that all our calculi enjoy cut-admissibility. This property, together with the admissibility of additional rules capturing key semantic properties, allows us to establish the completeness of our proof systems. We then present a family of \emph{terminating} labelled calculi for all the logics under consideration. These calculi are obtained by simplifying the structure of labels in our systems. Their construction crucially relies on the assumption that the language has finitely many propositional variables. Our terminating calculus for $\InqB$ is similar to the ones introduced by Litak and Sano \cite{litak:sano:2026} for basic inquisitive logic and for bounded versions of inquisitive predicate logic. We extend this approach to cover $\InqBtensor$, $\PID$ and $\PIDtensor$, and provide an explicit proof search algorithm for all the calculi. 

The paper is structured as follows. In Section~\ref{sec:preliminaries}, we recall the semantics and axiom systems of the logics we consider, and in  Section~\ref{sec:calculi}, we introduce our labelled calculi. In Section~\ref{sec:properties}, we discuss the structural properties of the calculi, and in Section~\ref{sec:completeness}, we establish their completeness. In Section~\ref{sec:proof:search}, finally, we present an explicit proof search algorithm for terminating variants of our calculi.

\section{Preliminaries}\label{sec:preliminaries}

We start by recalling some basic notions. For further details, see \cite{ciardelli:roelofsen:2011,ciardelli:phd:2016,yang:vaananen:2016}. Throughout this paper, we will assume a non-empty set $\Prop$ of \emph{propositional variables} (or \emph{atoms}). For reasons that will become apparent later on, we assume that $\Prop$ is \emph{finite}.\footnote{Strictly speaking, this assumption is only needed in our treatment of systems involving tensor disjunction (i.e., for the logics $\InqBtensor$ and $\PIDtensor$ introduced below). For simplicity, however, we also make this assumption for logics without tensor.} Elements of $\Prop$ are denoted by the letters $p$ and $q$ (possibly with subscripts). The \emph{language of downward-closed team logic}, notation $\LangTL$, is given by the grammar
\begin{equation*}
\varphi ::= p \mid \bot \mid \dep{p} \mid \varphi\wedge\varphi \mid \varphi\tensor\varphi \mid \varphi\globdis\varphi \mid \varphi\rightarrow\varphi \quad (p\in\Prop).
\end{equation*}
We also define $\neg\varphi := \varphi\rightarrow\bot$ and $\varphi\leftrightarrow\psi := (\varphi\rightarrow\psi)\wedge(\psi\rightarrow\varphi)$. We will refer to $\tensor$, $\globdis$ and $\rightarrow$ as \emph{tensor disjunction}, \emph{global disjunction} and \emph{intuitionistic implication}, respectively. In the literature on dependence logic, expressions of the form $\dep{p}$ are known as \emph{constancy dependence atoms} and used to express that $p$ has a fixed value \cite[p.~564]{yang:vaananen:2016}. Dependence atoms with multiple arguments can be defined from constancy atoms by putting $\dep{q_1,\ldots,q_n,p} := (\dep{q_1}\wedge\ldots\wedge\dep{q_n})\rightarrow \dep{p}$. Intuitively, $\dep{q_1,\ldots,q_n,p}$ expresses that the value of $p$ \emph{depends} on the values of $q_1,\ldots,q_n$. In the literature on inquisitive logic, the global disjunction is also known as \emph{inquisitive disjunction} and used to form alternative questions. Thus, in this framework, $\varphi\globdis\psi$ is interpreted as the question \emph{whether $\varphi$ or $\psi$} \cite[Sect.~2.2]{ciardelli:phd:2016}.  

Formulas of our language are evaluated over \emph{teams} (a.k.a.\ \emph{information states}). A team over $\Prop$ is a \emph{set} of valuations $v:\Prop\rightarrow\{0,1\}$. The set of all valuations is given by $\Val := \{v\mid v:\Prop\rightarrow\{0,1\}\}$. Note that, since $\Prop$ is finite, $\Val$ is also finite, and each team $T\subseteq\Val$ contains at most $|\Val| = 2^{|\Prop|}$ elements.

\begin{definition}[Satisfaction]\label{def:satisfaction}
Let $T\subseteq\Val$ be a team. The \emph{satisfaction relation} $\sem$ is defined as follows: 
\begin{itemize}
\setlength{\itemsep}{0pt}
\setlength{\parskip}{0pt}
\item $T\sem p$ ~~iff~~  $v(p)=1$ for all $v\in T$,
\item $T\sem\bot$ ~~iff~~  $T=\emptyset$,
\item $T\sem\dep{p}$ ~~iff~~  $u(p)=v(p)$ for all $u,v\in T$,
\item $T\sem\varphi\wedge\psi$ ~~iff~~  $T\sem\varphi$ and $T\sem\psi$,
\item $T\sem\varphi\tensor\psi$ ~~iff~~  there are $R,S\subseteq T$ with $T=R\cup S$ such that $R\sem\varphi$ and $S\sem\psi$,
\item $T\sem\varphi\globdis\psi$ ~~iff~~  $T\sem\varphi$ or $T\sem\psi$,
\item $T\sem\varphi\rightarrow\psi$ ~~iff~~  for all $S\subseteq T$, if $S\sem\varphi$, then $S\sem\psi$.
\end{itemize}
\end{definition}

It is easy to show that all $\LangTL$-formulas satisfy the \emph{empty team property} and \emph{downward closure}.
\begin{description}
\setlength{\itemsep}{0pt}
\setlength{\parskip}{0pt}
\item[Downward closure:] For all teams $T, S\subseteq\Val$, if $T\sem\varphi$ and $S\subseteq T$, then $S\sem\varphi$.
\item[Empty team property:] $\emptyset\models\varphi$.
\end{description}
A formula $\varphi\in\LangTL$ is said to be \emph{classical}, if it does not contain any occurrences of the connectives $\globdis$ and $\dep{\cdot}$. Classical formulas will be denoted by the letters $\alpha$, $\beta$, $\gamma$. We also write $\LangCL$ for the set of all classical formulas. Classical formulas satisfy both \emph{union closure} and \emph{flatness} (or \emph{truth-conditionality}). 
\begin{description}
\setlength{\itemsep}{0pt}
\setlength{\parskip}{0pt}
\item[Union closure:] For all teams $T,S\subseteq\Val$, if $S\sem\alpha$ and $T\sem\alpha$, then $S\cup T\sem\alpha$.
\item[Flatness (or truth-conditionality):] For every team $T \subseteq\Val$, $T\sem\alpha$ iff $\{v\}\sem\alpha$ for all $v\in T$.
\end{description}

\begin{table}[t]
\caption{Four team-based logics.}
\label{tab:team:logics}
\centering\footnotesize 
\begin{tabular}{lll}
\hline 
\textbf{Logic} & \textbf{Connectives} & \textbf{Hilbert System}\\
\hline 
Basic propositional inquisitive logic ($\InqB$) & $\bot,\wedge,\globdis,\rightarrow$ & $\HInqB$\\ 
Propositional intuitionistic dependence logic ($\PID$) & $\bot,\wedge,\globdis,\rightarrow,\dep{\cdot}$ & $\HPID$\\ 
Propositional inquisitive logic with tensor ($\InqBtensor$) & $\bot,\wedge,\globdis,\rightarrow,\tensor$ & $\HInqBtensor$\\ 
Propositional intuitionistic dependence logic with tensor ($\PIDtensor$) & $\bot,\wedge,\globdis,\rightarrow,\dep{\cdot},\tensor$ & $\HPIDtensor$\\ 
\hline
\end{tabular}
\end{table}

The first column of Table~\ref{tab:team:logics} contains four team-based logics: basic inquisitive logic ($\InqB$), propositional intuitionistic dependence logic ($\PID$), and their respective extensions with tensor disjunction ($\InqBtensor$ and $\PIDtensor$). The second column specifies the corresponding fragment of the language $\LangTL$. So, for example, the language of $\InqB$ consists of all formulas built up from the variables in $\Prop$ by means of the connectives $\bot,\wedge,\globdis,\rightarrow$, and the language of $\PID$ consists of all formulas built up from the variables by means of the connectives $\bot,\wedge,\globdis,\rightarrow,\dep{\cdot}$. The language associated with one of our four logics $\Logic$ will also be denoted by $\Langlog{\Logic}$. Observe that, in particular, $\Langlog{\PIDtensor}$ coincides with the full language $\LangTL$.  

The last column of Table~\ref{tab:team:logics} contains the names of four Hilbert-style systems. The axiom schemes of these systems are presented in the upper section of Figure~\ref{fig:hilbert:systems}. Note that, in the schemes $\AxSplit$, $\AxDN$, $\AxElim$, we require $\alpha$ to range over classical formulas only. The lower section of Figure~\ref{fig:hilbert:systems} contains the definitions of our Hilbert calculi. For example, $\HInqBtensor$ consists of the basic axioms, the tensor axioms and the rule of modus ponens. The Hilbert system associated with a logic $\Logic$ will also be denoted by $\Hsystem{\Logic}$. 

Let now $\Logic$ be any of the logics from Table~\ref{tab:team:logics} and let $\Gamma\cup\{\varphi\}\subseteq\Langlog{\Logic}$. We write $\Gamma\provH\varphi$ and say that $\varphi$ is \emph{provable} from $\Gamma$ in $\Hsystem{\Logic}$, if there exists a sequence $\psi_1,\ldots,\psi_n$ of formulas from $\Langlog{\Logic}$ such that $\varphi = \psi_n$ and every $\psi_i$ with $1\leq i\leq n$ is either an axiom of $\Hsystem{\Logic}$ or an element of $\Gamma$, or it can be obtained from two formulas occurring earlier in the sequence by modus ponens. Moreover, we write $\Gamma\semL\varphi$ and say that $\varphi$ is \emph{entailed} by $\Gamma$ in $\Logic$, if for every team $T\subseteq\Val$, $T\sem\psi$ for all $\psi\in\Gamma$ implies $T\sem\varphi$.

\begin{figure}
\centering
\resizebox{\textwidth}{!}{
\begin{tabular}{|@{\hskip .5cm}l@{~}l@{\hskip .5cm}l@{~}l@{\hskip .5cm}l@{~}l@{\hskip .5cm}|}
\hline
\multicolumn{6}{|c|}{\rule{0pt}{1.5em}\textbf{Basic Axioms}} \\[.25em]
$\AxA{1}$ & $\varphi\rightarrow (\psi\rightarrow\varphi)$ & 
$\AxA{2}$ & $(\varphi\rightarrow (\psi\rightarrow\chi))\rightarrow ((\varphi\rightarrow\psi)\rightarrow (\varphi\rightarrow\chi))$ &
$\AxA{3}$ & $(\varphi\wedge\psi)\rightarrow \varphi$, $(\varphi\wedge\psi)\rightarrow\psi$ \\
$\AxA{4}$ & $\varphi\rightarrow (\psi\rightarrow (\varphi\wedge\psi))$ &
$\AxA{5}$ & $(\varphi\rightarrow\chi)\rightarrow ((\psi\rightarrow\chi)\rightarrow ((\varphi\globdis\psi)\rightarrow\chi))$ &
$\AxA{6}$ & $\varphi\rightarrow (\varphi\globdis\psi)$, $\psi\rightarrow (\varphi\globdis\psi)$ \\
$\AxA{7}$ & $\bot\rightarrow\varphi$ & 
$\AxSplit
$ & $(\alpha\rightarrow (\varphi\globdis\psi))\rightarrow ((\alpha\rightarrow\varphi)\globdis (\alpha\rightarrow\psi))$ &
$\AxDN
$ & $\neg\neg\alpha\rightarrow\alpha$ \\[1em]
\multicolumn{6}{|c|}{\textbf{Tensor Axioms}}\\[.25em]
$\AxIntro$ & $\varphi\rightarrow(\varphi\tensor\psi)$, $\psi\rightarrow(\varphi\tensor\psi)$ &
$\AxElim
$ & $(\varphi\rightarrow\alpha)\rightarrow ((\psi\rightarrow\alpha)\rightarrow ((\varphi\tensor\psi)\rightarrow\alpha))$ &
$\AxMon$ & $(\varphi\rightarrow\psi)\rightarrow ((\varphi\tensor\chi)\rightarrow (\psi\tensor\chi))$\\
$\AxCom$ & $(\varphi\tensor\psi)\rightarrow(\psi\tensor\varphi)$  &
$\AxDis$ & $(\varphi\tensor(\psi\globdis\chi))\rightarrow ((\varphi\tensor\psi)\globdis(\varphi\tensor\chi))$ & & \\[1em]
\multicolumn{6}{|@{\hskip .5cm}l@{\hskip .5cm}|}{\textbf{Constancy Axiom}~$\AxCons$:~~~$\dep{p}\leftrightarrow(p\globdis\neg p)$\hspace{2em} (for $p\in\Prop$)}\\[1em]
\multicolumn{6}{|@{\hskip .5cm}c@{\hskip .5cm}|}{\textit{Side condition:} In $\AxSplit$, $\AxDN$ and $\AxElim$, we require $\alpha\in\LangCL$ to be a classical formula.
}
\\[.5em]
\hline
\multicolumn{6}{|c|}{\rule{0pt}{1.5em}\textbf{Hilbert-Style Systems}} \\[.5em]
\multicolumn{6}{|@{\hskip .5cm}l@{\hskip .5cm}|}{
\begin{tabular}{l@{~~:~~}l @{\hskip 5em} l@{~~:~~}l}
$\HInqB$ & basic axioms; &
$\HInqBtensor$ & basic axioms and tensor axioms; \\
$\HPID$ & basic axioms and constancy axiom; &
$\HPIDtensor$ & basic axioms, tensor axioms and constancy axiom. \\[.5em]
\multicolumn{4}{l}{All systems also include \emph{modus ponens}: from $\varphi$ and $\varphi\rightarrow\psi$, infer $\psi$.}
\\[.75em]
\end{tabular}}\\
\hline 
\end{tabular}
}
\caption{Hilbert-style axiomatizations.}\label{fig:hilbert:systems}
\end{figure}

\begin{theorem}[Soundness and Completeness]\label{th:completeness:hilbert}
Let $\Logic$ be any of the logics from Table~\ref{tab:team:logics}. For every set of formulas $\Gamma\cup\{\varphi\}\subseteq\Langlog{\Logic}$, it is the case that $\Gamma\provH\varphi$ holds in $\Hsystem{\Logic}$ if and only if $\Gamma\semL\varphi$ holds in $\Logic$. 
\end{theorem}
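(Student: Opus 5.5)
Soundness goes by induction on the length of Hilbert derivations. The first step is to check that every axiom instance is satisfied by every team and that modus ponens preserves satisfaction under $\Gamma$. The second step is the deduction-theorem style observation that a team $T$ satisfies $\varphi\rightarrow\psi$ iff every subteam satisfying $\varphi$ satisfies $\psi$. By downward closure, validity of an implication reduces to the entailment $\varphi\semL\psi$.

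Most axioms are then routine: $\AxA{1}$--$\AxA{7}$ follow exactly as in intuitionistic logic, since the team semantics of $\rightarrow,\wedge,\globdis,\bot$ is a Kripke semantics over the poset of teams ordered by reverse inclusion, and it is persistent by downward closure. The interesting cases use flatness and union closure of classical formulas.
\begin{itemize}
\item For $\AxSplit$: if $T\sem\alpha\rightarrow(\varphi\globdis\psi)$, let $T_\alpha=\{v\in T\mid\{v\}\sem\alpha\}$. By flatness $T_\alpha$ is the largest subteam of $T$ satisfying $\alpha$. It satisfies $\varphi$ or $\psi$, and by downward closure every subteam satisfying $\alpha$ does too.
\item $\AxDN$ follows by flatness, since $\neg\neg\alpha$ is true at singletons exactly when $\alpha$ is.
\item For $\AxElim$ one uses union closure.
\item For $\AxDis$, $\AxMon$, $\AxCom$, $\AxIntro$ one argues directly; for $\AxIntro$ the empty team property is used.
\item The constancy axiom is immediate from the definitions.
\end{itemize}

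Completeness is the main work. My plan is the standard normal-form (``resolution'') route. First I show, inside $\Hsystem{\Logic}$, that every formula $\varphi$ is provably equivalent to a global disjunction $\alpha_1\globdis\dots\globdis\alpha_n$ of classical formulas. This goes by induction on $\varphi$, in the following order.
\begin{itemize}
\item Atoms and $\bot$ are classical.
\item $\dep{p}$ is equivalent to $p\globdis\neg p$ by $\AxCons$.
\item $\wedge$ distributes over $\globdis$ intuitionistically.
\item $\globdis$ is trivial.
\item For $\rightarrow$, the implication $(\globdis_i\alpha_i)\rightarrow(\globdis_j\beta_j)$ is equivalent to $\bigwedge_i(\alpha_i\rightarrow\globdis_j\beta_j)$. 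By $\AxSplit$ this becomes $\bigwedge_i\globdis_j(\alpha_i\rightarrow\beta_j)$, and we then distribute.
\item For $\tensor$, I use $\AxDis$, $\AxCom$ and $\AxMon$ to push $\tensor$ inside $\globdis$, obtaining $\globdis_{i,j}(\alpha_i\tensor\beta_j)$, and $\alpha_i\tensor\beta_j$ is classical.
\end{itemize}

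Second, I prove completeness for classical formulas over the classical fragment. The basic and tensor axioms restricted to $\LangCL$, together with $\AxDN$, should derive enough classical propositional logic. Here $\tensor$ on classical formulas behaves as Boolean disjunction: I would derive $\alpha\tensor\beta\leftrightarrow\neg(\neg\alpha\wedge\neg\beta)$ from $\AxIntro$, $\AxElim$ and $\AxDN$. This makes classical entailment (by flatness, singleton-team entailment) derivable by a truth-table or Lindenbaum argument for classical logic.

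Third, I combine the two steps. Since $\Prop$ is finite, I may assume $\Gamma$ is finite: $\Gamma$ can be replaced by a single formula because the set of teams is finite, so there are only finitely many formulas up to equivalence. Write $\bigwedge\Gamma\equiv\globdis_i\alpha_i$ and $\varphi\equiv\globdis_j\beta_j$. Then $\Gamma\semL\varphi$ holds iff for each $i$ there is $j$ with $\alpha_i\semL\beta_j$. For the forward direction, apply the entailment to the largest team satisfying $\alpha_i$, which exists by flatness, and use downward closure. Classical completeness then yields $\alpha_i\provH\beta_j$, and $\AxA{5}$, $\AxA{6}$ assemble the result.

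The main obstacle I expect is the second step for $\tensor$: deriving classical reasoning about $\tensor$ purely from the given tensor axioms, in particular its interaction with $\rightarrow$ and $\neg$.
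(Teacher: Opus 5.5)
Your proof is correct, though there is little in the paper to compare it with. The paper does not prove this theorem itself. It cites Ciardelli--Roelofsen and Ciardelli's thesis for $\HInqB$, Yang--V\"a\"an\"anen for $\HPID$, and Ciardelli (2016) for $\HInqBtensor$, and says $\HPIDtensor$ ``follows directly'' from these. What you give instead is a uniform, self-contained version of the normal-form (resolution) method used in those references. It has three parts: a provable reduction to global disjunctions of classical formulas ($\AxSplit$ for $\rightarrow$, $\AxCons$ for $\dep{p}$, and $\AxDis$, $\AxCom$, $\AxMon$ for $\tensor$); completeness of the classical fragment; and the splitting of an entailment between such disjunctions into entailments between disjuncts, via the largest team satisfying a classical formula. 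This makes explicit what each tensor axiom contributes. In particular, the obstacle you flag at the end is already resolved by your own equivalence. The direction $\alpha\tensor\beta\rightarrow\neg(\neg\alpha\wedge\neg\beta)$ follows from $\AxElim$, since the target is classical. The converse follows by contraposing $\AxIntro$ and applying $\AxDN$ to the classical formula $\alpha\tensor\beta$. Then $\AxMon$ and $\AxCom$ make provable equivalence a congruence for $\tensor$, so every classical formula reduces to a $\tensor$-free one. On that fragment, intuitionistic logic plus $\AxDN$ proves all classical tautologies, for instance via Glivenko's theorem. Two minor points remain. Your reduction to finite $\Gamma$ (finitely many formulas up to semantic equivalence) is valid only because of the paper's standing assumption that $\Prop$ is finite. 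And the Kripke-style justification of $\AxA{1}$--$\AxA{7}$ must treat $\emptyset$ as an exploding world where $\bot$ holds. This is harmless, because $\emptyset$ satisfies every formula and is a subteam of every team.
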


\begin{proof}
See \cite{ciardelli:phd:2016,ciardelli:roelofsen:2011} for (essentially) the completeness proof for $\HInqB$, \cite{yang:vaananen:2016} for (essentially) that of $\HPID$, and \cite{ciardelli:2016} for (essentially) that of $\HInqBtensor$. Completeness of $\HPIDtensor$ follows directly from these results.
\end{proof}

\section{Labelled Sequent Calculi}\label{sec:calculi}

We will now introduce labelled sequent calculi for each of the logics presented in Table~\ref{tab:team:logics}. The labels used in our calculi are built up from two infinite sets of \emph{team variables}, denoted by $\Svar$ and $\Avar$. The variables from $\Svar$ range over \emph{singleton teams} and the variables from $\Avar$ range over teams of \emph{arbitrary size}. Elements of $\Svar$ are denoted by the letters $u$, $v$, $w$, and elements of $\Avar$ are denoted by the letters $x$, $y$, $z$. 

\begin{definition}[Labels]\label{def:labels}
The sets of labels $\LAB$ and $\LABint$ are generated by the following two grammars:     
\begin{equation*}
\begin{array}{ll}
(\LAB) & \pi ::= v \mid x \mid \emptyset \mid \pi\cup\pi \\
(\LABint) & \pi ::= v \mid x \mid \emptyset \mid \pi\cup\pi \mid \pi\cap\pi  
\end{array}%
\qquad\text{($v\in\Svar$ and $x\in\Avar$).} 
\end{equation*}
\end{definition}

The set of labels $\LAB$ will be used in our calculi for $\InqB$ and $\PID$, and the set of labels $\LABint$ will be used in our calculi for $\InqBtensor$ and $\PIDtensor$. Arbitrarily complex labels from any of the sets $\LAB$ and $\LABint$ will be denoted by the meta-variables $\pi$, $\sigma$, $\tau$. Intuitively, $\pi\cup\sigma$ represents the \emph{union} and $\pi\cap\sigma$ represents the \emph{intersection} of the teams denoted by $\pi$ and $\sigma$. The constant $\emptyset$ stands for the \emph{empty team}. 

\begin{definition}
An \emph{interpretation} is a function $I:\Svar\cup\Avar\rightarrow\Pow{\Val}$ that assigns, to each $v\in\Svar$, a singleton team $I(v)\subseteq\Val$, and to each $x\in\Avar$, an arbitrary team $I(x)\subseteq\Val$. An interpretation $I$ is extended to arbitrary labels by defining $I(\emptyset) := \emptyset$, $I(\pi\cup\sigma) := I(\pi)\cup I(\sigma)$ and $I(\pi\cap\sigma) := I(\pi)\cap I(\sigma)$.     
\end{definition}

A \emph{labelled formula} is an expression of the form $\lab{\pi}{\varphi}$, where $\pi$ is a label and $\varphi$ is a formula. Labelled formulas internalise the satisfaction relation into our calculi, so $\lab{\pi}{\varphi}$ may be read as ``The team $\pi$ satisfies the formula $\varphi$''. A \emph{relational atom} is an expression of the form $\relat{\pi}{\sigma}$, where $\pi$ and $\sigma$ are labels, meaning that $\pi$ is a subset of $\sigma$. A \emph{sequent} is an expression of the form $\Gamma\seq\Delta$, where $\Gamma$ is a finite multiset of labelled formulas and relational atoms and $\Delta$ is a finite multiset containing only labelled formulas.

We say that a labelled formula $\lab{\pi}{\varphi}$ is \emph{satisfied} by an interpretation $I$ if it holds that $I(\pi)\sem\varphi$. And a relational atom $\relat{\pi}{\sigma}$ is satisfied by $I$ if $I(\pi)\subseteq I(\sigma)$. A sequent $\Gamma\seq\Delta$ is \emph{valid} if for every interpretation $I$, it is the case that, if $I$ satisfies all elements of $\Gamma$, then $I$ satisfies at least one element of $\Delta$.

\begin{figure}[!t] 
\centering\myfontsize 
\begin{tabularx}{\textwidth}{|@{\hskip .5cm}Y@{\hskip .5cm}|}
\hline
\rule{0pt}{.5cm} 
\textbf{Initial Sequents (Axioms)}
\\[\myafterhead] 
\AxiomC{}
\LeftLabel{$\ax$}
\UnaryInfC{$\lab{v}{p}, \extseq, \lab{v}{p}$}
\DisplayProof
\hspace*{3em}
\AxiomC{}
\LeftLabel{$\axBot$}
\UnaryInfC{$\lab{v}{\bot}, \extseq$}
\DisplayProof
\hspace*{3em}
\AxiomC{}
\LeftLabel{$\axEmpty$}
\UnaryInfC{$\relat{v}{\emptyset}, \extseq$}
\DisplayProof
\\[\mybeforehead] 

\textbf{Basic Logical Rules} 
\\[\myafterhead] 
\AxiomC{$\lab{v}{p},\relat{v}{\A}, \lab{\A}{p}, \extseq$}
\LeftLabel{$\Latom$}
\UnaryInfC{$\relat{v}{\A}, \lab{\A}{p}, \extseq$}
\DisplayProof
\hspace*{1em}
\AxiomC{$\relat{v}{\A}, \extseq, \lab{v}{p}$}
\LeftLabel{$\Ratom$}
\RightLabel{$(\ddagger)$}
\UnaryInfC{$\extseq, \lab{\A}{p}$}
\DisplayProof
\hspace*{1em}
\AxiomC{$\lab{v}{\bot},\relat{v}{\A}, \lab{\A}{\bot}, \extseq$}
\LeftLabel{$\Lrule{\bot}$}
\UnaryInfC{$\relat{v}{\A}, \lab{\A}{\bot}, \extseq$}
\DisplayProof
\hspace*{1em}
\AxiomC{$\relat{v}{\A}, \extseq, \lab{v}{\bot}$}
\LeftLabel{$\Rrule{\bot}$}
\RightLabel{$(\ddagger)$}
\UnaryInfC{$\extseq, \lab{\A}{\bot}$}
\DisplayProof
\\[\mylength]
\AxiomC{$\lab{\A}{\varphi},\lab{\A}{\psi}, \extseq$}
\LeftLabel{$\Lrule{\wedge}$}
\UnaryInfC{$\lab{\A}{\varphi\wedge\psi}, \extseq$}
\DisplayProof
\hspace*{1em}
\def\defaultHypSeparation{\hskip .1in}
\AxiomC{$\extseq, \lab{\A}{\varphi}$}
\AxiomC{$\extseq, \lab{\A}{\psi}$}
\LeftLabel{$\Rrule{\wedge}$}
\BinaryInfC{$\extseq, \lab{\A}{\varphi\wedge\psi}$}
\DisplayProof
\def\defaultHypSeparation{\hskip .2in}
\hspace*{1em}
\def\defaultHypSeparation{\hskip .1in}
\AxiomC{$\lab{\A}{\varphi}, \extseq$}
\AxiomC{$\lab{\A}{\psi}, \extseq$}
\LeftLabel{$\Lrule{\globdis}$}
\BinaryInfC{$\lab{\A}{\varphi\globdis\psi}, \extseq$}
\DisplayProof
\def\defaultHypSeparation{\hskip .2in}
\hspace*{1em}
\AxiomC{$\extseq, \lab{\A}{\varphi}, \lab{\A}{\psi}$}
\LeftLabel{$\Rrule{\globdis}$}
\UnaryInfC{$\extseq, \lab{\A}{\varphi\globdis\psi}$}
\DisplayProof
\\[\mylength]
\AxiomC{$\relat{\A}{\B}, \lab{\B}{\varphi\rightarrow\psi}, \extseq, \lab{\A}{\varphi}$}
\AxiomC{$\relat{\A}{\B}, \lab{\B}{\varphi\rightarrow\psi}, \lab{\A}{\psi}, \extseq$}
\LeftLabel{$\Lrule{\rightarrow}$}
\BinaryInfC{$\relat{\A}{\B}, \lab{\B}{\varphi\rightarrow\psi}, \extseq$}
\DisplayProof
\hspace*{2em}
\AxiomC{$\relat{x}{\A}, \lab{x}{\varphi}, \extseq, \lab{x}{\psi}$}
\LeftLabel{$\Rrule{\rightarrow}$}
\RightLabel{$(\S)$}
\UnaryInfC{$\extseq, \lab{\A}{\varphi\rightarrow\psi}$}
\DisplayProof
\\[\mybeforehead] 

\textbf{Tensor Rules} 
\\[\myafterhead] 
\AxiomC{$\relat{\A}{x\cup y},\lab{x}{\varphi},\lab{y}{\psi},\extseq$}
\LeftLabel{$\Lrule{\tensor}$}
\RightLabel{$(\sharp)$}
\UnaryInfC{$\lab{\A}{\varphi\tensor\psi},\extseq$}
\DisplayProof
\hspace*{2em}
\AxiomC{$\relat{\A}{\B\cup\C},\extseq,\lab{\A}{\varphi\tensor\psi},\lab{\B}{\varphi}$}
\AxiomC{$\relat{\A}{\B\cup\C},\extseq,\lab{\A}{\varphi\tensor\psi},\lab{\C}{\psi}$}
\LeftLabel{$\Rrule{\tensor}$}
\BinaryInfC{$\relat{\A}{\B\cup\C},\extseq,\lab{\A}{\varphi\tensor\psi}$}
\DisplayProof
\\[\mybeforehead] 

\textbf{Constancy Rules} 
\\[\myafterhead] 
\def\defaultHypSeparation{\hskip .1in}
\AxiomC{$\relat{u}{\pi}, \relat{v}{\pi}, \lab{u}{p}, \lab{v}{p}, \lab{\pi}{\dep{p}}, \extseq$}
\AxiomC{$\relat{u}{\pi}, \relat{v}{\pi}, \lab{\pi}{\dep{p}}, \extseq, \lab{u}{p}, \lab{v}{p}$}
\LeftLabel{$\Lrule{=}$}
\BinaryInfC{$\relat{u}{\pi}, \relat{v}{\pi}, \lab{\pi}{\dep{p}}, \extseq$}
\DisplayProof
\def\defaultHypSeparation{\hskip .2in}
\hspace*{0em}
\AxiomC{$\relat{u}{\pi}, \relat{v}{\pi}, \lab{u}{p}, \extseq, \lab{v}{p}$}
\LeftLabel{$\Rrule{=}$}
\RightLabel{$(\ast)$}
\UnaryInfC{$\extseq, \lab{\pi}{\dep{p}}$}
\DisplayProof
\\[\mybeforehead] 

\textbf{Basic Order Rules} 
\\[\myafterhead] 
\AxiomC{$\relat{\A}{\C}, \relat{\A}{\B}, \relat{\B}{\C}, \extseq$}
\LeftLabel{$\tr$}
\UnaryInfC{$\relat{\A}{\B}, \relat{\B}{\C}, \extseq$}
\DisplayProof
\hspace*{6pt}
\AxiomC{$\relat{\A}{\emptyset}, \relat{\A}{v}, \extseq$}
\AxiomC{$\relat{v}{\A}, \relat{\A}{v}, \extseq$}
\LeftLabel{$\sg$}
\BinaryInfC{$\relat{\A}{v}, \extseq$}
\DisplayProof
\hspace*{6pt}
\AxiomC{$\relat{\A_i}{\A_1\cup\A_2}, \extseq$}
\LeftLabel{$\ur$}
\RightLabel{$(i=1,2)$}
\UnaryInfC{$\extseq$}
\DisplayProof 
\\[\mylength] 
\AxiomC{$\relat{v}{\A}, \relat{v}{\A\cup \B}, \extseq$}
\AxiomC{$\relat{v}{\B}, \relat{v}{\A\cup \B}, \extseq$}
\LeftLabel{$\cd$}
\def\defaultHypSeparation{\hskip 6pt}
\BinaryInfC{$\relat{v}{\A\cup \B}, \extseq$}
\def\defaultHypSeparation{\hskip .2in}
\DisplayProof
\hspace*{6pt}
\AxiomC{$\relat{\A\cup\B}{\C}, \relat{\A}{\C}, \relat{\B}{\C}, \extseq$}
\LeftLabel{$\ul$}
\UnaryInfC{$\relat{\A}{\C}, \relat{\B}{\C}, \extseq$}
\DisplayProof%
\hspace*{6pt}
\AxiomC{$\relat{\A}{\A}, \extseq$}
\LeftLabel{$\rf$}
\UnaryInfC{$\extseq$}
\DisplayProof%
\\[\mybeforehead] 

\textbf{Special Order Rules} 
\\[\myafterhead] 
\AxiomC{$\relat{\A}{(\A\cap\B)\cup(\A\cap\C)}, \relat{\A}{\B\cup\C}, \extseq$}
\LeftLabel{$\dis$}
\UnaryInfC{$\relat{\A}{\B\cup\C}, \extseq$}
\DisplayProof
\hspace*{6pt}
\AxiomC{$\relat{\A_1\cap\A_2}{\A_i}, \extseq$}
\LeftLabel{$\il$}
\RightLabel{$(i=1,2)$}
\UnaryInfC{$\extseq$}
\DisplayProof
\hspace*{6pt}
\AxiomC{$\relat{\A}{\B\cap\C}, \relat{\A}{\B}, \relat{\A}{\C}, \extseq$}
\LeftLabel{$\ir$}
\UnaryInfC{$\relat{\A}{\B}, \relat{\A}{\C}, \extseq$}
\DisplayProof
\\[\mylength] 
\AxiomC{$\equat{\A}{\emptyset}, \extseq$}
\AxiomC{$\equat{\A}{v_{1}\cup\ldots\cup v_{n}}, \extseq$}
\LeftLabel{$\fin$}
\RightLabel{$(\Join)$, for $n=2^{|\Prop|}$}
\BinaryInfC{$\extseq$}
\DisplayProof
\\[\mylength] 

\begin{minipage}[b]{\linewidth}
\scriptsize
\emph{Side conditions:} The symbol $(\ddagger)$ indicates that $v\in\Svar$ has to be fresh and $\A$ has to satisfy $\A\notin\Svar$. The symbol $(\S)$ indicates that $x\in \Avar$ must be fresh. In the rule marked with $(\sharp)$, we require $x,y\in\Avar$ to be fresh and distinct. The symbol $(\ast)$ indicates that $u,v\in\Svar$ have to be fresh and distinct. In the rule marked with $(\Join)$, we require $v_{1},\ldots, v_{n}$ to be a collection of $n = 2^{|\Prop|}$ pairwise distinct fresh variables from $\Svar$. Moreover, the notation ``$\equat{\B}{\C}$'' is used as an abbreviation for the pair of relational atoms ``$\relat{\B}{\C},\relat{\C}{\B}$''.  
\strut\end{minipage} \\[.5\mylength]
\hline

\rule{0pt}{.5cm} 
\textbf{Labelled Sequent Calculi} 
\\[\myafterhead] 
\begin{tabular}{l@{~~:~~}l}
\textbullet~~$\GInqB$ & initial sequents, basic logical rules and basic order rules;\\
\textbullet~~$\GPID$ & initial sequents, basic logical rules, constancy rules and basic order rules;\\
\textbullet~~$\GInqBtensor$ & initial sequents and all of the above rules except for the constancy rules;\\
\textbullet~~$\GPIDtensor$ & initial sequents and all of the above rules.\\[.5\mylength] 
\end{tabular}
\\
\hline 
\end{tabularx}
\caption{Labelled sequent calculi.}\label{fig:labelled calculi}
\end{figure}

The axioms and rules of our labelled sequent calculi are presented in the upper section of Figure~\ref{fig:labelled calculi}. Note that, in the rule $\fin$, we use the notation ``$\equat{\B}{\C}$'' as an abbreviation for the pair of atoms ``$\relat{\B}{\C},\relat{\C}{\B}$''. Furthermore, certain rules come with additional side conditions. For example, $v\in\Svar$ has to be a fresh variable in all rules marked with $(\ddagger)$, and $x\in\Avar$ has to be fresh in the rule marked with $(\S)$. Here, a variable is said to be \emph{fresh}, if it does not occur in the conclusion of the respective rule application. The fresh variables introduced by the rules $\Ratom$, $\Rrule{\bot}$, $\Rrule{\rightarrow}$, $\Lrule{\tensor}$, $\Rrule{=}$ and $\fin$ are also referred to as the \emph{eigenvariables} of these rules. 

The rules of our proof systems are inspired by the $\mathsf{G3}$-style approach to labelled calculi as detailed, e.g., in~\cite{negri:2005} for the case of modal logic. However, the richer semantics of team-based logics requires non-trivial adjustments and the addition of several more order rules than in the modal case. The basic logical rules, the tensor rules and the constancy rules simply mirror the satisfaction conditions from Definition~\ref{def:satisfaction}. The basic and special order rules, on the other hand, formalize various set-theoretic properties of teams. The rule $\fin$ plays a somewhat special role here, as it is directly linked to the assumption that $\Prop$ is finite. Intuitively, this rule expresses the fact that every team $\pi$ is either empty or composed of $n = 2^{|\Prop|}$ not necessarily distinct valuations. Note that, in applications of $\fin$, we do not require the union operators in the label $v_{1}\cup\ldots\cup v_{n}$ to be grouped in any specific way. So, for example, in the case $|\Prop| = 2$, the sequents $\equat{\A}{((v_{1}\cup v_{2})\cup v_{3})\cup v_{4}}, \extseq$ and $\equat{\A}{(v_{1}\cup v_{2})\cup (v_{3}\cup v_{4})}, \extseq$ would both be correct instances of the right premise of $\fin$.    

The lower section of Figure~\ref{fig:labelled calculi} contains the definitions of our calculi. For example, the calculus $\GPID$ consists of the initial sequents, the basic logical rules, the constancy rules and the basic order rules. Given any of the logics $\Logic$ from Table~\ref{tab:team:logics}, we will write $\Gsystem{\Logic}$ for the corresponding calculus from Figure~\ref{fig:labelled calculi}.  

\begin{definition}\label{def:provability}
Let $\Logic$ be one of the logics from Table~\ref{tab:team:logics} and let $\Gamma\cup\{\varphi\}\subseteq\Langlog{\Logic}$ be a set of formulas. We will write $\Gamma\provG\varphi$ and say that $\varphi$ is \emph{provable} from $\Gamma$ in the calculus $\Gsystem{\Logic}$, if there exists a finite subset $\Delta\subseteq\Gamma$ and a variable $x\in\Avar$ such that $\lab{x}{\Delta}\seq\lab{x}{\varphi}$ is derivable in $\Gsystem{\Logic}$, where $(\lab{x}{\Delta}) := \{\lab{x}{\psi} \mid \psi\in\Delta\}$.  
\end{definition}

In each of the axioms and rules from Figure~\ref{fig:labelled calculi}, we call $\Gamma$ and $\Delta$ the \emph{left context} and the \emph{right context}, respectively. An occurrence of an expression in an axiom or in the conclusion of a rule is called \emph{principal}, if it does not belong to the context. To ensure that contraction is admissible, we adopt the \emph{closure condition} from \cite{negri:plato:1998,negri:2003,negri:2005,dyckhoff:negri:2012}: if an instance of a rule of one of our calculi contains two principal occurrences of a relational atom $A$ in the conclusion, then also the contracted instance of the rule (in which the two occurrences of $A$ are replaced by a single one) is taken to be part of the corresponding calculus. For example, in the special case where $\pi=\sigma$, we permit not only the instance of the rule $\ul$ shown on the left-hand side below, but also the contracted instance displayed on the right-hand side:  
\begin{center}
{\myfontsize
\AxiomC{$\relat{\pi\cup\pi}{\tau}, \relat{\pi}{\tau}, \relat{\pi}{\tau}, \extseq$}
\LeftLabel{$\ul$}
\UnaryInfC{$\relat{\pi}{\tau}, \relat{\pi}{\tau}, \extseq$}
\DisplayProof}%
\hspace{3cm}
{\myfontsize
\AxiomC{$\relat{\pi\cup\pi}{\tau}, \relat{\pi}{\tau}, \extseq$}
\LeftLabel{$\ul$}
\UnaryInfC{$\relat{\pi}{\tau}, \extseq$}
\DisplayProof}%
\end{center}
The following lemma contains several sequents that are derivable in our calculi. Observe that sequent (\ref{gen:ax:i}) corresponds to \emph{downward closure} and sequents (\ref{gen:ax:ii}) and (\ref{gen:ax:iii}) correspond to the \emph{empty team property}.

\begin{lemma}[Generalized Initial Sequents]\label{lem:gen:ax}
Let $\Logic$ be any of the logics from Table~\ref{tab:team:logics} and let $\Gsystem{\Logic}$ be the corresponding labelled sequent calculus. All sequents of the following forms are derivable in $\Gsystem{\Logic}$:
\setlength\multicolsep{\topsep} 
\begin{multicols}{2}
\begin{enumerate}
\setlength{\itemsep}{0pt}
\setlength{\parskip}{0pt}
\item\label{gen:ax:i} $\relat{\pi}{\sigma}, \lab{\sigma}{\varphi}, \extseq, \lab{\pi}{\varphi}$;
\item\label{gen:ax:ii} $\relat{\pi}{\emptyset},\extseq,\lab{\pi}{\varphi}$;
\item\label{gen:ax:iii} $\relat{\pi}{\sigma}, \lab{\sigma}{\bot}, \extseq, \lab{\pi}{\varphi}$;
\item\label{gen:ax:iv} $\lab{\pi}{\varphi}, \extseq, \lab{\pi}{\varphi}$.
\end{enumerate}
\end{multicols}
\end{lemma}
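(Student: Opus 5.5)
Each of the four items is proved by induction on the complexity of $\varphi$, with labels $\pi,\sigma$ arbitrary and contexts $\Gamma,\Delta$ arbitrary. Item (\ref{gen:ax:iv}) follows from (\ref{gen:ax:i}) once we have $\relat{\pi}{\pi}$: apply $\rf$ bottom-up to add $\relat{\pi}{\pi}$ and then use (\ref{gen:ax:i}) with $\sigma=\pi$. Item (\ref{gen:ax:iii}) reduces to (\ref{gen:ax:ii}) once we show that $\relat{\pi}{\sigma},\lab{\sigma}{\bot}$ lets us infer $\relat{\pi}{\emptyset}$. The cleanest route is to prove (\ref{gen:ax:iii}) by its own induction on $\varphi$, parallel to (\ref{gen:ax:ii}).

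The key observation is that every right rule decomposes $\lab{\pi}{\varphi}$ into formulas with labels built from $\pi$ and fresh variables, together with relational atoms on those labels. For (\ref{gen:ax:ii}) the cases run as follows.
\begin{itemize}
\item Atoms $p$ and $\bot$ (the case $\pi\in\Svar$, i.e.\ $\pi=v$). We cannot apply $\Ratom$, but $\relat{v}{\emptyset}$ already yields the conclusion via $\axEmpty$.
\item Atoms $p$ and $\bot$ with $\pi\notin\Svar$. Apply $\Ratom$ (resp.\ $\Rrule{\bot}$) to get $\relat{v}{\pi},\relat{\pi}{\emptyset},\ldots,\lab{v}{p}$. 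Then $\tr$ gives $\relat{v}{\emptyset}$, and $\axEmpty$ closes.
\item $\wedge$ and $\globdis$. These follow immediately from the induction hypothesis.
\item $\rightarrow$. $\Rrule{\rightarrow}$ gives $\relat{x}{\pi}$, and $\tr$ gives $\relat{x}{\emptyset}$. Applying the induction hypothesis to $\psi$ at $x$ closes the branch.
\item $\dep{p}$. $\Rrule{=}$ gives $\relat{u}{\pi}$, then $\tr$ and $\axEmpty$ close.
\item $\varphi\tensor\psi$. Use $\ur$ to add $\relat{\pi}{\pi\cup\pi}$ and then apply $\Rrule{\tensor}$ with $\B=\C=\pi$. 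Both premises are instances of the induction hypothesis.
\end{itemize}
Item (\ref{gen:ax:iii}) is handled the same way, except that the atomic cases need one additional step. After obtaining $\relat{v}{\pi}$ and $\relat{\pi}{\sigma}$, use $\tr$ to get $\relat{v}{\sigma}$, apply $\Lrule{\bot}$ to get $\lab{v}{\bot}$, and close with $\axBot$.

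For (\ref{gen:ax:i}), proceed by induction on $\varphi$, again working from the right rule on $\lab{\pi}{\varphi}$.
\begin{itemize}
\item $p$ with $\pi\notin\Svar$. $\Ratom$ gives $\relat{v}{\pi}$, $\tr$ gives $\relat{v}{\sigma}$, and $\Latom$ gives $\lab{v}{p}$, so $\ax$ closes. The $\bot$ case is the same, ending with $\axBot$.
\item $p$ or $\bot$ with $\pi=v$. Apply $\Latom$/$\Lrule{\bot}$ directly using $\relat{v}{\sigma}$.
\item $\wedge$. Apply $\Rrule{\wedge}$, then $\Lrule{\wedge}$, then the induction hypothesis.
\item $\globdis$. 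Apply $\Rrule{\globdis}$ and $\Lrule{\globdis}$, and use the induction hypothesis in each branch.
\item $\rightarrow$. $\Rrule{\rightarrow}$ introduces $\relat{x}{\pi}$ and $\lab{x}{\varphi}$. Then $\tr$ gives $\relat{x}{\sigma}$, and $\Lrule{\rightarrow}$ on $\lab{\sigma}{\varphi\rightarrow\psi}$ yields two branches. The first needs $\lab{x}{\varphi}\seq\lab{x}{\varphi}$, which is the induction hypothesis for (\ref{gen:ax:iv}) on the smaller formula, via $\rf$. The second needs $\lab{x}{\psi}\seq\lab{x}{\psi}$, obtained the same way.
\item $\dep{p}$. $\Rrule{=}$ gives $u,v$ with $\relat{u}{\pi},\relat{v}{\pi}$, and $\tr$ moves both into $\sigma$. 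Apply $\Lrule{=}$; each branch then closes by $\ax$.
\item $\varphi\tensor\psi$. Apply $\Lrule{\tensor}$ to $\lab{\sigma}{\varphi\tensor\psi}$, giving fresh $x,y$ with $\relat{\sigma}{x\cup y}$, $\lab{x}{\varphi}$, $\lab{y}{\psi}$. By $\tr$ we get $\relat{\pi}{x\cup y}$. Apply $\Rrule{\tensor}$ with $\B=x$, $\C=y$; the two branches reduce to instances of (\ref{gen:ax:iv}) for $\varphi$ and $\psi$.
\end{itemize}
So the induction proves (\ref{gen:ax:i}) and (\ref{gen:ax:iv}) simultaneously.

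The main obstacle I expect is the side condition $\pi\notin\Svar$ on $\Ratom$ and $\Rrule{\bot}$, together with the fact that labels may be compound. Formulas are always treated via their right rule, and compound labels cause no problem for $\tr$. The only delicate case is $\pi=v$, where the left atom rule must be applied directly as described above.
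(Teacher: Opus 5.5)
Your proposal is correct and takes essentially the same route as the paper, which proves the lemma by structural induction on $\varphi$ and defers most cases to M\"uller's thesis; your handling of the tensor and constancy cases, and your derivation of (iv) from (i) via $\rf$ inside a simultaneous induction, spell out what the paper leaves implicit. The only step you leave tacit is the atomic case $\pi = v\in\Svar$ of item (iii), which closes at once by applying $\Lrule{\bot}$ to $\relat{v}{\sigma},\lab{\sigma}{\bot}$ and then $\axBot$.
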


\begin{proof}
By induction on $\varphi$. Most cases are treated in the same way as in \cite[Lemmas~3.2.1--3.2.2]{muller:msc:2023}. 
\end{proof}

\section{Structural Properties}\label{sec:properties}

We now examine the structural properties of our labelled calculi. Specifically, we will prove the admissibility of weakening and contraction, the invertibility of all rules, and the admissibility of the cut rule. In Section~\ref{sec:completeness}, we will use these results to establish the completeness of our proof systems. In what follows, let $\Logic$ be any of the logics from Table~\ref{tab:team:logics} and let $\Gsystem{\Logic}$ be the associated sequent calculus from Figure~\ref{fig:labelled calculi}.

We first recall some basic definitions. By the \emph{length of a branch} in a derivation, we mean the number of sequents occurring in the branch. The \emph{height of a derivation} $\mathcal{D}$ is the length of one of its longest branches, minus one. A rule $\rulegen$ is \emph{height-preserving admissible} (or \emph{hp-admissible}) in the calculus $\Gsystem{\Logic}$, if whenever each premise of $\rulegen$ is derivable by a derivation of height at most $n$, then the conclusion of $\rulegen$ is derivable by a derivation of height at most $n$. If derivability of the premises implies derivability of the conclusion, but the height is not preserved, $\rulegen$ is just said to be \emph{admissible}. A rule $\rulegen$ is \emph{height-preserving invertible} (or \emph{hp-invertible}) in $\Gsystem{\Logic}$, if whenever the conclusion of $\rulegen$ is derivable by a derivation of height at most $n$, then each premise of $\rulegen$ is derivable, with the same bound on height \cite[Def.~3.4.4]{troelstra:schwichtenberg:1996}. 

Given any variable $s\in\Svar\cup\Avar$ and labels $\pi,\sigma\in\LABint$, we write $\sigma\subst{\pi}{s}$ for the result of substituting $\pi$ for all occurrences of $s$ in $\sigma$. The notation $\Gamma\subst{\pi}{s}$ stands for the result of substituting $\pi$ for $s$ in all labels that occur in a multiset $\Gamma$. By the \emph{substitution rules}, we mean the first two rules displayed in Figure~\ref{fig:structural:rules}.

\begin{figure}[t]
\centering\scriptsize 
\begin{tabularx}{\textwidth}{|@{\hskip .5cm}Y@{\hskip .5cm}|}
\hline 
\rule{0pt}{.6cm} 
\AxiomC{$\Gamma\seq\Delta$}
\LeftLabel{$\subst{u}{v}$}
\RightLabel{$(\ddagger)$}
\UnaryInfC{$\Gamma\subst{u}{v}\seq\Delta\subst{u}{v}$}
\DisplayProof
\hspace{3em}
\AxiomC{$\Gamma\seq\Delta$}
\LeftLabel{$\subst{\pi}{x}$}
\RightLabel{$(\ddagger)$}
\UnaryInfC{$\Gamma\subst{\pi}{x}\seq\Delta\subst{\pi}{x}$}
\DisplayProof
\hspace{3em}
\AxiomC{$\extseq$}
\LeftLabel{$\WL$}
\RightLabel{$(\S)$}
\UnaryInfC{$E,\extseq$} 
\DisplayProof
\hspace{3em}
\AxiomC{$\extseq$}
\LeftLabel{$\WR$}
\UnaryInfC{$\extseq,\lab{\A}{\varphi}$} 
\DisplayProof
\\[\mylength]
\AxiomC{$E,E,\extseq$}
\LeftLabel{$\CL$}
\RightLabel{$(\S)$}
\UnaryInfC{$E,\extseq$}
\DisplayProof
\hspace{5em}
\AxiomC{$\extseq, \lab{\A}{\varphi}, \lab{\A}{\varphi}$}
\LeftLabel{$\CR$}
\UnaryInfC{$\extseq, \lab{\A}{\varphi}$}
\DisplayProof
\hspace{5em}
\AxiomC{$\Gamma\seq\Delta, \lab{\A}{\varphi}$}
\AxiomC{$\lab{\A}{\varphi}, \Pi\seq\Sigma$}
\LeftLabel{$\cut$}
\BinaryInfC{$\Gamma,\Pi\seq \Delta,\Sigma$}
\DisplayProof
\\[\mylength]

\begin{minipage}[b]{\linewidth}
\scriptsize
\emph{Side conditions:} In the rules marked with $(\ddagger)$, we require that $u$ and $v$ are variables from $\Svar$ and that $x$ is a variable from $\Avar$, while $\A$ can be an arbitrarily complex label. The symbol $(\S)$ indicates that $E$ may be either a relational atom or a labelled formula.
\strut\end{minipage}
\\[.5\mylength] 
\hline 
\end{tabularx}
\caption{The substitution rules and the structural rules of weakening, contraction and cut.}\label{fig:structural:rules}
\end{figure}

\begin{lemma}
The substitution rules are hp-admissible in $\Gsystem{\Logic}$.
\end{lemma}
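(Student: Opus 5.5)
The plan is to prove both substitution rules simultaneously by induction on the height $n$ of the derivation of $\Gamma\seq\Delta$, establishing that $\Gamma\subst{u}{v}\seq\Delta\subst{u}{v}$ and $\Gamma\subst{\pi}{x}\seq\Delta\subst{\pi}{x}$ are derivable with height at most $n$. Two points about the restrictions are essential here. A singleton variable $v$ may only be replaced by another singleton variable $u$, since rules such as $\Latom$, $\sg$, $\cd$ and $\ax$ require a variable from $\Svar$ in certain positions. An arbitrary-team variable $x$, by contrast, may be replaced by any label $\pi$, because no rule requires its principal labels to be members of $\Avar$; every occurrence of $x$ inside a schematic label $\A,\B,\C$ of a rule is again an instance of that schematic label after substitution. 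One side condition needs checking: $\Ratom$ and $\Rrule{\bot}$ require $\A\notin\Svar$. This is preserved by $\subst{u}{v}$, which maps non-$\Svar$ labels to non-$\Svar$ labels. It is not preserved by $\subst{\pi}{x}$ when $\A=x$ and $\pi\in\Svar$, so that case has to be treated separately.

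\textbf{Base case} ($n=0$). Initial sequents are closed under substitution. For $\ax$ and $\axBot$, the principal singleton $v$ becomes $u$ (or stays, under $\subst{\pi}{x}$). For $\axEmpty$, $\relat{v}{\emptyset}$ likewise stays of the same form.

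\textbf{Inductive step.} Consider the last rule $R$.

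\emph{Rules without eigenvariables.} These are $\Latom$, $\Lrule{\bot}$, the $\wedge$ and $\globdis$ rules, $\Lrule{\rightarrow}$, $\Rrule{\tensor}$, $\Lrule{=}$, and all order rules except $\fin$. Here we apply the induction hypothesis to the premises and reapply $R$, since the substituted premises and conclusion form an instance of the same rule. One subtlety is that substitution may identify two principal relational atoms, for instance $\relat{\A}{\B}$ and $\relat{\B}{\C}$ in $\tr$ when labels collapse. In that case the closure condition supplies the contracted instance, so the same rule still applies.

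\emph{Rules with eigenvariables.} These are $\Ratom$, $\Rrule{\bot}$, $\Rrule{\rightarrow}$, $\Lrule{\tensor}$, $\Rrule{=}$ and $\fin$. If the eigenvariable coincides with the substituted variable, or occurs in $\pi$ or equals $u$, we first apply the induction hypothesis to the premise with a renaming of the eigenvariable to a completely fresh variable of the same sort. This renaming is height-preserving, since the premise has height $n-1$. We then apply the intended substitution by the induction hypothesis again, which is legitimate because the induction is on height and both steps act on derivations of height $n-1$. Finally we reapply $R$, whose freshness conditions now hold. For $\fin$ and $\Rrule{=}$, with several eigenvariables, we rename them one at a time.

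\textbf{Main obstacle.} The delicate case is $\Ratom$ or $\Rrule{\bot}$ with principal label $x$ and substitution $\subst{w}{x}$ for $w\in\Svar$, where the conclusion would become $\extseq, \lab{w}{p}$ and the side condition fails. By the induction hypothesis, the premise yields $\relat{v}{w}, \Gamma'\seq\Delta', \lab{v}{p}$ with height at most $n-1$, where $v$ is fresh. Next we apply the induction hypothesis with $\subst{w}{v}$, which gives $\relat{w}{w}, \Gamma'\seq\Delta',\lab{w}{p}$ with the same bound. Then we remove $\relat{w}{w}$ by an application of $\rf$, giving height at most $n$. This matches the height of the original derivation, so the argument stays within the bound. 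The analogous argument handles $\bot$.
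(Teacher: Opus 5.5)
Your proposal is correct and takes the same route as the paper, which proves both substitution rules simultaneously by induction on the height of the derivation and defers the details to M{\"u}ller's thesis. Your handling of $\Ratom$ and $\Rrule{\bot}$ with principal label $x$ under $\subst{w}{x}$ (first $\subst{w}{v}$, then $\rf$) correctly repairs the failed side condition $\A\notin\Svar$ within the height bound. One small inaccuracy: when a substitution identifies two principal relational atoms, the new conclusion still contains both copies, so the step is an ordinary instance of the same rule and the closure condition is not needed there; it matters only for contraction.
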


The admissibility of both substitution rules is established simultaneously, by induction on the height of a derivation for the premise. Further details are provided in \cite[Prop.~3.2.6]{muller:msc:2023}. Using this result, we can now prove the admissibility of the weakening rules, the contraction rules and the cut rule (see Figure~\ref{fig:structural:rules}). 

\begin{theorem}[Structural properties]\label{thm:struct}
The calculus $\Gsystem{\Logic}$ has the following structural properties:
\setlength\multicolsep{\topsep}
\begin{multicols}{2}
\begin{enumerate}
\setlength{\itemsep}{0pt}
\setlength{\parskip}{0pt}
\item\label{thm:struct:i} The weakening rules are hp-admissible.
\item\label{thm:struct:ii} All rules of $\Gsystem{\Logic}$ are hp-invertible.
\item\label{thm:struct:iii} The contraction rules are hp-admissible.
\item\label{thm:struct:iv} The cut rule is admissible in $\Gsystem{\Logic}$.
\end{enumerate}
\end{multicols}
\end{theorem}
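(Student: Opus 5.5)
We follow the standard $\mathsf{G3}$-style route of Negri and von Plato, adapted to the team setting: items (\ref{thm:struct:i})--(\ref{thm:struct:iii}) are proved in order, each using the previous ones, and (\ref{thm:struct:iv}) is proved last by a double induction.

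\textbf{Weakening.} For (\ref{thm:struct:i}) we argue by induction on the height of the derivation of $\extseq$. If $\extseq$ is an initial sequent, then so is its weakening, since the axioms carry arbitrary contexts. Otherwise we apply the induction hypothesis to the premises and reapply the last rule. The only obstacle is a clash of eigenvariables: the weakening formula $E$ may contain the fresh variable introduced by $\Ratom$, $\Rrule{\bot}$, $\Rrule{\rightarrow}$, $\Lrule{\tensor}$, $\Rrule{=}$ or $\fin$. In that case we first rename the eigenvariable in the premises using the hp-admissible substitution rules $\subst{u}{v}$ and $\subst{\pi}{x}$ (with $\pi$ a fresh variable). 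This preserves height, after which the induction goes through.

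\textbf{Invertibility.} For (\ref{thm:struct:ii}) there are two groups of rules.
\begin{itemize}
\item Most rules repeat their principal formulas or atoms in the premises. This holds for $\Latom$, $\Lrule{\bot}$, $\Lrule{\rightarrow}$, $\Rrule{\tensor}$, $\Lrule{=}$, all order rules and $\fin$. For these, hp-invertibility follows immediately from hp-admissibility of weakening, since each premise is a weakening of the conclusion.
\item For $\Lrule{\wedge}$, $\Rrule{\wedge}$, $\Lrule{\globdis}$, $\Rrule{\globdis}$, $\Ratom$, $\Rrule{\bot}$, $\Rrule{\rightarrow}$, $\Lrule{\tensor}$ and $\Rrule{=}$, we argue by induction on height. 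If the formula in question is principal in the last step, the premise is already available, possibly after substituting the eigenvariable for the fresh one we want. Otherwise we apply the induction hypothesis to the premises and reapply the rule, again renaming eigenvariables when needed.
\item If the conclusion is an initial sequent in which the formula is principal, we need the premise to be derivable as well. This is only possible with $\ax$ or $\axBot$ when the formula is atomic, and the premise is then handled by Lemma~\ref{lem:gen:ax}.
\end{itemize}

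\textbf{Contraction.} For (\ref{thm:struct:iii}) we use induction on height, proving $\CL$ and $\CR$ simultaneously.
\begin{itemize}
\item If neither copy of the contracted item is principal in the last rule, we apply the induction hypothesis to the premises and reapply the rule.
\item If one copy is principal in a rule that repeats its principal items, we contract in the premise by the induction hypothesis. Where two principal relational atoms would coincide after contraction, the closure condition guarantees that the contracted instance of the rule is available.
\item If one copy is principal in a non-repeating rule, e.g. $\Rrule{\rightarrow}$ or $\Lrule{\tensor}$, we invert the other copy in the premise using (\ref{thm:struct:ii}), unify the eigenvariables by substitution, apply the induction hypothesis to contract the resulting duplicates, and reapply the rule.
\end{itemize}

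\textbf{Cut.} For (\ref{thm:struct:iv}) we show that a topmost cut can be eliminated, by induction on the complexity of the cut formula with a subinduction on the sum of the heights of the two premises. If either premise is an initial sequent, the conclusion is an initial sequent or is obtained by weakening; here Lemma~\ref{lem:gen:ax} is needed when the cut formula is principal in $\ax$. If the cut formula is not principal in one premise, we permute the cut upward, renaming eigenvariables first. Cut formulas are never relational atoms, so the order rules, including $\ir$, $\dis$ and $\fin$, only ever occur in these permutation cases and cause no extra work. The principal-principal cases are the core of the argument:
\begin{itemize}
\item For $\rightarrow$: we substitute $\pi$ for the eigenvariable $x$ in the premise of $\Rrule{\rightarrow}$, then cut on the lower-height sequents and finally on $\varphi$ and $\psi$.
\item For $\tensor$: the $\Lrule{\tensor}$ premise is instantiated by $\subst{\B}{x}$ and $\subst{\C}{y}$, and cuts on $\lab{\B}{\varphi}$ and $\lab{\C}{\psi}$ reduce complexity. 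The repeated $\lab{\A}{\varphi\tensor\psi}$ in the premises of $\Rrule{\tensor}$ is first removed by a cut of smaller height.
\item For atoms, $\bot$ and $\dep{p}$: we substitute the singleton eigenvariables $u,v$ and use $\Latom$, $\Lrule{\bot}$ or $\Lrule{=}$ (or their repeated principal atoms).
\end{itemize}
Afterwards, contraction (\ref{thm:struct:iii}) removes any duplicated context.

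The main obstacle we expect is the principal case for $\Rrule{\tensor}$ against $\Lrule{\tensor}$. The right rule repeats $\lab{\A}{\varphi\tensor\psi}$ and a relational atom $\relat{\A}{\B\cup\C}$, while the left rule introduces $\relat{\A}{x\cup y}$. After substitution the atom becomes $\relat{\A}{\B\cup\C}$, which is already present, so it must be absorbed by contraction, and this only works because the closure condition is in force. The case must be checked carefully for every grouping of labels in $\LABint$.
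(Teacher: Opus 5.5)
Your arguments for weakening, hp-invertibility and contraction match the paper's. That is: height induction with eigenvariable renaming; invertibility of the cumulative and order rules via weakening, and of the remaining rules by height induction; contraction using the closure condition for relational atoms, and hp-invertibility plus substitution for labelled formulas. The overall shape of your cut argument is also the paper's.

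The gap is in the principal case for atoms and $\bot$, which you dispose of in half a sentence. There the cut formula is $\lab{\pi}{p}$ with $\pi\notin\Svar$, forced by the side condition of $\Ratom$. The reduction needs two cuts:
\begin{itemize}
\item one on $\lab{\pi}{p}$, between $\Gamma\seq\Delta,\lab{\pi}{p}$ and the premise $\lab{u}{p},\relat{u}{\pi},\lab{\pi}{p},\Pi\seq\Sigma$ of $\Latom$; this has smaller height;
\item one on $\lab{u}{p}$, between $\relat{u}{\pi},\Gamma\seq\Delta,\lab{u}{p}$ (obtained by $\subst{u}{v}$) and the output of the first cut.
\end{itemize}
The second cut has a cut formula of the same complexity as the original. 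One of its premises comes out of an eliminated cut, so its height is unbounded. Doing the two cuts in the opposite order just moves the problem to the cut on $\lab{\pi}{p}$. So if cut-formula complexity is measured by $p$ (or $\bot$) alone, neither your main induction nor your height subinduction covers this step. The same happens for $\Rrule{\bot}$ against $\Lrule{\bot}$.

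The missing ingredient is the paper's refined measure: the rank $\rk{\lab{\pi}{\varphi}}=(\dg{\varphi},\dg{\pi})$, ordered lexicographically, with $\dg{\pi}=0$ for $\pi\in\Svar$ and $\dg{\pi}=1$ otherwise. Then $\lab{u}{p}$ has strictly smaller rank than $\lab{\pi}{p}$, and the second cut falls under the main induction. An alternative repair is a separate lemma: cuts on $\lab{u}{p}$ and $\lab{u}{\bot}$ with $u\in\Svar$ are admissible by induction on the height of the left premise alone, since such formulas in a succedent are principal at most in $\ax$.

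Your $\dep{p}$ case is fine as it stands, because its residual cuts are on singleton-labelled atoms of strictly smaller formula degree. Conversely, the $\tensor$ case is not the real obstacle. After $\subst{\sigma}{x}$ and $\subst{\tau}{y}$, the duplicated $\relat{\pi}{\sigma\cup\tau}$ is removed by hp-admissible contraction, uniformly in the shape of the labels. The closure condition enters only through that contraction lemma.
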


The proof of this theorem follows the same structure as the standard admissibility proofs for structural rules in $\mathsf{G3}$-style systems (see~\cite{negri:plato:2001,negri:2005}). Further details are in Appendix~\ref{app:proof:struct:properties}, and similar proofs are in \cite{muller:msc:2023}.

\section{Completeness}\label{sec:completeness}

To establish the completeness of our proof systems, we will now prove that each of the labelled sequent calculi from Figure~\ref{fig:labelled calculi} is complete with respect to the corresponding Hilbert-style system from Figure~\ref{fig:hilbert:systems}. For technical reasons, we first prove the admissibility of the rules $\auxi$ and $\auxii$ presented in Figure~\ref{fig:further:admissible:rules}.

\begin{figure}[t]
\centering\scriptsize 
\begin{tabularx}{\textwidth}{|@{\hskip 0pt}Y@{\hskip 0pt}|}
\hline 
\rule{0pt}{.7cm} 
\begin{math}
\AxiomC{$\relat{\A}{(\B\cup\C)\cup(\B'\cup\C')},\extseq$}
\LeftLabel{$\auxi$}
\UnaryInfC{$\relat{\A}{\B\cup\B'},\extseq$}
\DisplayProof
\hspace*{5pt}
\AxiomC{$\relat{\A}{(\C\cup\B)\cup(\C'\cup\B')},\extseq$}
\LeftLabel{$\auxii$}
\UnaryInfC{$\relat{\A}{\B\cup\B'},\extseq$}
\DisplayProof
\hspace*{5pt}
\def\defaultHypSeparation{\hskip .1in}
\AxiomC{$\Gamma\seq \lab{x}{\varphi}$}
\AxiomC{$\Delta\seq \lab{x}{\varphi\rightarrow\psi}$}
\LeftLabel{$\MP$}
\BinaryInfC{$\Gamma,\Delta\seq \lab{x}{\psi}$}
\DisplayProof
\def\defaultHypSeparation{\hskip .2in}
\hspace*{5pt}
\AxiomC{$\relat{v}{\A}, \extseq, \lab{v}{\alpha}$}
\LeftLabel{$\FL$}
\RightLabel{$(\ddagger)$}
\UnaryInfC{$\extseq, \lab{\A}{\alpha}$}
\DisplayProof
\end{math}
\\[1.5em]
$(\ddagger)$ means that $\alpha$ has to be a classical formula and $v$ has to be a fresh variable from $\Svar$.
\\[.5\mylength] 
\hline 
\end{tabularx}
\caption{Further admissible rules.}\label{fig:further:admissible:rules}
\end{figure}

\begin{lemma}\label{lem:auxiliary:rules}
The rules $\auxi$ and $\auxii$ are admissible in each of the labelled calculi from Figure~\ref{fig:labelled calculi}.
\end{lemma}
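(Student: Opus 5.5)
The plan is to derive each rule from rules that are already available: the basic order rules together with the admissible weakening and cut from Theorem~\ref{thm:struct}. The rules $\auxi$ and $\auxii$ only remove a relational atom $\relat{\A}{(\B\cup\C)\cup(\B'\cup\C')}$ (resp.\ $\relat{\A}{(\C\cup\B)\cup(\C'\cup\B')}$) and replace it by the stronger atom $\relat{\A}{\B\cup\B'}$. Semantically this is immediate, since $\B\cup\B'$ is contained in the bigger union. So it suffices to derive the atom $\relat{\B\cup\B'}{(\B\cup\C)\cup(\B'\cup\C')}$ from the order rules and then use $\tr$.

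Concretely, for $\auxi$, suppose $\mathcal{D}$ derives $\relat{\A}{(\B\cup\C)\cup(\B'\cup\C')},\extseq$. Write $\rho := (\B\cup\C)\cup(\B'\cup\C')$. We build a derivation of $\relat{\A}{\B\cup\B'},\extseq$ bottom-up.

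First, apply $\ur$ three times (plus weakening as needed) to introduce $\relat{\B}{\B\cup\C}$, $\relat{\B\cup\C}{\rho}$, $\relat{\B'}{\B'\cup\C'}$ and $\relat{\B'\cup\C'}{\rho}$. That is, $\ur$ is applied with $i=1$ and $i=2$ on the relevant unions.

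Second, apply $\tr$ twice to add $\relat{\B}{\rho}$ and $\relat{\B'}{\rho}$.

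Third, apply $\ul$ to obtain $\relat{\B\cup\B'}{\rho}$.

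Fourth, apply $\tr$ once more, using the existing $\relat{\A}{\B\cup\B'}$, to obtain $\relat{\A}{\rho}$.

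The resulting top sequent is $\relat{\A}{\rho},\relat{\A}{\B\cup\B'},\Theta,\extseq$, where $\Theta$ collects the auxiliary relational atoms. It is derivable from $\mathcal{D}$ by hp-admissible left weakening (Theorem~\ref{thm:struct}(\ref{thm:struct:i})). The case of $\auxii$ is identical, except that $\ur$ is used with the index $i=2$ on $\C\cup\B$ and $\C'\cup\B'$.

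One subtlety is that $\ur$ and $\rf$ are stated as rules with an empty principal part in the conclusion, so each use simply adds an atom to the antecedent. The closure condition may be needed when labels coincide, for instance if $\B\cup\C$ equals $\B'\cup\C'$; in that case we use contracted instances, or contraction via Theorem~\ref{thm:struct}(\ref{thm:struct:iii}). Since all rules used are basic order rules, the derivation works uniformly in all four calculi, including those using $\LAB$ without intersections. I expect no real obstacle. The only point needing care is checking that every intermediate label is a well-formed label of the appropriate set, and that the applications of $\ul$ and $\tr$ have exactly the principal atoms required. Admissibility is then not height-preserving, which matches the statement.
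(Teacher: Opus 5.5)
Your proposal is correct and is essentially the paper's proof: weaken the premise, then use $\ur$, $\tr$ and $\ul$ bottom-up to obtain $\relat{\B\cup\B'}{(\B\cup\C)\cup(\B'\cup\C')}$, and then $\relat{\A}{(\B\cup\C)\cup(\B'\cup\C')}$ from $\relat{\A}{\B\cup\B'}$. The paper only interleaves the same rule applications in a different order. One small slip: you need four applications of $\ur$, not three, as your own list of four atoms shows; also, cut is never actually used.
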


\looseness=-1 A proof is provided in Appendix~\ref{app:proof:aux:rules}. Next, we need to establish the admissibility of the rules $\MP$ and $\FL$ (see Figure~\ref{fig:further:admissible:rules}). The rule $\MP$ corresponds to \emph{modus ponens} and the rule $\FL$ accounts for the \emph{flatness} of classical formulas: if a classical formula $\alpha$ is true under every valuation $v$ in a team, then it is also satisfied by this team (see Section~\ref{sec:preliminaries}). For the calculi not involving tensor (i.e., for $\GInqB$ and $\GPID$), the admissibility of the rule $\FL$ can simply be established by induction on the structure of $\alpha$ (cf.\ \cite[Prop.~3.2.14]{muller:msc:2023}). For the other two calculi (i.e., for $\GInqBtensor$ and $\GPIDtensor$), we first need the following lemma. 

\begin{lemma}\label{lem:union:closure}
Let $\Logic$ be $\InqBtensor$ or $\PIDtensor$, let $\Gsystem{\Logic}$ be the corresponding labelled calculus and let $\alpha\in\Langlog{\Logic}$ be a classical formula. All sequents of the form $\lab{\pi}{\alpha},\lab{\sigma}{\alpha},\Gamma\seq\Delta,\lab{\pi\cup\sigma}{\alpha}$ are derivable in $\Gsystem{\Logic}$.  
\end{lemma}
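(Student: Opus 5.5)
We proceed by induction on the structure of the classical formula $\alpha$, which is built from atoms, $\bot$, $\wedge$, $\tensor$ and $\rightarrow$. Throughout we use the hp-admissibility of weakening and contraction and the invertibility of the rules (Theorem~\ref{thm:struct}), Lemma~\ref{lem:gen:ax}, and Lemma~\ref{lem:auxiliary:rules}. Reading the sequent bottom-up, the goal is always to decompose the right formula $\lab{\pi\cup\sigma}{\alpha}$ and reduce to the induction hypothesis.

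\textbf{Atoms and $\bot$.} Apply $\Ratom$ (resp.\ $\Rrule{\bot}$) with a fresh $v$. This yields the premise $\relat{v}{\pi\cup\sigma},\lab{\pi}{p},\lab{\sigma}{p},\Gamma\seq\Delta,\lab{v}{p}$. Apply $\cd$ to split into $\relat{v}{\pi}$ and $\relat{v}{\sigma}$. In each branch close with Lemma~\ref{lem:gen:ax}(\ref{gen:ax:i}), since $\relat{v}{\pi},\lab{\pi}{p}$ gives $\lab{v}{p}$. A subtlety arises if $\pi$ or $\sigma$ is itself in $\Svar$: then we apply $\Ratom$ only to the union, which is never a single variable, so the side condition $(\ddagger)$ is met.

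\textbf{Conjunction.} Apply $\Lrule{\wedge}$ to both left formulas and $\Rrule{\wedge}$ to the right formula, then use the induction hypothesis for each conjunct.

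\textbf{Implication $\alpha=\beta\rightarrow\gamma$.} Apply $\Rrule{\rightarrow}$ with a fresh $x$, giving the premise $\relat{x}{\pi\cup\sigma},\lab{x}{\beta},\ldots\seq\lab{x}{\gamma}$. Next apply $\il$ twice to introduce $\relat{x\cap\pi}{\pi}$ and $\relat{x\cap\sigma}{\sigma}$. Apply $\Lrule{\rightarrow}$ to $\lab{\pi}{\beta\rightarrow\gamma}$ at $x\cap\pi$. The left premise requires $\lab{x\cap\pi}{\beta}$, which follows from $\il$ ($\relat{x\cap\pi}{x}$) and Lemma~\ref{lem:gen:ax}(\ref{gen:ax:i}). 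The right premise gives $\lab{x\cap\pi}{\gamma}$. Symmetrically we obtain $\lab{x\cap\sigma}{\gamma}$. The induction hypothesis for $\gamma$ then gives $\lab{(x\cap\pi)\cup(x\cap\sigma)}{\gamma}$ (via cut, Theorem~\ref{thm:struct}(\ref{thm:struct:iv})). Finally, $\dis$ applied to $\relat{x}{\pi\cup\sigma}$ gives $\relat{x}{(x\cap\pi)\cup(x\cap\sigma)}$, and downward closure (Lemma~\ref{lem:gen:ax}(\ref{gen:ax:i})) transfers $\gamma$ to $x$.

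\textbf{Tensor $\alpha=\beta\tensor\gamma$.} Apply $\Lrule{\tensor}$ to both left formulas. This yields fresh $x_1,y_1,x_2,y_2$ with $\relat{\pi}{x_1\cup y_1}$, $\relat{\sigma}{x_2\cup y_2}$, and $\lab{x_1}{\beta},\lab{x_2}{\beta},\lab{y_1}{\gamma},\lab{y_2}{\gamma}$. By the induction hypothesis and cut we obtain $\lab{x_1\cup x_2}{\beta}$ and $\lab{y_1\cup y_2}{\gamma}$. It remains to derive $\relat{\pi\cup\sigma}{(x_1\cup x_2)\cup(y_1\cup y_2)}$ and then apply $\Rrule{\tensor}$, whose premises close by Lemma~\ref{lem:gen:ax}(\ref{gen:ax:iv}). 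To derive this inclusion, use $\ur$ and $\tr$ to get $\relat{\pi}{(x_1\cup y_1)\cup(x_2\cup y_2)}$ and the analogous statement for $\sigma$, combine them with $\ul$, and then regroup the unions. The regrouping is exactly what $\auxi$ and $\auxii$ of Lemma~\ref{lem:auxiliary:rules} provide: reading them upward, they let us replace a target $\relat{\A}{\B\cup\B'}$ with a larger regrouped union. Alternatively, one can use $\tr$ with suitable $\ur$/$\ul$ instances proving $\relat{(x_1\cup y_1)\cup(x_2\cup y_2)}{(x_1\cup x_2)\cup(y_1\cup y_2)}$.

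The main obstacle is the tensor case, specifically the bookkeeping of union regrouping in relational atoms. The second hardest step is the implication case, where intersections and $\dis$ are essential. This is precisely why intersection labels are needed for the tensor calculi.
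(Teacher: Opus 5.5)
Your proposal is correct and is essentially the paper's proof. It uses the same induction on $\alpha$:
\begin{itemize}
\item for atoms and $\bot$, $\Ratom$/$\Rrule{\bot}$ on the union label followed by $\cd$;
\item for implication, $\Rrule{\rightarrow}$, $\il$, $\Lrule{\rightarrow}$ at $x\cap\pi$ and $x\cap\sigma$, then $\dis$ plus a cut with the induction hypothesis;
\item for tensor, $\Lrule{\tensor}$ twice, $\auxi$/$\auxii$, $\ul$ and $\Rrule{\tensor}$.
\end{itemize}

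One small correction in the tensor case. $\auxi$ and $\auxii$ must be applied directly to the atoms $\relat{\pi}{x_1\cup y_1}$ and $\relat{\sigma}{x_2\cup y_2}$ produced by $\Lrule{\tensor}$. They cannot turn $(x_1\cup y_1)\cup(x_2\cup y_2)$ into $(x_1\cup x_2)\cup(y_1\cup y_2)$ after you have already applied $\ur$, $\tr$ and $\ul$. Your alternative via $\tr$, $\ur$ and $\ul$ does work as stated. Also, the cuts there are unnecessary: the induction hypothesis closes the premises of $\Rrule{\tensor}$ directly.
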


\begin{proof}
Let $\Logic$ be one of $\InqBtensor$ and $\PIDtensor$, let $\Gsystem{\Logic}$ be the associated labelled calculus and let $\alpha\in\Langlog{\Logic}$ be classical. The proof is by induction on the structure of $\alpha$. We only consider the following cases.

\emph{Case 1:} $\alpha = p$ for some variable $p\in\Prop$. In this case, the desired sequent can be derived as follows:
\begin{center}
\myfontsize
\AxiomC{Lemma~\ref{lem:gen:ax}~(\ref{gen:ax:i})}
\dashedLine
\UnaryInfC{$\ldots,\relat{v}{\pi},\lab{\pi}{p},\Gamma\seq\Delta,\lab{v}{p}$}
\AxiomC{Lemma~\ref{lem:gen:ax}~(\ref{gen:ax:i})}
\dashedLine
\UnaryInfC{$\ldots,\relat{v}{\sigma},\lab{\sigma}{p},\Gamma\seq\Delta,\lab{v}{p}$}
\LeftLabel{$\cd$}
\BinaryInfC{$\relat{v}{\pi\cup\sigma}, \lab{\pi}{p},\lab{\sigma}{p},\Gamma\seq\Delta,\lab{v}{p}$}
\LeftLabel{$\Ratom$}
\UnaryInfC{$\lab{\pi}{p},\lab{\sigma}{p},\Gamma\seq\Delta,\lab{\pi\cup\sigma}{p}$}
\DisplayProof 
\end{center}

\emph{Case 2:} $\alpha = \beta\tensor\gamma$, where $\beta$ and $\gamma$ are classical. Using the admissible rules $\auxi$ and $\auxii$ from Figure~\ref{fig:further:admissible:rules}, we now construct the following derivation: 
\begin{center}
\myfontsize
\AxiomC{By~ind.~hyp.}
\dashedLine
\UnaryInfC{$\ldots, \lab{x_1}{\beta}, \lab{y_1}{\beta} \seq \ldots, \lab{x_1\cup y_1}{\beta}$\vphantom{$()$}}
\AxiomC{By~ind.~hyp.}
\dashedLine
\UnaryInfC{$\ldots, \lab{x_2}{\gamma}, \lab{y_2}{\gamma} \seq \ldots, \lab{x_2\cup y_2}{\gamma}$\vphantom{$()$}}
\LeftLabel{$\Rrule{\tensor}$}
\BinaryInfC{$\ldots, \relat{\pi\cup\sigma}{(x_1\cup y_1)\cup (x_2\cup y_2)}, \lab{x_1}{\beta}, \lab{x_2}{\gamma}, \lab{y_1}{\beta}, \lab{y_2}{\gamma}, \Gamma\seq\Delta, \lab{\pi\cup\sigma}{\beta\tensor\gamma}$}
\LeftLabel{$\ul$}
\UnaryInfC{$\relat{\pi}{(x_1\cup y_1)\cup (x_2\cup y_2)}, \relat{\sigma}{(x_1\cup y_1)\cup (x_2\cup y_2)}, \lab{x_1}{\beta}, \lab{x_2}{\gamma}, \lab{y_1}{\beta}, \lab{y_2}{\gamma}, \Gamma\seq\Delta, \lab{\pi\cup\sigma}{\beta\tensor\gamma}$}
\LeftLabel{$\auxi$}
\UnaryInfC{$\relat{\pi}{x_1\cup x_2},\relat{\sigma}{(x_1\cup y_1)\cup (x_2\cup y_2)}, \lab{x_1}{\beta}, \lab{x_2}{\gamma}, \lab{y_1}{\beta}, \lab{y_2}{\gamma}, \Gamma\seq\Delta, \lab{\pi\cup\sigma}{\beta\tensor\gamma}$}
\LeftLabel{$\auxii$}
\UnaryInfC{$\relat{\pi}{x_1\cup x_2}, \relat{\sigma}{y_1\cup y_2}, \lab{x_1}{\beta}, \lab{x_2}{\gamma}, \lab{y_1}{\beta}, \lab{y_2}{\gamma}, \Gamma\seq\Delta, \lab{\pi\cup\sigma}{\beta\tensor\gamma}$}
\LeftLabel{$\Lrule{\tensor}$}
\UnaryInfC{$\relat{\pi}{x_1\cup x_2}, \lab{x_1}{\beta}, \lab{x_2}{\gamma}, \lab{\sigma}{\beta\tensor\gamma}, \Gamma\seq\Delta, \lab{\pi\cup\sigma}{\beta\tensor\gamma}$}
\LeftLabel{$\Lrule{\tensor}$}
\UnaryInfC{$\lab{\pi}{\beta\tensor\gamma}, \lab{\sigma}{\beta\tensor\gamma}, \Gamma\seq\Delta, \lab{\pi\cup\sigma}{\beta\tensor\gamma}$}
\DisplayProof
\end{center}

\emph{Case 3:} $\alpha = \beta\rightarrow\gamma$, where $\beta$ and $\gamma$ are classical. We can now construct the derivation  
\begin{center}
\myfontsize
\AxiomC{Lemma~\ref{lem:gen:ax}~(\ref{gen:ax:i})}
\dashedLine
\UnaryInfC{$\relat{x\cap\pi}{x}, \ldots, \lab{x}{\beta} \seq \ldots, \lab{x\cap\pi}{\beta}$}
\LeftLabel{$\il$}
\UnaryInfC{$\ldots, \lab{x}{\beta} \seq \ldots, \lab{x\cap\pi}{\beta}$}
\AxiomC{Lemma~\ref{lem:gen:ax}~(\ref{gen:ax:i})}
\dashedLine
\UnaryInfC{$\relat{x\cap\sigma}{x}, \ldots, \lab{x}{\beta} \seq \ldots, \lab{x\cap\sigma}{\beta}$}
\LeftLabel{$\il$}
\UnaryInfC{$\ldots, \lab{x}{\beta} \seq \ldots, \lab{x\cap\sigma}{\beta}$}
\AxiomC{$\mathcal{D}$}
\noLine
\UnaryInfC{$\ldots, \relat{x}{\pi\cup\sigma}, \lab{x\cap\pi}{\gamma}, \lab{x\cap\sigma}{\gamma} \seq \ldots, \lab{x}{\gamma}$}
\LeftLabel{$\Lrule{\rightarrow}$}
\BinaryInfC{$\ldots, \relat{x\cap\sigma}{\sigma}, \relat{x}{\pi\cup\sigma}, \lab{x}{\beta}, \lab{x\cap\pi}{\gamma}, \lab{\sigma}{\beta\rightarrow\gamma} \seq \ldots, \lab{x}{\gamma}$}
\LeftLabel{$\il$}
\UnaryInfC{$\ldots, \relat{x}{\pi\cup\sigma}, \lab{x}{\beta}, \lab{x\cap\pi}{\gamma}, \lab{\sigma}{\beta\rightarrow\gamma} \seq \ldots, \lab{x}{\gamma}$}
\LeftLabel{$\Lrule{\rightarrow}$}
\BinaryInfC{$\relat{x\cap\pi}{\pi}, \relat{x}{\pi\cup\sigma}, \lab{x}{\beta}, \lab{\pi}{\beta\rightarrow\gamma}, \lab{\sigma}{\beta\rightarrow\gamma}, \Gamma\seq\Delta, \lab{x}{\gamma}$}
\LeftLabel{$\il$}
\UnaryInfC{$\relat{x}{\pi\cup\sigma}, \lab{x}{\beta}, \lab{\pi}{\beta\rightarrow\gamma}, \lab{\sigma}{\beta\rightarrow\gamma}, \Gamma\seq\Delta, \lab{x}{\gamma}$}
\LeftLabel{$\Rrule{\rightarrow}$}
\UnaryInfC{$\lab{\pi}{\beta\rightarrow\gamma}, \lab{\sigma}{\beta\rightarrow\gamma}, \Gamma\seq\Delta, \lab{\pi\cup\sigma}{\beta\rightarrow\gamma}$}
\DisplayProof 
\end{center}
where the subderivation $\mathcal{D}$ is of the form 
\begin{center}
\myfontsize 
\AxiomC{By~ind.~hyp.}
\dashedLine
\UnaryInfC{$\ldots, \lab{x\cap\pi}{\gamma}, \lab{x\cap\sigma}{\gamma} \seq \ldots, \lab{(x\cap\pi)\cup(x\cap\sigma)}{\gamma}$}
\AxiomC{Lemma~\ref{lem:gen:ax}~(\ref{gen:ax:i})}
\dashedLine
\UnaryInfC{$\relat{x}{(x\cap\pi)\cup(x\cap\sigma)}, \lab{(x\cap\pi)\cup(x\cap\sigma)}{\gamma} \seq \lab{x}{\gamma}$}
\LeftLabel{$\cut$}
\BinaryInfC{$\ldots, \relat{x}{(x\cap\pi)\cup(x\cap\sigma)}, \lab{x\cap\pi}{\gamma}, \lab{x\cap\sigma}{\gamma} \seq \ldots, \lab{x}{\gamma}$}
\LeftLabel{$\dis$}
\UnaryInfC{$\ldots, \relat{x}{\pi\cup\sigma}, \lab{x\cap\pi}{\gamma}, \lab{x\cap\sigma}{\gamma} \seq \ldots, \lab{x}{\gamma}$}
\DisplayProof
\end{center}
The case $\alpha=\bot$ can be treated in the same way as case 1. The case $\alpha = \beta\wedge\gamma$ is straightforward.    
\end{proof}

We are now ready to prove the desired admissibility results.

\begin{lemma}\label{lem:mp:flat:admissible}
Let $\Logic$ be any of the logics from Table~\ref{tab:team:logics} and let $\Gsystem{\Logic}$ be the corresponding labelled sequent calculus. The modus ponens rule $\MP$ and the flatness rule $\FL$ are admissible in $\Gsystem{\Logic}$.
\end{lemma}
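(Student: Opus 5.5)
For $\MP$ I would use the admissibility of $\cut$ (Theorem~\ref{thm:struct}~(\ref{thm:struct:iv})) together with the hp-invertibility of $\Rrule{\rightarrow}$ (Theorem~\ref{thm:struct}~(\ref{thm:struct:ii})). From a derivation of $\Delta\seq\lab{x}{\varphi\rightarrow\psi}$, inverting $\Rrule{\rightarrow}$ gives $\relat{y}{x},\lab{y}{\varphi},\Delta\seq\lab{y}{\psi}$ for a fresh $y\in\Avar$. Applying the hp-admissible substitution rule $\subst{x}{y}$ yields $\relat{x}{x},\lab{x}{\varphi},\Delta\seq\lab{x}{\psi}$; since $y$ is fresh, $\Delta$ is unaffected. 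Cutting $\lab{x}{\varphi}$ against $\Gamma\seq\lab{x}{\varphi}$ gives $\relat{x}{x},\Gamma,\Delta\seq\lab{x}{\psi}$, and a final application of $\rf$ removes $\relat{x}{x}$.

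For $\FL$ I would show, by induction on the classical formula $\alpha$, that the rule itself is admissible: from a derivation of $\relat{v}{\pi},\Gamma\seq\Delta,\lab{v}{\alpha}$ with $v$ fresh we obtain $\Gamma\seq\Delta,\lab{\pi}{\alpha}$. The atomic cases $p$ and $\bot$ are immediate, because the rule then coincides with $\Ratom$ or $\Rrule{\bot}$ whenever $\pi\notin\Svar$. If $\pi$ is a singleton variable $w$, I would substitute $\subst{w}{v}$ and remove $\relat{w}{w}$ by $\rf$. For $\wedge$, I invert $\Rrule{\wedge}$, apply the induction hypothesis to each conjunct, and recombine with $\Rrule{\wedge}$. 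For $\beta\rightarrow\gamma$, I invert $\Rrule{\rightarrow}$ on the premise to get $\relat{x}{v},\lab{x}{\beta},\relat{v}{\pi},\Gamma\seq\Delta,\lab{x}{\gamma}$. The aim is then $\relat{x}{\pi},\lab{x}{\beta},\Gamma\seq\Delta,\lab{x}{\gamma}$, after which $\Rrule{\rightarrow}$ finishes. To reach it I apply the induction hypothesis for $\gamma$ at label $x$: it suffices to derive $\relat{u}{x},\relat{x}{\pi},\lab{x}{\beta},\Gamma\seq\Delta,\lab{u}{\gamma}$ for fresh $u$. This comes from the inverted premise via $\subst{u}{v}$, $\subst{u}{x}$ (substituting $u$ for the $\Avar$-variable $x$), a cut with $\lab{x}{\beta}\seq\lab{u}{\beta}$ obtained from Lemma~\ref{lem:gen:ax}~(\ref{gen:ax:i}), and $\tr$ to recover $\relat{u}{\pi}$. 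In the inquisitive-free setting this is essentially \cite[Prop.~3.2.14]{muller:msc:2023}, and I would adapt that argument.

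The main obstacle is the case $\alpha=\beta\tensor\gamma$ in $\GInqBtensor$ and $\GPIDtensor$. There the premise cannot be decomposed, since $\Rrule{\tensor}$ needs a relational atom $\relat{v}{\B\cup\C}$ that is not present. My plan is to use $\fin$ on $\pi$. The branch $\equat{\pi}{\emptyset}$ closes by Lemma~\ref{lem:gen:ax}~(\ref{gen:ax:ii}). In the other branch $\pi=v_1\cup\ldots\cup v_n$. I would apply $\cd$ repeatedly, so that every singleton $w\subseteq\pi$ equals some $v_i$. For each $i$, instantiating the premise via $\subst{v_i}{v}$ together with $\relat{v_i}{\pi}$, which comes from $\ur$ and $\tr$, gives $\lab{v_i}{\alpha}$. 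These are then combined by repeated cuts with Lemma~\ref{lem:union:closure} into $\lab{v_1\cup\ldots\cup v_n}{\alpha}$. Finally Lemma~\ref{lem:gen:ax}~(\ref{gen:ax:i}) with $\relat{\pi}{v_1\cup\ldots\cup v_n}$ and a cut yield $\lab{\pi}{\alpha}$. This route in fact works uniformly for every classical $\alpha$, so for the tensor calculi I may use it for all cases. The delicate bookkeeping is to check that the freshness conditions and the grouping of unions in $\fin$ match the shapes produced by $\ur$, $\tr$ and Lemma~\ref{lem:union:closure}.
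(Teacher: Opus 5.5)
Your proposal is correct and follows essentially the paper's route. $\MP$ goes via hp-invertibility of $\Rrule{\rightarrow}$, substitution, $\rf$ and $\cut$; you apply $\rf$ after the cut rather than before, which is immaterial. $\FL$ is proved by induction on $\alpha$ in the tensor-free calculi, and via $\fin$ plus Lemma~\ref{lem:union:closure} in the tensor calculi; there the paper only packages your $n$-fold combination of the $\lab{v_i}{\alpha}$ as an auxiliary rule $\FLfin{k}$, proved by induction on $k$ over left-associative unions. The repeated $\cd$ step you mention is unnecessary, since instantiating the fresh $v$ of the premise at each $v_i$ already suffices.
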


\begin{proof}
To prove the admissibility of $\MP$, suppose that $\Gamma\seq \lab{x}{\varphi}$ and $\Delta\seq \lab{x}{\varphi\rightarrow\psi}$ are both derivable in $\Gsystem{\Logic}$ by proof trees $\mathcal{D}_1$ and $\mathcal{D}_2$, respectively. By the invertibility of $\Rrule{\rightarrow}$, the admissibility of substitution and an application of $\rf$, we may transform $\mathcal{D}_2$ into a derivation $\mathcal{D}_2'$ for $\lab{x}{\varphi},\Delta \seq \lab{x}{\psi}$. But then, by applying the cut rule to $\mathcal{D}_1$ and $\mathcal{D}_2'$, we also obtain a derivation for $\Gamma,\Delta \seq \lab{x}{\psi}$, as desired.

Next, we prove the admissibility of the rule $\FL$. If $\Logic\in\{\InqB,\PID\}$, the proof proceeds by an easy induction on the structure of $\alpha$, which is $\vee$-free. Proofs can be found in \cite[Prop.~3.2.14]{muller:msc:2023} and \cite[Prop.~4.6]{muller:2026}. Suppose now that $\Logic\in\{\InqBtensor,\PIDtensor\}$. Furthermore, consider an arbitrary infinite sequence $u_1,u_2,u_3,\ldots$ of variables from $\Svar$. For every $k\geq 1$, let $\unionlab{k}$ be the label defined by $\unionlab{k} := u_1\cup\ldots\cup u_k$, where the union operator is assumed to be \emph{left-associative}. So, for example, $\unionlab{4}$ stands for $((u_{1}\cup u_{2})\cup u_{3})\cup u_{4}$. Using induction on $k\geq 1$, we first prove the admissibility of the following restricted rule:
\begin{center}
\myfontsize 
\AxiomC{$\relat{v}{\unionlab{k}},\Gamma\seq\Delta,\lab{v}{\alpha}$}
\LeftLabel{$\FLfin{k}$}
\UnaryInfC{$\Gamma\seq\Delta,\lab{\unionlab{k}}{\alpha}$}
\DisplayProof 
\end{center}
where $\alpha$ must be a classical formula and $v\in\Svar$ must be fresh. For the base case, let $k=1$ and suppose that $\relat{v}{u_1},\Gamma\seq\Delta,\lab{v}{\alpha}$ is derivable in $\Gsystem{\Logic}$. Using the admissibility of substitution, this yields a derivation for $\relat{u_1}{u_1},\Gamma\seq\Delta,\lab{u_1}{\alpha}$. By an application of $\rf$, it now follows that $\Gamma\seq\Delta,\lab{u_1}{\alpha}$ is derivable. 

For the inductive step, let $k>1$ and suppose that the premise $\relat{v}{\unionlab{k}},\Gamma\seq\Delta,\lab{v}{\alpha}$ of the rule $\FLfin{k}$ is derivable in the system $\Gsystem{\Logic}$ by a proof tree $\mathcal{D}$. Using the admissibility of substitution, weakening, contraction and cut, we may then derive the conclusion of $\FLfin{k}$ in the following way:
\begin{center}
\myfontsize
\def\defaultHypSeparation{\hskip .1in}
\AxiomC{$\mathcal{D}$}
\noLine
\UnaryInfC{$\relat{v}{\unionlab{k}},\Gamma\seq\Delta,\lab{v}{\alpha}$}
\LeftLabel{$\WL$}
\UnaryInfC{$\relat{v}{\unionlab{k}}, \relat{v}{u_k}, \relat{u_k}{\unionlab{k}}, \Gamma\seq\Delta,\lab{v}{\alpha}$}
\LeftLabel{$\tr$}
\UnaryInfC{$\relat{v}{u_k}, \relat{u_k}{\unionlab{k}}, \Gamma\seq\Delta,\lab{v}{\alpha}$}
\LeftLabel{$\ur$}
\UnaryInfC{$\relat{v}{u_k}, \Gamma\seq\Delta,\lab{v}{\alpha}$\vphantom{$()$}}
\LeftLabel{$(u_{k}/v)$}
\UnaryInfC{$\relat{u_k}{u_k}, \Gamma\seq\Delta,\lab{u_k}{\alpha}$\vphantom{$()$}}
\LeftLabel{$\rf$}
\UnaryInfC{$\Gamma\seq\Delta,\lab{u_k}{\alpha}$\vphantom{$()$}}
\AxiomC{$\mathcal{D}$}
\noLine
\UnaryInfC{$\relat{v}{\unionlab{k}},\Gamma\seq\Delta,\lab{v}{\alpha}$}
\LeftLabel{$\WL$}
\UnaryInfC{$\relat{v}{\unionlab{k}}, \relat{v}{\unionlab{k-1}}, \relat{\unionlab{k-1}}{\unionlab{k}}, \Gamma\seq\Delta,\lab{v}{\alpha}$}
\LeftLabel{$\tr$}
\UnaryInfC{$\relat{v}{\unionlab{k-1}}, \relat{\unionlab{k-1}}{\unionlab{k}}, \Gamma\seq\Delta,\lab{v}{\alpha}$}
\LeftLabel{$\ur$}
\UnaryInfC{$\relat{v}{\unionlab{k-1}}, \Gamma\seq\Delta,\lab{v}{\alpha}$}
\LeftLabel{IH}
\UnaryInfC{$\Gamma\seq\Delta,\lab{\unionlab{k-1}}{\alpha}$}
\AxiomC{Lemma~\ref{lem:union:closure}}
\dashedLine
\UnaryInfC{$\lab{\unionlab{k-1}}{\alpha},\lab{u_k}{\alpha} \seq \lab{\unionlab{k}}{\alpha}$}
\LeftLabel{$\cut$}
\BinaryInfC{$\lab{u_k}{\alpha},\Gamma\seq\Delta,\lab{\unionlab{k}}{\alpha}$}
\LeftLabel{$\cut$}
\BinaryInfC{$\Gamma,\Gamma \seq \Delta,\Delta,\lab{\unionlab{k}}{\alpha}$}
\LeftLabel{$\CL,\CR$}
\UnaryInfC{$\Gamma\seq\Delta,\lab{\unionlab{k}}{\alpha}$}
\def\defaultHypSeparation{\hskip .2in}
\DisplayProof
\end{center}
where ``IH'' denotes an application of the induction hypothesis. This concludes the induction. In order to prove the admissibility of the unrestricted rule $\FL$, suppose now that the sequent $\relat{v}{\pi}, \Gamma\seq\Delta, \lab{v}{\alpha}$ is derivable in $\Gsystem{\Logic}$ by a proof tree $\mathcal{D}$, where $\pi\in\LABint$ is an arbitrary label and $v\in\Svar$ is a fresh variable not occurring in $\Gamma\seq\Delta, \lab{\pi}{\alpha}$. Moreover, let $n:= 2^{|\Prop|}$ and let $u_{1},\ldots, u_{n}$ be a collection of $n$ pairwise distinct variables with $v\notin\{u_{1},\ldots, u_{n}\}$ such that $u_{1},\ldots, u_{n}$ also do not occur in $\Gamma\seq\Delta, \lab{\pi}{\alpha}$. Finally, let $\altunlab{n}$ be the label given by $\altunlab{n} := u_{1}\cup\ldots\cup u_{n}$, where the union operator is again assumed to be left-associative. Using the admissibility of $\FLfin{n}$, we may then derive the conclusion of $\FL$ as follows:
\begin{center}
\myfontsize
\AxiomC{Lemma~\ref{lem:gen:ax}~(\ref{gen:ax:ii})}
\dashedLine
\UnaryInfC{$\equat{\pi}{\emptyset}, \Gamma\seq\Delta, \lab{\pi}{\alpha}$}
\AxiomC{$\mathcal{D}$}
\noLine
\UnaryInfC{$\relat{v}{\pi}, \Gamma\seq\Delta, \lab{v}{\alpha}$}
\LeftLabel{$\WL$}
\UnaryInfC{$\relat{v}{\pi}, \relat{v}{\altunlab{n}}, \equat{\pi}{\altunlab{n}}, \Gamma\seq\Delta, \lab{v}{\alpha}$}
\LeftLabel{$\tr$}
\UnaryInfC{$\relat{v}{\altunlab{n}}, \equat{\pi}{\altunlab{n}}, \Gamma\seq\Delta, \lab{v}{\alpha}$}
\LeftLabel{$\FLfin{n}$}
\UnaryInfC{$\equat{\pi}{\altunlab{n}}, \Gamma\seq\Delta, \lab{\altunlab{n}}{\alpha}$}
\AxiomC{Lemma~\ref{lem:gen:ax}~(\ref{gen:ax:i})}
\dashedLine
\UnaryInfC{$\equat{\pi}{\altunlab{n}}, \lab{\altunlab{n}}{\alpha} \seq \lab{\pi}{\alpha}$}
\LeftLabel{$\cut$}
\BinaryInfC{$\equat{\pi}{\altunlab{n}}, \equat{\pi}{\altunlab{n}}, \Gamma\seq\Delta, \lab{\pi}{\alpha}$}
\LeftLabel{$\CL$}
\UnaryInfC{$\equat{\pi}{\altunlab{n}}, \Gamma\seq\Delta, \lab{\pi}{\alpha}$}
\LeftLabel{$\fin$}
\BinaryInfC{$\Gamma\seq\Delta,\lab{\pi}{\alpha}$}
\DisplayProof
\end{center}
where the notation ``$\equat{\sigma}{\tau}$'' is used as an abbreviation for the pair of relational atoms ``$\relat{\sigma}{\tau}, \relat{\tau}{\sigma}$''.
\end{proof}

In the case $\Logic\in\{\InqBtensor,\PIDtensor\}$, our admissibility proof for $\FL$ crucially uses the rule $\fin$ and thus relies on the assumption that $\Prop$ is finite. Using the admissibility of the rules $\MP$ and $\FL$, we now prove that each of our calculi is complete with respect to the corresponding Hilbert-style system from Figure~\ref{fig:hilbert:systems}.

\begin{lemma}\label{lem:complete:wrt:hilbert}
Let $\Logic$ be any of the logics from Table~\ref{tab:team:logics}, let $\Hsystem{\Logic}$ be the Hilbert-style system and let $\Gsystem{\Logic}$ be the labelled sequent calculus for $\Logic$. If $\Gamma\provH\varphi$ holds in $\Hsystem{\Logic}$, then $\Gamma\provG\varphi$ holds in $\Gsystem{\Logic}$. 
\end{lemma}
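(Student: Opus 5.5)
The proof goes by induction on the length of a Hilbert derivation $\psi_1,\ldots,\psi_n$ of $\varphi$ from $\Gamma$. We show that for every $i$ there is a finite $\Delta_i\subseteq\Gamma$ such that $\lab{x}{\Delta_i}\seq\lab{x}{\psi_i}$ is derivable in $\Gsystem{\Logic}$, for a fixed $x\in\Avar$. By the hp-admissible substitution rule $\subst{y}{x}$, the choice of $x$ does not matter.

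There are three cases for $\psi_i$.
\begin{itemize}
\item If $\psi_i\in\Gamma$, we take $\Delta_i=\{\psi_i\}$ and use Lemma~\ref{lem:gen:ax}~(\ref{gen:ax:iv}).
\item If $\psi_i$ is obtained by modus ponens from $\psi_j$ and $\psi_k=\psi_j\rightarrow\psi_i$, the rule $\MP$ (admissible by Lemma~\ref{lem:mp:flat:admissible}) gives $\lab{x}{\Delta_j},\lab{x}{\Delta_k}\seq\lab{x}{\psi_i}$. Contraction (Theorem~\ref{thm:struct}) then removes duplicates, so $\Delta_i=\Delta_j\cup\Delta_k$.
\item If $\psi_i$ is an axiom, it suffices to show $\seq\lab{x}{\psi_i}$ and then weaken.
\end{itemize}

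The bulk of the work is checking every axiom scheme of $\Hsystem{\Logic}$. Each axiom is an implication, so we repeatedly apply $\Rrule{\rightarrow}$ to introduce fresh $y\subseteq x$, $z\subseteq y$, and so on. We then use $\tr$ to get the inclusions needed for $\Lrule{\rightarrow}$, close branches with Lemma~\ref{lem:gen:ax}, and use cut freely since it is admissible.
\begin{itemize}
\item $\AxA{1}$--$\AxA{7}$ are routine: $\AxA{1}$ uses downward closure (\ref{gen:ax:i}), and $\AxA{7}$ uses (\ref{gen:ax:iii}).
\item $\AxIntro$, $\AxMon$ and $\AxCom$ follow from the tensor rules. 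For $\AxIntro$, we realise $\lab{y}{\varphi\tensor\psi}$ via $\relat{y}{y\cup\emptyset}$, obtained by $\ur$, $\rf$, and $\ul$ with (\ref{gen:ax:ii}) for the empty component.
\item $\AxDis$ uses $\Lrule{\tensor}$, then $\Lrule{\globdis}$ on the right component, then $\Rrule{\globdis}$ and $\Rrule{\tensor}$.
\item $\AxCons$ is handled directly by the constancy rules together with $\cd$/$\sg$-style reasoning on singletons. For the left-to-right direction of $\dep{p}\rightarrow(p\globdis\neg p)$, we apply $\FL$-free reasoning: $\Ratom$ on both disjuncts, then $\Lrule{=}$ on the two fresh singletons, and the $\neg p$ disjunct is handled through $\Rrule{\rightarrow}$ and $\Rrule{\bot}$.
\end{itemize}

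The main obstacle is the axioms that carry the classical side condition: $\AxSplit$, $\AxDN$ and $\AxElim$. This is where $\FL$ (Lemma~\ref{lem:mp:flat:admissible}) and, for $\AxElim$, union closure (Lemma~\ref{lem:union:closure}) come in.
\begin{itemize}
\item For $\AxDN$ with goal $\lab{y}{\alpha}$, we apply $\FL$ to reduce to $\lab{v}{\alpha}$ with $\relat{v}{y}$. We instantiate $\neg\neg\alpha$ at $v$ and then need $\lab{v}{\neg\alpha}$. Via $\Rrule{\rightarrow}$ this asks us to show that a fresh $z\subseteq v$ with $\lab{z}{\alpha}$ gives $\lab{z}{\bot}$. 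We use $\sg$ to split into $z=\emptyset$ (close with (\ref{gen:ax:ii})) or $z=v$, where (\ref{gen:ax:i}) transports $\lab{z}{\alpha}$ to $\lab{v}{\alpha}$.
\item For $\AxSplit$, the key step is to show, inside the system, that $\lab{y}{\alpha\rightarrow\chi}$ is equivalent to $\lab{z}{\chi}$ where $z$ is the ``largest $\alpha$-subteam'' of $y$. Lacking such a label, we instead argue with $\FL$ and $\fin$. We decompose $y$ by $\fin$ into singletons $v_1\cup\cdots\cup v_n$, consider each $v_j$, and then recombine using $\cd$ and $\Lrule{\globdis}$. This recombination is the delicate part.
\item For $\AxElim$, after $\Lrule{\tensor}$ we obtain $\relat{z}{x'\cup y'}$ with $\lab{x'}{\varphi}$ and $\lab{y'}{\psi}$. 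We derive $\lab{x'}{\alpha}$ and $\lab{y'}{\alpha}$ from the hypotheses, and use Lemma~\ref{lem:union:closure} plus (\ref{gen:ax:i}) to conclude $\lab{z}{\alpha}$.
\end{itemize}

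If the direct treatment of $\AxSplit$ proves too unwieldy, the fallback is to follow the analogous derivations in \cite[Chapter~3]{muller:msc:2023}.
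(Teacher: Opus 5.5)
\paragraph{Main gap: the tensor axioms.} Your skeleton (induction on the Hilbert derivation, $\MP$ plus contraction in the inductive step, then a check of each axiom) is the paper's, and your $\AxDN$ derivation coincides with the paper's. The gap is in the tensor axioms.

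Applying $\Lrule{\tensor}$ to $\lab{z}{\varphi\tensor\psi}$ yields only $\relat{z}{x'\cup y'}$, never $\relat{x'}{z}$. So $x'$ and $y'$ need not lie below the labels carrying the hypotheses $\varphi\rightarrow\alpha$ and $\psi\rightarrow\alpha$, and $\Lrule{\rightarrow}$ cannot be fired at $x'$ or $y'$. Your step in $\AxElim$, ``derive $\lab{x'}{\alpha}$ and $\lab{y'}{\alpha}$ from the hypotheses'', therefore fails. Interpret every variable except $x'$ as $\emptyset$, and $x'$ as a team satisfying $\varphi$ but not $\alpha$ (e.g.\ $\varphi=p$, $\alpha=q$). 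This satisfies the whole antecedent, so by soundness no sequent with that antecedent and sole succedent $\lab{x'}{\alpha}$ is derivable.

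The paper avoids this as follows:
\begin{itemize}
\item It applies $\FL$ first, reducing the goal to $\lab{v}{\alpha}$ for a fresh singleton $v$ with $\relat{v}{z}$.
\item After $\Lrule{\tensor}$, $\tr$ gives $\relat{v}{x'\cup y'}$, and $\cd$ splits into $\relat{v}{x'}$ or $\relat{v}{y'}$.
\item In each branch, $v$ lies below both the tensor component and (by $\tr$) the relevant hypothesis label. So $\Lrule{\rightarrow}$ applies at $v$, and Lemma~\ref{lem:gen:ax}~(\ref{gen:ax:i}) closes the premise.
\end{itemize}
Your own route can be repaired by working with $z\cap x'$ and $z\cap y'$ via $\il$, $\tr$ and $\dis$, as in Case~3 of Lemma~\ref{lem:union:closure}.

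The same obstruction is hidden in your claim that $\AxMon$ ``follows from the tensor rules''. To use $\lab{y}{\varphi\rightarrow\psi}$ on the component $\lab{z_1}{\varphi}$, you need a label below both $z_1$ and $y$. Since $\psi\tensor\chi$ is not classical, $\FL$ is unavailable. The paper uses $z\cap z_1$ and $z\cap z_2$: $\il$ and $\tr$ give $\relat{z\cap z_1}{z_1}$ and $\relat{z\cap z_1}{y}$, and $\dis$ gives $\relat{z}{(z\cap z_1)\cup(z\cap z_2)}$ for the final $\Rrule{\tensor}$. So the special order rules are essential here, not just the tensor rules.

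\paragraph{$\AxSplit$.} Your primary plan does not work as stated. $\AxSplit$ is a basic axiom, needed in all four calculi, but $\fin$ is not a rule of $\GInqB$ or $\GPID$. No ``largest $\alpha$-subteam'' is needed; union closure on two fresh subteams suffices:
\begin{itemize}
\item After $\Rrule{\rightarrow}$, $\Rrule{\globdis}$ and two further applications of $\Rrule{\rightarrow}$, you have fresh $z_1,z_2$ with $\relat{z_1}{y}$, $\relat{z_2}{y}$, $\lab{z_1}{\alpha}$, $\lab{z_2}{\alpha}$, and succedent $\lab{z_1}{\varphi},\lab{z_2}{\psi}$.
\item $\ul$ gives $\relat{z_1\cup z_2}{y}$, and $\Lrule{\rightarrow}$ at $z_1\cup z_2$ leaves two premises.
\item The premise demanding $\lab{z_1\cup z_2}{\alpha}$ closes by $\FL$, $\cd$ and Lemma~\ref{lem:gen:ax}~(\ref{gen:ax:i}); this works in all four calculi.
\item The premise with $\lab{z_1\cup z_2}{\varphi\globdis\psi}$ on the left closes by $\Lrule{\globdis}$ and downward closure along $\relat{z_i}{z_1\cup z_2}$, obtained from $\ur$.
\end{itemize}
Your fallback of citing M\"uller's derivation is what the paper itself does for $\AxSplit$, so that case is recoverable. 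The $\fin$-based plan is not.
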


\begin{proof}
By induction on the length of a proof for $\Gamma\provH\varphi$ in the Hilbert system $\Hsystem{\Logic}$. The inductive step is easy, since we already know that modus ponens is admissible in $\Gsystem{\Logic}$ (Lemma~\ref{lem:mp:flat:admissible}). Hence, it suffices to show that all axioms of $\Hsystem{\Logic}$ are provable in $\Gsystem{\Logic}$. A few cases are considered in Appendix~\ref{app:proof:complete:wrt:hilbert}. 
\end{proof}

\begin{theorem}[Soundness and Completeness]
Let $\Logic$ be any of the logics from Table~\ref{tab:team:logics}. The labelled sequent calculus $\Gsystem{\Logic}$ is sound and complete with respect to $\Logic$. That is, for every set of formulas $\Gamma\cup\{\varphi\}\subseteq\Langlog{\Logic}$, it is the case that $\Gamma\provG\varphi$ holds in $\Gsystem{\Logic}$ if and only if $\Gamma\semL\varphi$ holds in $\Logic$. 
\end{theorem}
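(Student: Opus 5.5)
The proof splits into the two directions. Completeness is immediate from earlier results. Soundness requires a semantic check of every axiom and rule of $\Gsystem{\Logic}$.

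\emph{Completeness.} Suppose $\Gamma\semL\varphi$. By Theorem~\ref{th:completeness:hilbert}, $\Gamma\provH\varphi$ holds in $\Hsystem{\Logic}$. By Lemma~\ref{lem:complete:wrt:hilbert}, $\Gamma\provG\varphi$ then holds in $\Gsystem{\Logic}$. Nothing further is needed here.

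\emph{Soundness.} Suppose $\Gamma\provG\varphi$, so that $\lab{x}{\Delta}\seq\lab{x}{\varphi}$ is derivable for some finite $\Delta\subseteq\Gamma$. I will show by induction on the height of derivations that every derivable sequent is valid in the sense defined in Section~\ref{sec:calculi}, i.e.\ true under every interpretation $I$. Validity of $\lab{x}{\Delta}\seq\lab{x}{\varphi}$ then gives $\Gamma\semL\varphi$: for a team $T$ satisfying $\Gamma$, take any interpretation with $I(x)=T$.

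For the initial sequents:
\begin{itemize}
\item $\ax$ is trivial.
\item $\axBot$ holds because $I(v)$ is a singleton and so cannot satisfy $\bot$.
\item $\axEmpty$ holds because $I(v)\subseteq\emptyset$ is impossible.
\end{itemize}

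For each rule I argue contrapositively. Given an interpretation $I$ falsifying the conclusion, I produce an interpretation $I'$ falsifying some premise. For rules without eigenvariables, $I'=I$. For rules with eigenvariables, $I'$ agrees with $I$ except on the fresh variables; freshness ensures the context keeps its truth value.

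\begin{itemize}
\item \emph{Atom and $\bot$ rules.} For $\Ratom$ and $\Rrule{\bot}$: if $I(\pi)\not\models p$, pick a valuation in $I(\pi)$ falsifying $p$ and assign it (as a singleton) to $v$. For $\Rrule{\bot}$, pick any element of the nonempty $I(\pi)$.
\item \emph{Implication.} For $\Rrule{\rightarrow}$, assign to $x$ a witnessing subteam. For $\Lrule{\rightarrow}$, downward-closure-free reasoning suffices: if $I(\pi)\subseteq I(\sigma)\models\varphi\rightarrow\psi$, then either $I(\pi)\not\models\varphi$ or $I(\pi)\models\psi$.
\item \emph{Tensor.} For $\Lrule{\tensor}$, assign to $x,y$ the splitting subteams. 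For $\Rrule{\tensor}$, suppose $I(\pi)\subseteq I(\sigma)\cup I(\tau)$, $I(\sigma)\models\varphi$ and $I(\tau)\models\psi$. Then $I(\pi)=(I(\pi)\cap I(\sigma))\cup(I(\pi)\cap I(\tau))$, and by downward closure these two pieces satisfy $\varphi$ and $\psi$. This is the one place where the semantic properties from Section~\ref{sec:preliminaries} are genuinely used.
\item \emph{Constancy.} For $\Rrule{=}$, pick two valuations in $I(\pi)$ disagreeing on $p$. For $\Lrule{=}$, use a case split on $p$ at $I(u)$ and $I(v)$.
\item \emph{Order rules.} These are elementary set-theoretic facts about $\subseteq$, $\cup$, $\cap$ and singletons. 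For instance, $\sg$ holds because a subset of a singleton is either empty or equal to it; $\cd$ holds because a singleton contained in a union lies in one of the parts; $\dis$ is distributivity.
\item \emph{The rule $\fin$.} If $I(\pi)\neq\emptyset$, it has at most $n=2^{|\Prop|}$ elements. Enumerate them with repetitions to obtain $n$ singletons, and assign these to $v_1,\ldots,v_n$; then $I(\pi)=I(v_1\cup\cdots\cup v_n)$ regardless of bracketing. This case is where finiteness of $\Prop$ enters.
\end{itemize}

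The contracted instances required by the closure condition are covered by the same arguments. I expect no real obstacle. The only points requiring care are the $\Rrule{\tensor}$ case, which needs downward closure, and the $\fin$ case, where repetitions must be allowed in the enumeration.
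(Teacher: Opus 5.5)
Your proposal is correct and takes the same route as the paper. Completeness comes from composing Theorem~\ref{th:completeness:hilbert} with Lemma~\ref{lem:complete:wrt:hilbert}, and soundness from an induction showing that every derivable sequent is valid, followed by choosing an interpretation with $I(x)=T$. The paper only sketches that soundness induction, deferring to M\"uller's thesis; your rule-by-rule check, including downward closure for $\Rrule{\tensor}$ and enumeration with repetitions for $\fin$, correctly fills in those details.
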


\begin{proof}
For the soundness direction, one may first use induction on the structure of a derivation to show that every sequent derivable in $\Gsystem{\Logic}$ is valid. Soundness then follows as an immediate consequence (cf.\ \cite[Prop.~3.3.4]{muller:msc:2023}). The completeness part follows directly from Theorem~\ref{th:completeness:hilbert} and Lemma~\ref{lem:complete:wrt:hilbert}.   
\end{proof}

\section{Proof Search}\label{sec:proof:search}

We now want to describe a terminating proof search algorithm for the logics presented in Section~\ref{sec:preliminaries}. To avoid certain complications arising from the complex syntax of the labels presented in Definition~\ref{def:labels}, we first introduce an alternative class of labelled sequent calculi for our logics, in which labels are defined as \emph{sets of variables}.\footnote{A very similar calculus for $\InqB$ was already presented by Litak and Sano \cite[Remark~4]{litak:sano:2026}. In fact, their system essentially coincides with the calculus for $\InqB$ introduced in Definition~\ref{def:terminating:calculi} below. The only noteworthy differences are that Litak and Sano use slightly different axioms and require labels to be non-empty.} This eliminates the need for relational atoms and order rules, which simplifies the proof of termination significantly. Throughout this section, we will assume an infinite set $\Vvar$ of variables ranging over propositional \emph{valuations}. Elements of $\Vvar$ are denoted by the meta-variables $u$ and $v$.  

\begin{definition}[Labels]
A \emph{label} is a finite set of variables $\pi\subseteq\Vvar$.   
\end{definition}
 
\looseness=-1 A label of the form $\pi = \{v_{1},\ldots,v_{k}\}$ stands for the finite team consisting of the valuations represented by $v_{1},\ldots,v_{k}$ (which do not need to be distinct). Arbitrary labels are denoted by $\pi$, $\sigma$, $\tau$. A \emph{split} of a label $\pi$ is a pair of labels $(\sigma,\tau)$ such that $\pi = \sigma\cup\tau$. The set of all splits of $\pi$ is denoted by $\Splits{\pi}$, so $\Splits{\pi} := \{(\sigma,\tau) \mid \pi = \sigma\cup\tau\}$. A \emph{labelled formula} is an expression of the form $\lab{\pi}{\varphi}$, where $\pi\subseteq\Vvar$ is a label and $\varphi\in\LangTL$ is a formula. If $\pi = \{v\}$ is a singleton label, then we write $\lab{v}{\varphi}$ instead of $\lab{\pi}{\varphi}$. As before, $\lab{\pi}{\varphi}$ may be read as ``$\varphi$ is satisfied by the team $\pi$''. A \emph{sequent} is an expression of the form $\Gamma\seq\Delta$, where $\Gamma$ and $\Delta$ are finite multisets of labelled formulas 
(thus, sequents do \emph{not} contain relational atoms).

\begin{definition}[Interpretation of Labels]
An \emph{interpretation} is a function $\I :\Vvar\rightarrow \Val$ that assigns, to each variable $v\in\Vvar$, a valuation $\I(v)\in\Val$. Such a function $\I$ is extended to labels by putting $\I(\emptyset) := \emptyset$ as well as $\I(\pi) := \{\I(v_{1}),\ldots,\I(v_{k})\}$ for every label $\pi=\{v_{1},\ldots,v_{k}\}$ with $k\geq 1$. 
\end{definition}

\begin{figure}[t] 
\centering\myfontsize 
\begin{tabularx}{\textwidth}{|@{\hskip .5cm}Y@{\hskip .5cm}|}
\hline
\rule{0pt}{.7cm} 
\AxiomC{\vphantom{$\Gamma,\Delta$}}
\LeftLabel{$\ax$}
\UnaryInfC{$\lab{v}{p}, \extseq, \lab{v}{p}$}
\DisplayProof
\hspace{2em}
\AxiomC{\vphantom{$\Gamma,\Delta$}}
\LeftLabel{$\axBot$}
\UnaryInfC{$\lab{v}{\bot}, \extseq$}
\DisplayProof
\hspace{2em}
\AxiomC{$\lab{v}{p}, \lab{\A}{p}, \extseq$}
\LeftLabel{$\Latom$}
\RightLabel{$(v\in\A)$}
\UnaryInfC{$\lab{\A}{p}, \extseq$}
\DisplayProof
\hspace{2em}
\AxiomC{$[\extseq, \lab{v}{p}]_{v\in\A}$}
\LeftLabel{$\Ratom$}
\RightLabel{$(|\A|\neq 1)$}
\UnaryInfC{$\extseq, \lab{\A}{p}$}
\DisplayProof
\\[\mylength]
\AxiomC{$\lab{v}{\bot}, \lab{\A}{\bot}, \extseq$}
\LeftLabel{$\Lrule{\bot}$}
\RightLabel{$(v\in\A)$}
\UnaryInfC{$\lab{\A}{\bot}, \extseq$}
\DisplayProof
\hspace{1em}
\AxiomC{$[\extseq, \lab{v}{\bot}]_{v\in\A}$}
\LeftLabel{$\Rrule{\bot}$}
\RightLabel{$(|\A|\neq 1)$}
\UnaryInfC{$\extseq, \lab{\A}{\bot}$}
\DisplayProof
\hspace{1em}
\AxiomC{$\lab{\A}{\varphi},\lab{\A}{\psi}, \extseq$}
\LeftLabel{$\Lrule{\wedge}$}
\UnaryInfC{$\lab{\A}{\varphi\wedge\psi}, \extseq$}
\DisplayProof
\hspace{1em}
\AxiomC{$\extseq, \lab{\A}{\varphi}$}
\AxiomC{$\extseq, \lab{\A}{\psi}$}
\LeftLabel{$\Rrule{\wedge}$}
\BinaryInfC{$\extseq, \lab{\A}{\varphi\wedge\psi}$}
\DisplayProof
\\[\mylength]
\AxiomC{$\lab{\A}{\varphi}, \extseq$}
\AxiomC{$\lab{\A}{\psi}, \extseq$}
\LeftLabel{$\Lrule{\globdis}$}
\BinaryInfC{$\lab{\A}{\varphi\globdis\psi}, \extseq$}
\DisplayProof
\hspace{.5em}
\AxiomC{$\extseq, \lab{\A}{\varphi}, \lab{\A}{\psi}$}
\LeftLabel{$\Rrule{\globdis}$}
\UnaryInfC{$\extseq, \lab{\A}{\varphi\globdis\psi}$}
\DisplayProof
\hspace{.5em}
\AxiomC{$\lab{\A}{\varphi\rightarrow\psi}, \extseq, \lab{\B}{\varphi}$}
\AxiomC{$\lab{\A}{\varphi\rightarrow\psi}, \lab{\B}{\psi}, \extseq$}
\LeftLabel{$\Lrule{\rightarrow}$}
\RightLabel{$(\B\subseteq\A)$}
\BinaryInfC{$\lab{\A}{\varphi\rightarrow\psi}, \extseq$}
\DisplayProof
\\[\mylength]
\AxiomC{$[\lab{\B}{\varphi}, \extseq, \lab{\B}{\psi}]_{\B\subseteq\A}$}
\LeftLabel{$\Rrule{\rightarrow}$}
\UnaryInfC{$\extseq, \lab{\A}{\varphi\rightarrow\psi}$}
\DisplayProof
\hspace{0pt}
\AxiomC{$[\lab{\B}{\varphi},\lab{\C}{\psi},\extseq]_{(\B,\C)\in\Splits{\A}}$}
\LeftLabel{$\Lrule{\tensor}$}
\UnaryInfC{$\lab{\A}{\varphi\tensor\psi},\extseq$}
\DisplayProof
\hspace{0pt}
\def\defaultHypSeparation{\hskip .1in}
\AxiomC{$\extseq,\lab{\A}{\varphi\tensor\psi},\lab{\B}{\varphi}$}
\AxiomC{$\extseq,\lab{\A}{\varphi\tensor\psi},\lab{\C}{\psi}$}
\LeftLabel{$\Rrule{\tensor}$}
\RightLabel{$(\A = \B\cup\C)$}
\BinaryInfC{$\extseq,\lab{\A}{\varphi\tensor\psi}$}
\DisplayProof
\def\defaultHypSeparation{\hskip .2in}
\\[\mylength]
\AxiomC{$\lab{u}{p}, \lab{v}{p}, \lab{\A}{\dep{p}}, \extseq$}
\AxiomC{$\lab{\A}{\dep{p}}, \extseq, \lab{u}{p}, \lab{v}{p}$}
\LeftLabel{$\Lrule{=}$}
\RightLabel{$(u,v\in\A)$}
\BinaryInfC{$\lab{\A}{\dep{p}}, \extseq$}
\DisplayProof
\hspace{1em}
\AxiomC{$[\lab{u}{p}, \extseq, \lab{v}{p}]_{(u,v)\in\A\times\A}$}
\LeftLabel{$\Rrule{=}$}
\UnaryInfC{$\extseq, \lab{\A}{\dep{p}}$}
\DisplayProof
\\[\mylength] 
\hline
\end{tabularx}
\caption{Axioms and rules of our terminating labelled sequent calculi.}\label{fig:rules:terminating:calculi}
\end{figure} 

\looseness=-1 A labelled formula $\lab{\pi}{\varphi}$ is \emph{satisfied} by an interpretation $\I$, if $\I(\pi)\sem\varphi$. And a sequent $\Gamma\seq\Delta$ is \emph{valid}, if for each interpretation $\I$, we have: if $\I$ satisfies all elements of $\Gamma$, then $\I$ satisfies at least one element of $\Delta$. The axioms and rules of our terminating calculi are presented in Figure~\ref{fig:rules:terminating:calculi}. Note that, in the figure, we use square brackets to denote collections of premises. So, for example, an application of the rule $\Ratom$ with principal formula $\lab{\pi}{p}$ now has exactly $|\pi|$ premises---namely, one premise for each variable $v\in\pi$. In the special case $\pi = \emptyset$, we treat $\Ratom$, $\Rrule{\bot}$ and $\Rrule{=}$ as \emph{initial sequents} (i.e., rules with zero premises). Some of our rules are also subject to additional side conditions, which are indicated on the right-hand side of the ``inference line''. For instance, the pair $(\sigma,\tau)$ must be a split of $\pi$ in applications of $\Rrule{\tensor}$. The rules $\Latom$, $\Lrule{\bot}$, $\Lrule{\rightarrow}$, $\Rrule{\tensor}$, $\Lrule{=}$ are called \emph{cumulative}, since their principal formulas are repeated in their premises. 

\begin{definition}\label{def:terminating:calculi}
Let $\Logic$ be any of the logics from Table~\ref{tab:team:logics} and let $\Conn$ be the corresponding set of connectives (see the second column of Table~\ref{tab:team:logics}). We write $\GTsystem{\Logic}$ for the labelled sequent calculus consisting of the initial sequents ($\ax$ and $\axBot$), the rules for atoms ($\Latom$ and $\Ratom$) and the left and right rules for the connectives in $\Conn$ (see Figure~\ref{fig:rules:terminating:calculi}).\footnote{So, for example, the calculus $\GTsystem{\InqB}$ consists of the initial sequents, the rules for atoms and the rules for $\bot,\wedge,\globdis,\rightarrow$. The calculus $\GTsystem{\PIDtensor}$, on the other hand, consists of the initial sequents, the rules for atoms and the rules for $\bot,\wedge,\globdis,\rightarrow,\dep{\cdot},\tensor$.} For any set of formulas $\Gamma\cup\{\varphi\}\subseteq\Langlog{\Logic}$, we also write $\Gamma\provGT\varphi$ and say that $\varphi$ is \emph{provable} from $\Gamma$ in $\GTsystem{\Logic}$, if there exists a finite subset $\Delta\subseteq\Gamma$ and a label $\pi\subseteq\Vvar$ with $|\pi| = 2^{|\Prop|}$ such that the sequent $\lab{\pi}{\Delta}\seq\lab{\pi}{\varphi}$ is derivable in $\GTsystem{\Logic}$, where $(\lab{\pi}{\Delta}) := \{\lab{\pi}{\psi} \mid \psi\in\Delta\}$.    
\end{definition}

\begin{lemma}[Soundness]\label{lem:soundness:terminating:system}
Let $\Logic$ be one of the logics from Table~\ref{tab:team:logics}. The calculus $\GTsystem{\Logic}$ is sound with respect to $\Logic$. That is, for every $\Gamma\cup\{\varphi\}\in\Langlog{\Logic}$, if $\Gamma\provGT\varphi$ holds in $\GTsystem{\Logic}$, then $\Gamma\semL\varphi$ holds in $\Logic$.   
\end{lemma}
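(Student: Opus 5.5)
We show that every sequent derivable in $\GTsystem{\Logic}$ is valid, by induction on the height of a derivation. Soundness then follows. Suppose $\lab{\pi}{\Delta}\seq\lab{\pi}{\varphi}$ is derivable with $\Delta\subseteq\Gamma$ finite and $|\pi| = 2^{|\Prop|}$. Let $T\subseteq\Val$ be a team satisfying all of $\Gamma$. If $T=\emptyset$, then $T\sem\varphi$ by the empty team property. Otherwise $1\le|T|\le 2^{|\Prop|} = |\pi|$, so there is a surjective interpretation $\I$ from $\pi$ onto $T$, extended arbitrarily to the other variables. Then $\I(\pi)=T$ satisfies every $\lab{\pi}{\psi}$ with $\psi\in\Delta$. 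Validity of the sequent gives $T\sem\varphi$.

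\textbf{Initial sequents.} For $\ax$, any $\I$ satisfying $\lab{v}{p}$ on the left satisfies the same formula on the right. For $\axBot$, $\I(\{v\})$ is a singleton, so it never satisfies $\bot$; the antecedent is therefore never satisfied. The zero-premise instances of $\Ratom$, $\Rrule{\bot}$ and $\Rrule{=}$ with $\pi=\emptyset$ are valid because $\I(\emptyset)=\emptyset$ satisfies every formula.

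\textbf{Rules.} For each rule we check that validity of all premises implies validity of the conclusion. Fix $\I$ satisfying the antecedent of the conclusion but no succedent formula of it, and derive a contradiction.

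\begin{itemize}
\item \emph{Cumulative left rules.} For $\Latom$, $\Lrule{\bot}$ and $\Lrule{=}$, downward closure (or the clause for $\dep{p}$) shows that the added formulas $\lab{v}{p}$, $\lab{v}{\bot}$, and $\lab{u}{p},\lab{v}{p}$ are forced. In the $\Lrule{=}$ case, $\I(u)(p)=\I(v)(p)$ means that either both formulas hold or neither does, so one of the two premises is falsified. For $\Lrule{\rightarrow}$, note $\I(\sigma)\subseteq\I(\pi)$ when $\sigma\subseteq\pi$, and apply the clause for $\rightarrow$.
\item \emph{$\Rrule{\tensor}$.} If $\I(\sigma)\sem\varphi$ and $\I(\tau)\sem\psi$, then $\I(\pi)=\I(\sigma)\cup\I(\tau)$ satisfies $\varphi\tensor\psi$, contradicting the assumption on $\I$. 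So one of $\lab{\sigma}{\varphi}$, $\lab{\tau}{\psi}$ fails, and the corresponding premise is falsified by $\I$.
\item \emph{Connective rules for $\wedge$ and $\globdis$.} These are immediate from the satisfaction clauses.
\item \emph{$\Ratom$, $\Rrule{\bot}$, $\Rrule{=}$.} Since $\I(\pi)$ is exactly the set of values $\I(v)$ for $v\in\pi$, failure of $p$ on $\I(\pi)$ yields some $v\in\pi$ with $\I(v)(p)=0$. Hence $\I$ falsifies that premise. For $\bot$, a non-empty $\I(\pi)$ makes every premise $\lab{v}{\bot}$ false. For $\dep{p}$, failure yields $u,v\in\pi$ with $\I(u)(p)=1$ and $\I(v)(p)=0$.
\end{itemize}

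\textbf{Main obstacle.} The rules $\Rrule{\rightarrow}$ and $\Lrule{\tensor}$ quantify over sub-labels and splits of $\pi$, whereas the semantics quantifies over subteams of $\I(\pi)$. The point to check is that every subteam $S\subseteq\I(\pi)$ has the form $\I(\sigma)$ for some $\sigma\subseteq\pi$; take $\sigma=\{v\in\pi : \I(v)\in S\}$.
\begin{itemize}
\item \emph{$\Rrule{\rightarrow}$.} If $\I(\pi)\not\sem\varphi\rightarrow\psi$, choose $S\subseteq\I(\pi)$ with $S\sem\varphi$ and $S\not\sem\psi$. With $\sigma$ as above, the premise indexed by $\sigma$ is falsified by $\I$.
\item \emph{$\Lrule{\tensor}$.} Given $\I(\pi)=R\cup S$ with $R\sem\varphi$ and $S\sem\psi$, set $\sigma=\{v\in\pi:\I(v)\in R\}$ and $\tau=\{v\in\pi:\I(v)\in S\}$. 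Then $\sigma\cup\tau=\pi$, so $(\sigma,\tau)\in\Splits{\pi}$, and $\I(\sigma)=R$, $\I(\tau)=S$. The premise indexed by $(\sigma,\tau)$ therefore has its antecedent satisfied and is falsified.
\end{itemize}
This completes the induction.
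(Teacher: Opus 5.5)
Your proposal is correct and follows the paper's proof: you show by induction on derivations that every derivable sequent is valid, then treat the empty team separately and otherwise choose an interpretation mapping the $2^{|\Prop|}$ variables of $\pi$ onto $T$. Your rule-by-rule check, in particular the observation that every subteam $S\subseteq\I(\pi)$ equals $\I(\sigma)$ for $\sigma=\{v\in\pi : \I(v)\in S\}$ (which handles $\Rrule{\rightarrow}$ and $\Lrule{\tensor}$), fills in the validity induction that the paper states without details.
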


\begin{proof}
Using induction on the structure of a derivation, one can first show that every sequent derivable in $\GTsystem{\Logic}$ is valid. Suppose now that $\Gamma\provGT\varphi$ holds in $\GTsystem{\Logic}$, i.e., there exists a finite subset $\Delta\subseteq\Gamma$ and a label $\pi=\{v_{1},\ldots,v_{n}\}$ with $n = 2^{|\Prop|}$ such that $\lab{\pi}{\Delta}\seq\lab{\pi}{\varphi}$ is derivable in $\GTsystem{\Logic}$. By what was said above, $\lab{\pi}{\Delta}\seq\lab{\pi}{\varphi}$ must then be valid. Let $T\subseteq\Val$ be an arbitrary team and assume that $T\sem\psi$ for all $\psi\in\Gamma$. If $T=\emptyset$, then we trivially have $T\sem\varphi$. Thus, let us suppose that $T\neq\emptyset$. Since $|\Val| = 2^{|\Prop|}$, we know that $T$ is of the form $T=\{u_{1},\ldots,u_{k}\}$ for some $k$ with $1\leq k\leq 2^{|\Prop|}$. Let now $\I$ be the interpretation given by $\I(v_{i}) := u_{i}$ for all $i$ with $1\leq i\leq k$, and $\I(v_{i}) := u_{1}$ for all $i$ with $k < i \leq 2^{|\Prop|}$. Then, clearly, we have $\I(\pi) = T$. Hence, from $T\sem\Gamma$ and $\Delta\subseteq\Gamma$, we obtain $\I(\pi)\sem\Delta$. By the validity of $\lab{\pi}{\Delta}\seq\lab{\pi}{\varphi}$, this yields $\I(\pi)\sem\varphi$ and thus $T\sem\varphi$. Since $T$ was an arbitrary team with $T\sem\Gamma$, this shows that $\Gamma\semL\varphi$.    
\end{proof}

Throughout the rest of this section, let $\Logic$ be any of the logics from Table~\ref{tab:team:logics}. We will present an algorithm that performs root-first proof search in the calculus $\GTsystem{\Logic}$. Our algorithm will construct a proof search tree for a given sequent by applying the rules of $\GTsystem{\Logic}$ bottom-up to the topmost sequents in the tree. The rules will be applied exhaustively, as all possibilities need to be explored, but non-redundantly, to avoid repetitions. As a consequence, the search terminates within a finite number of steps, yielding either a proof or a \emph{saturated branch} (Lemma~\ref{lemma:termination}). Intuitively, a branch is saturated if its topmost sequent is not an axiom and no further rules can be non-redundantly applied to it. We will then show that a saturated branch can always be used to define an interpretation that falsifies every sequent in the branch (Lemma~\ref{lem:properties:interpretation}). This will allow us to establish the completeness of our calculi. We start by introducing some terminology.

\begin{definition}[Proof Search Tree]
A \emph{proof search tree} is a finite tree of sequents $\Tree$ such that (1) the root node of $\Tree$ is of the form $\lab{\pi}{\Gamma}\seq\lab{\pi}{\varphi}$ for some label $\pi$ with $|\pi| = 2^{|\Prop|}$, and (2) $\Tree$ is built up from root-first applications of the rules of $\GTsystem{\Logic}$. A \emph{branch} in a proof search tree $\Tree$ is a finite sequence $B$ of consecutive sequents in $\Tree$ such that the first sequent in $B$ is the root node of $\Tree$ and the last sequent in $B$ is one of the leaf nodes of $\Tree$. The last sequent in a branch $B$ is also referred to as the \emph{topmost sequent} of $B$.       
\end{definition}

A branch $B$ is \emph{closed} if the topmost sequent of $B$ is an instance of an axiom (note that sequents of the form $\Gamma\seq\Delta,\lab{\emptyset}{p}$ or $\Gamma\seq\Delta,\lab{\emptyset}{\bot}$ or $\Gamma\seq\Delta,\lab{\emptyset}{\dep{p}}$ now also count as axioms). A branch that is not closed is \emph{open}. A proof search tree $\Tree$ is closed if every branch in $\Tree$ is closed. Given a branch $B$ of the form $\Gamma_{1}\seq\Delta_{1}, \Gamma_{2}\seq\Delta_{2},\ldots,\Gamma_{n}\seq\Delta_{n}$, we also define $\Gdown := \bigcup_{1\leq i\leq n} \Gamma_{i}$ and $\Ddown := \bigcup_{1\leq i\leq n} \Delta_{i}$, so $\Gdown$ is the union of all the \emph{antecedents} and $\Ddown$ is the union of all the \emph{succedents} of the sequents in $B$. Moreover, we write $\Ltop := \Gamma_{n}$ for the antecedent and $\Rtop := \Delta_{n}$ for the succedent of the \emph{topmost sequent} of $B$.

\begin{definition}[Saturated Branch]
\label{def:saturation}
A branch $B$ is \emph{saturated}, if $B$ is open and each of the following holds:
\begin{enumerate}
\setlength{\itemsep}{0pt}
\setlength{\parskip}{0pt}
\item\label{def:sat:i} for $\alpha\in\Prop\cup\{\bot\}$: if $(\lab{\pi}{\alpha})\in\Gdown$, then also $(\lab{v}{\alpha})\in\Gdown$ for all $v\in\pi$; 
\item\label{def:sat:ii} for $\alpha\in\Prop\cup\{\bot\}$: if $(\lab{\pi}{\alpha})\in\Ddown$ and $\pi\neq\emptyset$, then $(\lab{v}{\alpha})\in\Ddown$ for some $v\in\pi$;

\item\label{def:sat:v} if $(\lab{\pi}{\varphi\wedge\psi})\in\Gdown$, then both $(\lab{\pi}{\varphi})\in\Gdown$ and $(\lab{\pi}{\psi})\in\Gdown$; 
\item\label{def:sat:vi} if $(\lab{\pi}{\varphi\wedge\psi})\in\Ddown$, then $(\lab{\pi}{\varphi})\in\Ddown$ or $(\lab{\pi}{\psi})\in\Ddown$; 

\item\label{def:sat:vii} if $(\lab{\pi}{\varphi\globdis\psi})\in\Gdown$, then $(\lab{\pi}{\varphi})\in\Gdown$ or $(\lab{\pi}{\psi})\in\Gdown$;
\item\label{def:sat:viii} if $(\lab{\pi}{\varphi\globdis\psi})\in\Ddown$, then both $(\lab{\pi}{\varphi})\in\Ddown$ and $(\lab{\pi}{\psi})\in\Ddown$;

\item\label{def:sat:ix} if $(\lab{\pi}{\varphi\rightarrow\psi})\in\Gdown$ and $\sigma\subseteq\pi$, then $(\lab{\sigma}{\varphi})\in\Ddown$ or $(\lab{\sigma}{\psi})\in\Gdown$;
\item\label{def:sat:x} if $(\lab{\pi}{\varphi\rightarrow\psi})\in\Ddown$, then $(\lab{\sigma}{\varphi})\in\Gdown$ and $(\lab{\sigma}{\psi})\in\Ddown$ for some $\sigma\subseteq\pi$;

\item\label{def:sat:xi} if $(\lab{\pi}{\varphi\tensor\psi})\in\Gdown$, then $(\lab{\sigma}{\varphi})\in\Gdown$ and $(\lab{\tau}{\psi})\in\Gdown$ for some split $(\sigma,\tau)\in\Splits{\pi}$;
\item\label{def:sat:xii} if $(\lab{\pi}{\varphi\tensor\psi})\in\Ddown$ and $(\sigma,\tau)\in\Splits{\pi}$, then $(\lab{\sigma}{\varphi})\in\Ddown$ or $(\lab{\tau}{\psi})\in\Ddown$;

\item\label{def:sat:xiii} if $(\lab{\pi}{\dep{p}})\in\Gdown$ and $u,v\in\pi$, then [$(\lab{u}{p})\in\Gdown$, $(\lab{v}{p})\in\Gdown$] or [$(\lab{u}{p})\in\Ddown$, $(\lab{v}{p})\in\Ddown$];
\item\label{def:sat:xiv} if $(\lab{\pi}{\dep{p}})\in\Ddown$ and $\pi\neq\emptyset$, then $(\lab{u}{p})\in\Gdown$ and $(\lab{v}{p})\in\Ddown$ for some $(u,v)\in\pi\times\pi$.
\end{enumerate}
\end{definition}

\begin{figure}
\centering
\resizebox{\textwidth}{!}{
\begin{minipage}{1.15\linewidth}
\begin{algorithm}[H]  
\caption{An excerpt from the procedure $\ProofSearch$}
\label{alg:proof:search}
\KwIn{A finite set of formulas $\Gamma\subseteq\Langlog{\Logic}$ and a formula $\varphi\in\Langlog{\Logic}$.}
\KwOut{If $\Gamma\provGT\varphi$ holds in $\GTsystem{\Logic}$, then the procedure outputs ``success''. Otherwise, the procedure outputs ``failure''.}
\medskip
Select a label $\pi$ with $|\pi| = 2^{|\Prop|}$ and initialize $\Tree$ as the proof search tree consisting only of the root node $\lab{\pi}{\Gamma}\seq\lab{\pi}{\varphi}$\; \label{alg:init}
\While{true}{ 
\Comment{root-first applications of $\Rrule{\rightarrow}$} 
\label{alg:start:expansion}
\While{$\Tree$ has an open branch $B$ with a topmost sequent of the form $\Pi\seq\Delta,\lab{\sigma}{\psi\rightarrow\chi}$
\label{alg:impl:right:start}
}{
extend $\Tree$ by writing, for every subset $\tau\subseteq\sigma$, the sequent $\lab{\tau}{\psi},\Pi\seq\Delta,\lab{\tau}{\chi}$ on top of $B$\; 
} \label{alg:impl:right:end}

\Comment{root-first applications of $\Lrule{\rightarrow}$}
\While{$\Tree$ has an open branch $B$ with $\lab{\sigma}{\psi\rightarrow\chi}\in\Ltop$ and there exists some $\tau\subseteq\sigma$ such that $(\lab{\tau}{\psi})\notin\Ddown$ and $(\lab{\tau}{\chi})\notin\Gdown$
\label{alg:impl:left:start}
}{
extend $\Tree$ by writing both $\Ltop\seq\Rtop,\lab{\tau}{\psi}$ and $\lab{\tau}{\chi},\Ltop\seq\Rtop$ on top of $B$\; 
} \label{alg:impl:left:end}

\Comment{root-first applications of $\Rrule{\tensor}$}
\While{$\Tree$ has an open branch $B$ with $(\lab{\sigma}{\psi\tensor\chi})\in\Rtop$ and there exists some $(\tau_{1},\tau_{2})\in\Splits{\sigma}$ such that $(\lab{\tau_{1}}{\psi})\notin\Ddown$ and $(\lab{\tau_{2}}{\chi})\notin\Ddown$
\label{alg:tensor:right:start}
}{
extend $\Tree$ by writing both $\Ltop\seq\Rtop,\lab{\tau_{1}}{\psi}$ and $\Ltop\seq\Rtop,\lab{\tau_{2}}{\chi}$ on top of $B$\; 
} \label{alg:tensor:right:end}

\Comment{root-first applications of $\Lrule{=}$}
\While{$\Tree$ has an open branch $B$ with $(\lab{\sigma}{\dep{p}})\in\Ltop$ and $\exists u,v\in\sigma$ s.t.\ $[(\lab{u}{p})\notin\Gdown$ or $(\lab{v}{p})\notin\Gdown]$ and $[(\lab{u}{p})\notin\Ddown$ or $(\lab{v}{p})\notin\Ddown]$
\label{alg:constancy:left:start}
}{
extend $\Tree$ by writing both $\lab{u}{p},\lab{v}{p},\Ltop\seq\Rtop$ and $\Ltop\seq\Rtop,\lab{u}{p},\lab{v}{p}$ on top of $B$\; 
} \label{alg:constancy:left:end}

\Comment{$\ldots$ add here similar while-loops for the other rules $\ldots$} 

\lIf(\tcp*[f]{$\Tree$ is now a derivation for the root sequent}){$\Tree$ is closed}{\Return{} ``success''} \label{alg:end:condition:success}
\lIf(\tcp*[f]{$B$ determines a counter-interpretation for the root sequent}){$\Tree$ has a saturated branch $B$}{\Return{} ``failure''} \label{alg:end:condition:failure}
}
\end{algorithm}
\end{minipage}
}
\end{figure}

We are now ready to describe the desired proof search procedure for the calculus $\GTsystem{\Logic}$. An excerpt from the pseudocode for this procedure is presented in Algorithm~\ref{alg:proof:search}. Given a finite set $\Gamma\subseteq\Langlog{\Logic}$ and a formula $\varphi\in\Langlog{\Logic}$ as input, our algorithm first initializes $\Tree$ as the proof search tree consisting only of the root node $\lab{\pi}{\Gamma}\seq\lab{\pi}{\varphi}$, where $\pi$ is a label with $|\pi| = 2^{|\Prop|}$ (line~\ref{alg:init}). Afterwards, the procedure successively extends $\Tree$ by applying the rules of $\GTsystem{\Logic}$ root-first in all possible ways to the topmost sequents in $\Tree$. The corresponding while-loops for $\Rrule{\rightarrow}$, $\Lrule{\rightarrow}$, $\Rrule{\tensor}$, $\Lrule{=}$ are displayed in lines~\ref{alg:start:expansion}--\ref{alg:constancy:left:end} (the while-loops for the other rules are similar). In applications of the cumulative rules $\Latom$, $\Lrule{\bot}$, $\Lrule{\rightarrow}$, $\Rrule{\tensor}$, $\Lrule{=}$, we use a suitable loop-check\-ing mechanism to prevent the algorithm from producing infinitely many copies of ``equivalent'' rule applications. For example, before performing a root-first application of $\Rrule{\tensor}$ with principal formula $\lab{\sigma}{\psi\tensor\chi}$ for a split $(\tau_{1},\tau_{2})\in\Splits{\sigma}$ in a branch $B$, we first check whether $(\lab{\tau_{1}}{\psi})\in\Ddown$ or $(\lab{\tau_{2}}{\chi})\in\Ddown$. If one of these holds, the application is taken to be redundant and not performed in $B$. Otherwise, the rule is applied in the usual way (lines~\ref{alg:tensor:right:start}--\ref{alg:tensor:right:end}). The corresponding loop-checking conditions for $\Lrule{\rightarrow}$ and $\Lrule{=}$ are displayed in lines~\ref{alg:impl:left:start} and \ref{alg:constancy:left:start}, respectively. The expansion of $\Tree$ is stopped, if either $\Tree$ becomes closed or a saturated branch is found (lines~\ref{alg:end:condition:success}--\ref{alg:end:condition:failure}). In the first case, the procedure outputs ``success'', as $\Tree$ is now a derivation for the root sequent. In the second case, the output is ``failure'', since a saturated branch will always allow us to construct an interpretation that falsifies the root sequent (see Lemma~\ref{lem:properties:interpretation}).  

\begin{lemma}
\label{lemma:termination}
Our proof search procedure for $\GTsystem{\Logic}$ terminates on every input $(\Gamma,\varphi)$, where $\Gamma$ is finite. 
\end{lemma}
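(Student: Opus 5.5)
The argument is the usual finiteness-plus-K\"onig one for terminating calculi: show every branch of the proof search tree has bounded length, and that the tree is finitely branching.

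\emph{Step 1: a finite universe of labelled formulas.} Fix the input $(\Gamma,\varphi)$ and the initial label $\pi$ with $|\pi| = 2^{|\Prop|}$. Inspecting Figure~\ref{fig:rules:terminating:calculi}, no rule introduces a fresh variable. Every label in a premise is either the label of the principal formula, a subset of it ($\Rrule{\rightarrow}$, $\Lrule{\rightarrow}$, $\Lrule{\tensor}$, $\Rrule{\tensor}$), or a singleton $\{v\}$ with $v$ in it (atom, $\bot$ and constancy rules). Hence every label in the tree is a subset of $\pi$. Every formula in the tree is a subformula of $\Gamma\cup\{\varphi\}$ or an atom $p$ from a constancy rule. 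So all labelled formulas in the tree lie in the finite set
\begin{equation*}
U := \{\lab{\sigma}{\psi} \mid \sigma\subseteq\pi,\ \psi\in\mathrm{Sub}(\Gamma\cup\{\varphi\})\cup\Prop\cup\{\bot\}\},
\end{equation*}
with $|U|\le 2^{2^{|\Prop|}}\cdot(|\mathrm{Sub}|+|\Prop|+1)$.

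\emph{Step 2: each rule application on a branch is productive.} I would verify that every root-first step performed by the procedure adds, to $\Gdown$ or to $\Ddown$, a labelled formula not previously there; that is, in every premise the pair $(\Gdown,\Ddown)$ grows strictly. For the cumulative rules $\Latom$, $\Lrule{\bot}$, $\Lrule{\rightarrow}$, $\Rrule{\tensor}$, $\Lrule{=}$, this is exactly what the loop-checking conditions (lines~\ref{alg:impl:left:start}, \ref{alg:tensor:right:start}, \ref{alg:constancy:left:start} and the analogous omitted ones) guarantee. For example, in $\Lrule{\rightarrow}$ the left premise adds $\lab{\tau}{\psi}$ to $\Ddown$ and the right premise adds $\lab{\tau}{\chi}$ to $\Gdown$, each new by the guard.

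The non-cumulative rules ($\Lrule{\wedge}$, $\Rrule{\wedge}$, $\Lrule{\globdis}$, $\Rrule{\globdis}$, $\Rrule{\rightarrow}$, $\Lrule{\tensor}$, $\Ratom$, $\Rrule{\bot}$, $\Rrule{=}$) consume their principal formula. Here I would have the procedure apply such a rule only if its active formulas are not already present in $\Gdown$/$\Ddown$ on that branch, or else simply discard the principal formula. I would also argue directly that each consumed formula occurrence is replaced by strictly smaller components (in the well-founded order given by formula complexity, with labels bounded by $\pi$). Combining the two, a branch can contain at most $2|U|$ productive cumulative steps. Between them, non-cumulative steps strictly decrease the multiset of formula complexities of non-cumulative principal formulas in the topmost sequent, which is a Dershowitz--Manna well-founded measure. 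Since cumulative steps only add formulas from $U$, each branch is finite, and its length is in fact bounded by a function of $|U|$ and the input size.

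\emph{Step 3: conclusion.} Each rule has finitely many premises: at most $2^{|\sigma|}$ for $\Rrule{\rightarrow}$, $|\Splits{\sigma}|\le 3^{|\sigma|}$ for $\Lrule{\tensor}$, and $|\sigma|^2$ for $\Rrule{=}$. Each while-loop has finitely many candidate instances. So the tree is finitely branching with bounded branch length, and by K\"onig's lemma it is finite. Once no while-loop can fire, every open branch satisfies all clauses of Definition~\ref{def:saturation}, because each clause is exactly the negation of the corresponding loop guard. So the procedure then halts in line~\ref{alg:end:condition:success} or line~\ref{alg:end:condition:failure}.

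The main obstacle is Step 2. The measure must combine the finite-set growth of $(\Gdown,\Ddown)$ for cumulative rules with the decreasing measure for consuming rules. A consuming rule (e.g.\ $\Rrule{\rightarrow}$) can re-create formulas already seen lower on the branch, so I must check that re-created formulas never re-enable an already-blocked cumulative step. This holds because the guards refer to the monotone sets $\Gdown$, $\Ddown$ rather than to the topmost sequent.
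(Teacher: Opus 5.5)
Your argument is correct and takes the paper's route: the paper also observes that every labelled formula in the tree has the form $\lab{\sigma}{\psi}$ with $\sigma\subseteq\pi$ and $\psi$ a subformula of the input, and that the loop-checks rule out redundant cumulative applications. Your per-branch bound of $2|U|$ cumulative steps, together with the multiset measure for the consuming rules, just makes explicit what the paper's one-line justification leaves implicit.

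Drop the alternative of letting the procedure ``discard'' the principal formula of a non-cumulative rule. Your multiset measure already handles consuming steps that re-create formulas seen earlier, so the modification is unnecessary. Worse, discarding is a weakening step, which is not a rule of $\GTsystem{\Logic}$, so a closed tree would no longer literally be a derivation.
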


\begin{proof}
Suppose that the algorithm is executed on some finite input $(\Gamma,\varphi)$, so it successively constructs a proof search tree $\Tree$ whose root sequent is of the form $\lab{\pi}{\Gamma}\seq\lab{\pi}{\varphi}$, where $|\pi| = 2^{|\Prop|}$. By inspection of the rules from Figure~\ref{fig:rules:terminating:calculi}, we can see that each labelled formula $\lab{\sigma}{\psi}$ that occurs in $\Tree$ satisfies the following two conditions: (1) $\psi$ is a subformula of some formula in $\Gamma\cup\{\varphi\}$, and (2) $\sigma$ is a subset of the label $\pi$. Since there are only finitely many such labelled formulas (and since we do not allow redundant applications of cumulative rules), it follows that only a finite number of rule applications is possible in the tree. Hence, every branch becomes either closed or saturated after a finite number of steps.    
\end{proof}

We will now prove that, from a saturated branch $B$, one can always extract an interpretation $\IB$ such that all elements of $\Gdown$ are satisfied and all elements of $\Ddown$ are not satisfied by $\IB$. This shows that our proof search procedure is in fact \emph{exhaustive}: if the algorithm does not find a proof for the root sequent $\lab{\pi}{\Gamma}\seq\lab{\pi}{\varphi}$ of the search tree, then this sequent can always be falsified by some interpretation. 

\begin{definition}
Let $B$ be a saturated branch. The interpretation $\IB :\Vvar\rightarrow \Val$ is defined as follows: for every $p\in\Prop$ and every $v\in\Vvar$, we put $\IB(v)(p) := 1$, if $(\lab{v}{p})\in\Gdown$, and we put $\IB(v)(p) := 0$ otherwise.  
\end{definition}

\begin{lemma}\label{lem:properties:interpretation}
Let $B$ be a saturated branch. For all $\varphi\in\Langlog{\Logic}$ and all labels $\pi\subseteq\Vvar$, the following holds:
\begin{itemize}
\setlength{\itemsep}{0pt}
\setlength{\parskip}{0pt}
\item if $(\lab{\pi}{\varphi})\in\Gdown$, then $\IB(\pi)\sem\varphi$,
\item if $(\lab{\pi}{\varphi})\in\Ddown$, then $\IB(\pi)\not\sem\varphi$.
\end{itemize}
\end{lemma}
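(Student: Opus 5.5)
We prove both claims simultaneously by induction on the structure of $\varphi$, for all labels $\pi$ at once, using the saturation clauses of Definition~\ref{def:saturation} together with the openness of $B$. Throughout, recall that $\IB(\pi)=\{\IB(v)\mid v\in\pi\}$, so the subteams of $\IB(\pi)$ are exactly the sets $\IB(\sigma)$ with $\sigma\subseteq\pi$, and the splits of $\IB(\pi)$ into $R\cup S$ are exactly the images of splits $(\sigma,\tau)\in\Splits{\pi}$. This correspondence lets the label-level saturation conditions be transferred to the semantic clauses.

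\emph{Atomic cases.} For $\varphi=p$: if $\lab{\pi}{p}\in\Gdown$, clause~\ref{def:sat:i} gives $\lab{v}{p}\in\Gdown$ for all $v\in\pi$, hence $\IB(v)(p)=1$ by definition, so $\IB(\pi)\sem p$. If $\lab{\pi}{p}\in\Ddown$, then $\pi\neq\emptyset$ (otherwise the branch would be closed), so clause~\ref{def:sat:ii} yields $v\in\pi$ with $\lab{v}{p}\in\Ddown$. We now need $\lab{v}{p}\notin\Gdown$.

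\emph{Key auxiliary fact.} Formulas of the form $\lab{v}{p}$ and $\lab{v}{\bot}$ are never eliminated by any rule, so they persist upward. Hence anything in $\Gdown$ of this shape lies in $\Ltop$, and likewise for $\Ddown$ and $\Rtop$. Since the top sequent is not an instance of $\ax$, $\lab{v}{p}$ cannot be in both, so $\IB(v)(p)=0$. The case $\bot$ is analogous. If $\lab{\pi}{\bot}\in\Gdown$ and $\pi\neq\emptyset$, we get $\lab{v}{\bot}\in\Gdown$, which persists to the top sequent and contradicts openness via $\axBot$; so $\pi=\emptyset$. If $\lab{\pi}{\bot}\in\Ddown$, then $\pi\neq\emptyset$ by openness, so $\IB(\pi)\neq\emptyset$. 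For $\dep{p}$, clauses~\ref{def:sat:xiii} and~\ref{def:sat:xiv} together with the persistence/disjointness fact give: all $u,v\in\pi$ agree on $p$ in the left case, and in the right case some $u$ has $p$ true while some $v$ has $p$ false.

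\emph{Connective cases.} $\wedge$ and $\globdis$ are immediate from clauses~\ref{def:sat:v}--\ref{def:sat:viii} and the induction hypothesis. For $\rightarrow$ on the left, every subteam is $\IB(\sigma)$ with $\sigma\subseteq\pi$. Clause~\ref{def:sat:ix} gives $\lab{\sigma}{\varphi}\in\Ddown$, whence $\IB(\sigma)\not\sem\varphi$ by the induction hypothesis, or $\lab{\sigma}{\psi}\in\Gdown$, whence $\IB(\sigma)\sem\psi$. For $\rightarrow$ on the right, clause~\ref{def:sat:x} provides a witnessing $\sigma\subseteq\pi$. The tensor cases are handled the same way using clauses~\ref{def:sat:xi} and~\ref{def:sat:xii}, together with the split correspondence above.

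\emph{Main obstacle.} The delicate step is the persistence argument. It needs all rules to keep singleton atomic labelled formulas and $\lab{\pi}{\ldots}$ in context; the non-cumulative rules delete only their principal compound formula, so this holds. A further subtlety is that some labels in $\Ddown$ have $\pi=\emptyset$ and were never closed off because they are not topmost. I would handle this by noting that $\lab{\emptyset}{p}$ cannot appear in $\Ddown$ of an open branch: the right rules producing it would have immediately terminated the branch, and no other rule removes it.
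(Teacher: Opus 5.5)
Your proposal is correct and follows the paper's proof. Both argue by simultaneous induction on $\varphi$, transferring the saturation clauses to the semantic clauses through the correspondence between subteams and splits of $\IB(\pi)$ and subsets and splits of $\pi$, and both use the openness of $B$ for the atomic cases and the empty-label cases.

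Your persistence argument spells out a step the paper leaves tacit when it says that $B$ ``would contain an instance of $\ax$''. One correction to your justification: singleton atoms $\lab{v}{p}$ and $\lab{v}{\bot}$ are never deleted because of the side condition $|\pi|\neq 1$ on $\Ratom$ and $\Rrule{\bot}$. These two rules are non-cumulative and delete atomic, not compound, principal formulas, so your claim that non-cumulative rules only delete compound formulas is not the right reason.
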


\begin{proof}
Both claims are proved simultaneously by induction on $\varphi$. We only consider the following cases.

\emph{Case 1:} $\varphi = p$ for some $p\in\Prop$. For the first part, assume that $(\lab{\pi}{p})\in\Gdown$ and let $u\in\IB(\pi)$ be an arbitrary valuation. By definition of $\IB(\pi)$, there must be some variable $v\in\pi$ such that $u = \IB(v)$. From $(\lab{\pi}{p})\in\Gdown$ and $v\in\pi$, it now follows $(\lab{v}{p})\in\Gdown$ by Definition~\ref{def:saturation}~(\ref{def:sat:i}). Hence, by definition of $\IB$, we obtain $\IB(v)(p) = 1$ and thus $u(p) = 1$. Since $u\in\IB(\pi)$ was arbitrary, this shows that $\IB(\pi)\sem p$.

For the second part, suppose that $(\lab{\pi}{p})\in\Ddown$. Since $B$ is open, we must have $\pi\neq\emptyset$ (otherwise, the topmost sequent of $B$ would be an instance of $\Ratom$ with zero premises and therefore an initial sequent). Thus, by Definition~\ref{def:saturation}~(\ref{def:sat:ii}), there is some $v\in\pi$ such that $(\lab{v}{p})\in\Ddown$. As a consequence, we must have $(\lab{v}{p})\notin\Gdown$: otherwise, $B$ would contain an instance of $\ax$, which contradicts the assumption that $B$ is open. By definition of $\IB$, it now follows $\IB(v)(p) = 0$. Since $\IB(v)\in\IB(\pi)$, this yields $\IB(\pi)\not\sem p$.

\looseness=-1 \emph{Case 2:} $\varphi = \psi\rightarrow\chi$. For the first part, suppose for a contradiction that $(\lab{\pi}{\psi\rightarrow\chi})\in\Gdown$ and $\IB(\pi)\not\sem\psi\rightarrow\chi$, i.e., there is some team $T\subseteq \IB(\pi)$ with $T\sem\psi$ and $T\not\sem\chi$. Since $\IB(\pi)$ is finite, we can write $T$ in the form $T = \{\IB(v_{1}),\ldots,\IB(v_{n})\}$ for some variables $v_{1},\ldots,v_{n}\in\pi$ with $n\geq 0$. Let $\sigma$ be the label $\sigma := \{v_{1},\ldots,v_{n}\}$. We then have $T=\IB(\sigma)$ and $\sigma\subseteq\pi$. From $(\lab{\pi}{\psi\rightarrow\chi})\in\Gdown$ and $\sigma\subseteq\pi$, we obtain $(\lab{\sigma}{\psi})\in\Ddown$ or $(\lab{\sigma}{\chi})\in\Gdown$ by Definition~\ref{def:saturation}~(\ref{def:sat:ix}). By the induction hypothesis, this yields $\IB(\sigma)\not\sem\psi$ or $\IB(\sigma)\sem\chi$, so it follows $T\not\sem\psi$ or $T\sem\chi$. But this is a contradiction to the fact that $T\sem\psi$ and $T\not\sem\chi$. 

For the second part, assume that $(\lab{\pi}{\psi\rightarrow\chi})\in\Ddown$. By Definition~\ref{def:saturation}~(\ref{def:sat:x}), we then have $(\lab{\sigma}{\psi})\in\Gdown$ and $(\lab{\sigma}{\chi})\in\Ddown$ for some $\sigma\subseteq\pi$. By the induction hypothesis, this implies $\IB(\sigma)\sem\psi$ and $\IB(\sigma)\not\sem\chi$. Moreover, since $\sigma\subseteq\pi$, we must have $\IB(\sigma)\subseteq\IB(\pi)$. Hence, it follows $\IB(\pi)\not\sem\psi\rightarrow\chi$. 

\emph{Case 3:} $\varphi = \psi\tensor\chi$. For the first part, suppose $(\lab{\pi}{\psi\tensor\chi})\in\Gdown$. By Definition~\ref{def:saturation}~(\ref{def:sat:xi}), we then have $(\lab{\sigma}{\psi})\in\Gdown$ and $(\lab{\tau}{\chi})\in\Gdown$ for some $\sigma$, $\tau$ with $\pi = \sigma\cup\tau$. By the induction hypothesis, this yields $\IB(\sigma)\sem\psi$ and $\IB(\tau)\sem\chi$. And since $\pi = \sigma\cup\tau$, we have $\IB(\pi) = \IB(\sigma)\cup\IB(\tau)$, so $\IB(\pi)\sem\psi\tensor\chi$. 

For the second part, assume $(\lab{\pi}{\psi\tensor\chi})\in\Ddown$. Moreover, suppose for a contradiction that $\IB(\pi)\sem\psi\tensor\chi$, i.e., there are teams $S$ and $T$ with $\IB(\pi) = S\cup T$ such that $S\sem\psi$ and $T\sem\chi$. Since $\IB(\pi) = S\cup T$, there must be some labels $\sigma$ and $\tau$ with $\pi = \sigma\cup\tau$ such that $S=\IB(\sigma)$ and $T=\IB(\tau)$. For these labels, we have $\IB(\sigma)\sem\psi$ and $\IB(\tau)\sem\chi$. Moreover, since $(\lab{\pi}{\psi\tensor\chi})\in\Ddown$ and $\pi = \sigma\cup\tau$, it follows $(\lab{\sigma}{\psi})\in\Ddown$ or $(\lab{\tau}{\chi})\in\Ddown$ by Definition~\ref{def:saturation}~(\ref{def:sat:xii}). By the induction hypothesis, this yields $\IB(\sigma)\not\sem\psi$ or $\IB(\tau)\not\sem\chi$. But this is a contradiction to the fact that $\IB(\sigma)\sem\psi$ and $\IB(\tau)\sem\chi$. Therefore, $\IB(\pi)\not\sem\psi\tensor\chi$.

\emph{Case 4:} $\varphi$ is of the form $\dep{p}$. For the first part, suppose for a contradiction that $(\lab{\pi}{\dep{p}})\in\Gdown$ and $\IB(\pi)\not\sem\dep{p}$, i.e., there are valuations $u_{1},u_{2}\in\IB(\pi)$ such that $u_{1}(p)\neq u_{2}(p)$. Since $u_{1},u_{2}\in\IB(\pi)$, there must be variables $v_{1},v_{2}\in\pi$ with $u_{1} = \IB(v_{1})$ and $u_{2} = \IB(v_{2})$, so it follows $\IB(v_{1})(p) \neq \IB(v_{2})(p)$. Since $(\lab{\pi}{\dep{p}})\in\Gdown$ and $v_{1},v_{2}\in\pi$, we know by Definition~\ref{def:saturation}~(\ref{def:sat:xiii}) that one of the following holds: (1) $(\lab{v_{1}}{p})\in\Gdown$ and $(\lab{v_{2}}{p})\in\Gdown$, or (2) $(\lab{v_{1}}{p})\in\Ddown$ and $(\lab{v_{2}}{p})\in\Ddown$. In either case, it follows $\IB(v_{1})(p) = \IB(v_{2})(p)$ by the induction hypothesis. But this is a contradiction to $\IB(v_{1})(p) \neq \IB(v_{2})(p)$.

For the second part, assume $(\lab{\pi}{\dep{p}})\in\Ddown$. Since $B$ is open, we must have $\pi\neq\emptyset$ (otherwise, the topmost sequent of $B$ would be an instance of $\Rrule{=}$ with zero premises). Thus, by Definition~\ref{def:saturation}~(\ref{def:sat:xiv}), we have $(\lab{u}{p})\in\Gdown$ and $(\lab{v}{p})\in\Ddown$ for some $u,v\in\pi$. By the induction hypothesis, this yields $\IB(u)(p) \neq \IB(v)(p)$. And since $u,v\in\pi$, we have $\IB(u),\IB(v)\in\IB(\pi)$, so it follows $\IB(\pi)\not\sem\dep{p}$.
\end{proof}

\begin{theorem}[Completeness]
The calculus $\GTsystem{\Logic}$ is sound and complete with respect to $\Logic$. That is, for any finite subset $\Gamma\cup\{\varphi\}\subseteq\Langlog{\Logic}$, it is the case that $\Gamma\provGT\varphi$ holds in $\GTsystem{\Logic}$ iff $\Gamma\semL\varphi$ holds in $\Logic$.    
\end{theorem}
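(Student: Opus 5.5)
Soundness is exactly Lemma~\ref{lem:soundness:terminating:system}, so only completeness needs an argument. I would argue by contraposition via the proof search procedure. Let $\Gamma\cup\{\varphi\}$ be finite and suppose $\Gamma\notprovGT\varphi$. Run $\ProofSearch$ on $(\Gamma,\varphi)$. By Lemma~\ref{lemma:termination}, it terminates, and it outputs either ``success'' or ``failure''.

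If it outputs ``success'', the tree $\Tree$ is closed. Since every node of $\Tree$ arises from a root-first application of a rule of $\GTsystem{\Logic}$ and every leaf is an axiom (including the zero-premise instances of $\Ratom$, $\Rrule{\bot}$, $\Rrule{=}$ with empty label), $\Tree$ is a derivation of $\lab{\pi}{\Gamma}\seq\lab{\pi}{\varphi}$ with $|\pi|=2^{|\Prop|}$. This gives $\Gamma\provGT\varphi$, contrary to assumption. Hence the procedure must output ``failure'', so $\Tree$ has a saturated branch $B$.

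One point needs care here. The termination proof only guarantees that every branch becomes closed or saturated, so I must check that an open branch on which no non-redundant rule application remains really meets every clause of Definition~\ref{def:saturation}. For the cumulative rules, this is exactly what the loop-checking conditions encode: for instance, the condition for $\Rrule{\tensor}$ failing for all splits gives clause~(\ref{def:sat:xii}). For the non-cumulative rules, the principal formula disappears from the topmost sequent only after the rule has been applied. That application places the required components into $\Gdown$ or $\Ddown$ along $B$; for $\Rrule{\rightarrow}$, the branch passes through one premise $\lab{\sigma}{\psi}\seq\lab{\sigma}{\chi}$, which gives clause~(\ref{def:sat:x}). I take this correspondence as the intended content of the algorithm's stopping conditions and would only sketch it clause by clause. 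I expect this bookkeeping to be the main (if routine) obstacle.

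Given the saturated branch $B$, apply Lemma~\ref{lem:properties:interpretation} to the root sequent, whose formulas lie in $\Gdown$ and $\Ddown$. The interpretation $\IB$ then satisfies $\IB(\pi)\sem\psi$ for all $\psi\in\Gamma$ and $\IB(\pi)\not\sem\varphi$. Setting $T:=\IB(\pi)\subseteq\Val$, the team $T$ witnesses $\Gamma\not\semL\varphi$. By contraposition, $\Gamma\semL\varphi$ implies $\Gamma\provGT\varphi$, and together with soundness this gives the equivalence.
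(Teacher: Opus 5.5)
Your proposal is correct and follows essentially the same route as the paper. Soundness is taken from Lemma~\ref{lem:soundness:terminating:system}. For completeness, you run the proof search on $(\Gamma,\varphi)$, observe that the tree cannot be closed since $\Gamma\notprovGT\varphi$, and apply Lemma~\ref{lem:properties:interpretation} to a saturated branch $B$, so that the team $\IB(\pi)$ serves as the countermodel. Your extra check, that an open branch with no non-redundant rule applications left actually satisfies every clause of Definition~\ref{def:saturation}, is exactly what the paper leaves implicit in the proof of Lemma~\ref{lemma:termination}.
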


\begin{proof}
The soundness of $\GTsystem{\Logic}$ has been established in Lemma~\ref{lem:soundness:terminating:system}. For the completeness direction, let $\Gamma\cup\{\varphi\}\subseteq\Langlog{\Logic}$ be finite and suppose that $\Gamma\notprovGT\varphi$. Let $\Tree$ be the tree constructed by our proof search algorithm for the input $(\Gamma,\varphi)$, so the root of $\Tree$ is of the form $\lab{\pi}{\Gamma}\seq\lab{\pi}{\varphi}$, where $|\pi| = 2^{|\Prop|}$. Since $\Gamma\notprovGT\varphi$, we know that $\Tree$ cannot be closed, so it must have a saturated branch $B$. Let $\IB$ be the interpretation determined by $B$. By Lemma~\ref{lem:properties:interpretation}, all elements of $\Gdown$ are satisfied and all elements of $\Ddown$ are not satisfied by $\IB$. Thus, in particular, we have $\IB(\pi)\sem\psi$ for all $\psi\in\Gamma$ and $\IB(\pi)\not\sem\varphi$. Therefore, $\Gamma\not\semL\varphi$.    
\end{proof}

\section{Conclusion}\label{sec:conclusion}

We presented two kinds of labelled sequent calculi for a family of team-based propositional logics. We first introduced the systems $\Gsystem{\Logic}$, whose labels are strings of symbols representing arbitrary teams, their unions and intersections. The order rules of these calculi explicitly reflect the semantic reasoning on teams. We then defined the systems $\GTsystem{\Logic}$, whose labels are finite sets of variables. This simpler framework made it possible to provide a terminating proof search procedure for the calculi.

The proof systems $\GTsystem{\Logic}$ have the clear advantage of inducing a decision algorithm for the logics under consideration. The calculi $\Gsystem{\Logic}$, on the other hand, seem to have two other advantages. First, each rule of $\Gsystem{\Logic}$ has a fixed number of premises, whereas in $\GTsystem{\Logic}$, the number of premises of certain rules depends on the size of the label associated with the principal formula. Thus, at least for the tensor-free systems, proofs in $\Gsystem{\Logic}$ can be significantly shorter than proofs in $\GTsystem{\Logic}$. Second, in the calculi $\Gsystem{\Logic}$, one may use variables to reason about arbitrary teams (or arbitrary subsets of a given team), which more closely mirrors the actual mathematical reasoning one would employ to show that a certain formula is satisfied by a team or that all formulas of a certain form are valid in the corresponding logic. 

For future work, we intend to further simplify our calculi by circumventing the need for intersection labels, and by extending our proof systems for $\InqBtensor$ and $\PIDtensor$ to languages with infinitely many atomic propositions. Moreover, we plan to develop labelled calculi for broader families of team-based logics. A primary goal is to find proof systems for downward-closed team logics \emph{without} the intuitionistic implication $\rightarrow$, most notably for {propositional dependence logic} \cite{yang:vaananen:2016}. For such logics, only natural deduction systems \cite{yang:vaananen:2016} and deep-inference-style sequent calculi \cite{anttila2025deep} are known. We also intend to investigate team logics with the union closure property \cite{Yan2022,yang:vaananen:2017}, for which no sequent-style proof systems have been developed up to now. Another natural step would be to extend our calculi to logics based on modal or intuitionistic variants of team semantics \cite{Vaa2008,ciardelli:iemhoff:yang:2020}. Finally, we plan to use our calculi to establish meta-theoretical properties, such as syntactic interpolation, for the logics under consideration.    
\vspace{.5\baselineskip} 

\noindent\textbf{Acknowledgements.} We are grateful to the anonymous reviewers for their valuable feedback and helpful suggestions for improvement. Fausto Barbero was supported by the Research Council of Finland under grant number 349803. Valentin M{\"u}ller was supported by the Swiss National Science Foundation (SNSF) under grant number 10000440 (Epistemic Group Attitudes).

\appendix
\section{Proof of Theorem~\ref{thm:struct}}
\label{app:proof:struct:properties}

\emph{\ref{thm:struct:i}.~The weakening rules are hp-admissible in $\Gsystem{\Logic}$.} This is proved by an easy induction on the height of a derivation for the premise $\Gamma\seq\Delta$ of the respective rule. More details can be found in~\cite[Prop.~3.2.7]{muller:msc:2023}.\vspace{.5\baselineskip} 

\noindent\emph{\ref{thm:struct:ii}.~All rules of $\Gsystem{\Logic}$ are hp-invertible.} The hp-invertibility of the basic and special order rules follows immediately from the hp-admissibility of weakening. The same holds for the rules $\Latom$, $\Lrule{\bot}$, $\Lrule{\rightarrow}$, $\Rrule{\tensor}$, $\Lrule{=}$. For all other rules, we reason by induction on the height of a derivation for the conclusion of the rule. We only show the invertibility of the rule $\Lrule{\tensor}$ (the other cases are similar). Let $\mathcal{D}$ be a derivation for the sequent $\lab{\A}{\varphi\tensor\psi}, \extseq$, and let $n$ be the height of $\mathcal{D}$.  

If $n = 0$, then $\lab{\A}{\varphi \tensor \psi}, \extseq$ must be an axiom. Consequently, $\lab{\A}{\varphi \tensor \psi}$ cannot be principal, as it is not atomic. Thus, for all variables $x, y \in \Avar$, the sequent $\relat{\A}{x \cup y}, \lab{x}{\varphi}, \lab{y}{\psi}, \extseq$ is also an axiom. 

If $n \geq 1 $, we distinguish cases according to the last rule $\rulegen$ applied in $\mathcal{D}$. If $\lab{\pi}{\varphi \tensor \psi}$ is not principal in~$\rulegen$, we apply the induction hypothesis to the premises of $\rulegen$ (possibly with a renaming of eigenvariables), and apply $\rulegen$ again. If $\lab{\pi}{\varphi \tensor \psi}$ is principal, then $\rulegen$ is an application of $\Lrule{\tensor}$. The premise of $\Lrule{\tensor}$ is of the form $\relat{\pi}{x' \cup y'}, \lab{x'}{\varphi}, \lab{y'}{\psi}, \extseq$, where $x', y' \in \Avar$ are fresh. This sequent is derivable by a derivation of height $n-1$. By hp-substitution, we obtain a derivation of height $\leq n$ for $\relat{\pi}{x \cup y}, \lab{x}{\varphi}, \lab{y}{\psi}, \extseq$.\vspace{.5\baselineskip}

\noindent\emph{\ref{thm:struct:iii}.~The contraction rules are hp-admissible in $\Gsystem{\Logic}$.} The hp-admissibility of both contraction rules is proved simultaneously, by induction on the height of a derivation $\mathcal{D}$ for the premise, and by distinguishing cases according to the last rule applied in $\mathcal D$. Let $\Gamma\seq\Delta$ be a sequent that contains a duplication of some relational atom or labelled formula $E$ in the antecedent or in the succedent. Let $\mathcal{D}$ be a derivation for $\Gamma\seq\Delta$ in $\Gsystem{\Logic}$ and let $n$ be the height of $\mathcal{D}$. Using induction on $n$, we show that also the contracted version of $\Gamma\seq\Delta$ (in which the two occurrences of $E$ are replaced by a single one) is derivable by a proof tree of height $\leq n$. If $\mathcal{D}$ is of height $n=0$, then $\Gamma\seq\Delta$ is an axiom, so the contracted version of $\Gamma\seq\Delta$ is also an axiom and therefore derivable by a proof tree of height $n=0$. 

\looseness=-1 Suppose now that $n\geq 1$. If $E$ is not principal in the last rule $\rulegen$ applied in $\mathcal{D}$, we apply the induction hypothesis to the premises of~$\rulegen$, and then apply $\rulegen$ again. And if $E$ is principal in $\rulegen$, we distinguish two cases.

\emph{Case 1:} $E$ is a relational atom. If only one of the two occurrences of $E$ is principal in the last rule application~$\rulegen$ in $\mathcal{D}$, then we use the same argument as above. And if both occurrences of $E$ are principal in $\rulegen$, then we use the \emph{closure condition} introduced in Section~\ref{sec:calculi}.

\emph{Case 2:} $E$ is a labelled formula. In this case, we use the hp-invertibility of the logical rules of $\Gsystem{\Logic}$. 
For example, if $\mathcal{D}$ ends with an application of $\Lrule{\tensor}$ for which $E$ is principal, the premise of the rule is of the form $\relat{\pi}{x \cup y},\lab x \varphi, \lab{y}{\psi},\lab{\pi}{\varphi \tensor \psi}, \Sigma\seq\Delta$, for some fresh $x, y \in \Avar$. The height of the derivation of this sequent is $n-1$. By hp-invertibility and hp-substitution, we obtain a derivation of height $\leq n-1$ for the sequent $\relat{\pi}{x \cup y}, \relat{\pi}{x \cup y},\lab x \varphi,\lab x \varphi, \lab{y}{\psi},\lab{y}{\psi}, \Sigma\seq\Delta$. Using three applications of the inductive hypothesis (which do not increase the height) we obtain a derivation of height $\leq n-1$ for $\relat{\pi}{x \cup y},\lab x \varphi, \lab{y}{\psi}, \Sigma\seq\Delta$. An application of $\Lrule{\tensor}$ now yields a derivation of height $\leq n$ for $\lab{\pi}{\varphi\tensor \psi}, \Sigma\seq\Delta$.\vspace{.5\baselineskip}

\noindent\emph{\ref{thm:struct:iv}.~The cut rule is admissible in $\Gsystem{\Logic}$.} The proof works in the standard way (see, e.g., \cite[Theorem~2.4.3]{negri:plato:2001}) and proceeds by induction on two parameters: the \emph{height} of a cut rule application and the \emph{rank} of the cut formula, which is a measure of the complexity of labelled formulas.\footnote{By the \emph{cut formula}, we mean the labelled formula $\lab{\pi}{\varphi}$ displayed in the presentation of the cut rule in Figure~\ref{fig:structural:rules}.} The \emph{height} of a cut rule application is the sum of the heights of the two subderivations ending with the premises of the cut. Our measure of the complexity of a labelled formula $\lab{\pi}{\varphi}$ takes into account both the complexity of $\varphi$ and the complexity of the label $\pi$ (this is needed to deal with cuts involving the rules $\Latom$, $\Ratom$, $\Lrule{\bot}$, $\Rrule{\bot}$). 

Formally, the \emph{degree of a label} $\pi\in\LABint$, notation $\dg{\pi}$, is defined by setting $\dg{\pi} := 0$, if $\pi\in\Svar$, and $\dg{\pi} := 1$, if $\pi\notin\Svar$. The \emph{degree of a formula} $\varphi \in \Langlog{\Logic}$, notation $\dg{\varphi}$, is defined to be the number of occurrences of the connectives $\bot,\wedge,\globdis,\rightarrow,\dep{\cdot},\tensor$ in $\varphi$. The \emph{rank of a labelled formula} is the pair of natural numbers given by $\rk{\lab{\pi}{\varphi}} := (\dg{\varphi},\dg{\pi})$. As in \cite{muller:2024}, we assume that ranks of labelled formulas are ordered \emph{lexicographically}. That is, we write $\rk{\lab{\pi}{\varphi}}< \rk{\lab{\sigma}{\psi}}$, if we either have $\dg{\varphi} < \dg{\psi}$, or we have both $\dg{\varphi} = \dg{\psi}$ and $\dg{\pi} < \dg{\sigma}$. It is easy to verify that, if $\pi$ is a label with $\pi\notin\Svar$, then $\rk{\lab{v}{\varphi}}< \rk{\lab{\pi}{\varphi}}$ for all variables $v\in\Svar$ and all formulas $\varphi\in\Langlog{\Logic}$. 

The cut-admissibility proof proceeds by a main induction on the rank of the cut formula and a subinduction on the height of the cut. We distinguish three main cases, depending on the two premises of the cut: \emph{(i)} one of the two premises of the cut is an axiom; \emph{(ii)} neither of the two premises is an axiom and the cut formula is not principal on one side; and \emph{(iii)} neither of the two premises of the cut is an axiom and the cut formula is principal on both sides. We show only two subcases of case \emph{(iii)}; the other cases follow quite standardly, and a discussion for several of them can be found in \cite[Theorem~3.2.13]{muller:msc:2023}.

First, suppose that the cut formula is a propositional atom. In this case, the cut must be of the form
\begin{center}
\myfontsize
\AxiomC{$\mathcal{D}_1$}
\noLine
\UnaryInfC{$\relat{v}{\pi}, \Gamma\seq\Delta, \lab{v}{p}$}
\LeftLabel{$\Ratom$}
\UnaryInfC{$\Gamma\seq\Delta,\lab{\pi}{p}$}
\AxiomC{$\mathcal{D}_2$}
\noLine
\UnaryInfC{$\lab{u}{p}, \relat{u}{\pi}, \lab{\pi}{p}, \Pi\seq\Sigma$}
\LeftLabel{$\Latom$}
\UnaryInfC{$\relat{u}{\pi}, \lab{\pi}{p}, \Pi\seq\Sigma$}
\LeftLabel{$\cut$}
\BinaryInfC{$\relat{u}{\pi}, \Gamma,\Pi\seq\Delta,\Sigma$}
\DisplayProof
\end{center}
where $\pi$ is a label with $\pi\notin\Svar$ and $v$ is a fresh variable from $\Svar$. Using the hp-admissibility of substitution and contraction, this application of the cut rule is converted into
\begin{center}
\myfontsize
\AxiomC{$\mathcal{D}_1$}
\noLine
\UnaryInfC{$\relat{v}{\pi}, \Gamma\seq\Delta, \lab{v}{p}$}
\LeftLabel{$(u/v)$}
\UnaryInfC{$\relat{u}{\pi}, \Gamma\seq\Delta, \lab{u}{p}$}
\AxiomC{$\mathcal{D}_1$}
\noLine
\UnaryInfC{$\relat{v}{\pi}, \Gamma\seq\Delta, \lab{v}{p}$}
\LeftLabel{$\Ratom$}
\UnaryInfC{$\Gamma\seq\Delta,\lab{\pi}{p}$}
\AxiomC{$\mathcal{D}_2$}
\noLine
\UnaryInfC{$\lab{u}{p}, \relat{u}{\pi}, \lab{\pi}{p}, \Pi\seq\Sigma$}
\LeftLabel{$\cut$}
\BinaryInfC{$\lab{u}{p}, \relat{u}{\pi}, \Gamma,\Pi\seq\Delta,\Sigma$}
\LeftLabel{$\cut$}
\BinaryInfC{$\relat{u}{\pi}, \relat{u}{\pi}, \Gamma,\Gamma,\Pi\seq\Delta,\Delta,\Sigma$}
\LeftLabel{$\CL,\CR$}
\UnaryInfC{$\relat{u}{\pi}, \Gamma,\Pi\seq\Delta,\Sigma$}
\DisplayProof
\end{center}
where $\CL, \CR$ denote multiple applications of the contraction rules. The new cut on $\lab{\pi}{p}$ is of smaller height than the original one, and the cut on $\lab{u}{p}$ is of smaller rank (since $u\in\Svar$ and $\pi\notin\Svar$). Next, suppose that the cut formula is of the form  $\lab{\pi}{\varphi\tensor\psi}$. Our derivation then has the following shape:
\begin{center}
\myfontsize
\AxiomC{$\mathcal{D}'_1$}
\noLine
\UnaryInfC{$\relat{\pi}{\sigma\cup\tau}, \extseq, \lab{\pi}{\varphi\tensor\psi}, \lab{\sigma}{\varphi}$}
\AxiomC{$\mathcal{D}''_1$}
\noLine
\UnaryInfC{$\relat{\pi}{\sigma\cup\tau}, \extseq, \lab{\pi}{\varphi\tensor\psi}, \lab{\tau}{\psi}$}
\LeftLabel{$\Rrule{\tensor}$}
\BinaryInfC{$\relat{\pi}{\sigma\cup\tau}, \extseq, \lab{\pi}{\varphi\tensor\psi}$}
\AxiomC{$\mathcal{D}_2$}
\noLine
\UnaryInfC{$\relat{\pi}{x\cup y}, \lab{x}{\varphi}, \lab{y}{\psi}, \Pi \seq \Sigma $}
\LeftLabel{ $\Lrule{\tensor}$}
\UnaryInfC{$\lab{\pi}{\varphi\tensor\psi}, \Pi \seq \Sigma  $}
\BinaryInfC{$\relat{\pi}{\sigma \cup \tau}, \Gamma, \Pi \seq \Delta,\Sigma  $}
\DisplayProof 
\end{center}
Without loss of generality, we may assume that the eigenvariables $x$ and $y$ do not occur in the labels $\sigma$ and $\tau$ (if this condition is not satisfied, we simply apply the hp-admissibility of substitution). We construct the following derivation, where $\mathcal{E}$, the subderivation for the sequent $\relat{\pi}{\sigma\cup\tau}, \Gamma, \Pi \seq \Delta, \Sigma, \lab{\tau}{\psi}$, is constructed in a similar way as the subderivation for $\relat{\pi}{\sigma\cup\tau}, \Gamma, \Pi \seq \Delta, \Sigma, \lab{\sigma}{\varphi}$ displayed below:  
\begin{center}
\myfontsize
\def\defaultHypSeparation{\hskip .1in}
\AxiomC{$\mathcal{E}$}
\noLine
\UnaryInfC{$\relat{\pi}{\sigma\cup\tau}, \Gamma, \Pi \seq \Delta, \Sigma, \lab{\tau}{\psi}$}
\AxiomC{$\mathcal{D}'_1$}
\noLine
\UnaryInfC{$\relat{\pi}{\sigma\cup\tau}, \extseq, \lab{\pi}{\varphi\tensor\psi}, \lab{\sigma}{\varphi}$}
\AxiomC{$\mathcal{D}_2$}
\noLine
\UnaryInfC{$\relat{\pi}{x\cup y}, \lab{x}{\varphi}, \lab{y}{\psi}, \Pi \seq \Sigma $}
\LeftLabel{ $\Lrule{\tensor}$}
\UnaryInfC{$\lab{\pi}{\varphi\tensor\psi}, \Pi \seq \Sigma  $}
\LeftLabel{$\cut$}
\BinaryInfC{$ \relat{\pi}{\sigma\cup\tau}, \Gamma, \Pi \seq \Delta, \Sigma, \lab{\sigma}{\varphi}$}
\AxiomC{$\mathcal{D}_2$}
\noLine
\UnaryInfC{$\relat{\pi}{x\cup y}, \lab{x}{\varphi}, \lab{y}{\psi}, \Pi \seq \Sigma $}
\LeftLabel{$(\tau/y)$}
\UnaryInfC{$\relat{\pi}{x\cup \tau}, \lab{x}{\varphi}, \lab{\tau}{\psi}, \Pi \seq \Sigma $}
\LeftLabel{$(\sigma/x)$}
\UnaryInfC{$\relat{\pi}{\sigma\cup \tau}, \lab{\sigma}{\varphi}, \lab{\tau}{\psi}, \Pi \seq \Sigma $}
\LeftLabel{$\cut$}
\BinaryInfC{$ \relat{\pi}{\sigma\cup \tau},\relat{\pi}{\sigma\cup \tau}, \lab{\tau}{\psi}, \Gamma,  \Pi, \Pi \seq \Delta, \Sigma, \Sigma$}
\def\defaultHypSeparation{\hskip -1cm}
\LeftLabel{$\cut$}
\BinaryInfC{$ \relat{\pi}{\sigma\cup \tau},\relat{\pi}{\sigma\cup \tau},\relat{\pi}{\sigma\cup \tau},  \Gamma,\Gamma,  \Pi, \Pi, \Pi \seq \Delta,\Delta, \Sigma,\Sigma, \Sigma$}
\def\defaultHypSeparation{\hskip .2in}
\LeftLabel{$\CL,\CR$}
\UnaryInfC{$ \relat{\pi}{\sigma\cup \tau},\Gamma,  \Pi \seq \Delta, \Sigma$}
\DisplayProof
\end{center}
The cut on $\lab{\pi}{\varphi \tensor \psi}$ has smaller height than the original one, and is thus justified by the inductive hypothesis. Another such cut is contained in the subderivation $\mathcal{E}$. The cut on $\lab{\sigma}{\varphi}$ and the one on $\lab{\tau}{\psi}$ have smaller rank than the original cut, and are also justified by the inductive hypothesis.\qed

\section{Proof of Lemma~\ref{lem:auxiliary:rules}}\label{app:proof:aux:rules}

Let $\Logic$ be any of the logics from Table~\ref{tab:team:logics} and let $\Gsystem{\Logic}$ be the corresponding labelled calculus from Figure~\ref{fig:labelled calculi}. We only prove the admissibility of the rule $\auxi$ (the admissibility proof for $\auxii$ works in the same way). Suppose that the premise $\relat{\pi}{(\sigma\cup\tau)\cup(\sigma'\cup\tau')},\Gamma\seq\Delta$ is derivable in $\Gsystem{\Logic}$. Using the admissibility of weakening, we can then derive the conclusion of $\auxi$ in the following way: 
\begin{center}
\myfontsize
\AxiomC{$\relat{\pi}{(\sigma\cup\tau)\cup(\sigma'\cup\tau')},\Gamma\seq\Delta$}
\LeftLabel{$\WL$}
\UnaryInfC{$\ldots,\relat{\pi}{(\sigma\cup\tau)\cup(\sigma'\cup\tau')},\Gamma\seq\Delta$}
\LeftLabel{$\tr$}
\UnaryInfC{$\ldots,\relat{\sigma\cup\sigma'}{(\sigma\cup\tau)\cup(\sigma'\cup\tau')},\relat{\pi}{\sigma\cup\sigma'},\Gamma\seq\Delta$}
\LeftLabel{$\ul$}
\UnaryInfC{$\ldots,\relat{\sigma}{(\sigma\cup\tau)\cup(\sigma'\cup\tau')},\relat{\sigma'}{(\sigma\cup\tau)\cup(\sigma'\cup\tau')},\relat{\pi}{\sigma\cup\sigma'},\Gamma\seq\Delta$}
\LeftLabel{$\tr$}
\UnaryInfC{$\ldots,\relat{\sigma}{\sigma\cup\tau},\relat{\sigma\cup\tau}{(\sigma\cup\tau)\cup(\sigma'\cup\tau')}, \relat{\sigma'}{(\sigma\cup\tau)\cup(\sigma'\cup\tau')},\relat{\pi}{\sigma\cup\sigma'},\Gamma\seq\Delta$}
\LeftLabel{$\ur$}
\UnaryInfC{$\ldots,\relat{\sigma\cup\tau}{(\sigma\cup\tau)\cup(\sigma'\cup\tau')}, \relat{\sigma'}{(\sigma\cup\tau)\cup(\sigma'\cup\tau')},\relat{\pi}{\sigma\cup\sigma'},\Gamma\seq\Delta$}
\LeftLabel{$\ur$}
\UnaryInfC{$\ldots, \relat{\sigma'}{(\sigma\cup\tau)\cup(\sigma'\cup\tau')},\relat{\pi}{\sigma\cup\sigma'},\Gamma\seq\Delta$}
\LeftLabel{$\tr$}
\UnaryInfC{$\relat{\sigma'}{\sigma'\cup\tau'}, \relat{\sigma'\cup\tau'}{(\sigma\cup\tau)\cup(\sigma'\cup\tau')},\relat{\pi}{\sigma\cup\sigma'},\Gamma\seq\Delta$}
\LeftLabel{$\ur$}
\UnaryInfC{$\relat{\sigma'\cup\tau'}{(\sigma\cup\tau)\cup(\sigma'\cup\tau')},\relat{\pi}{\sigma\cup\sigma'},\Gamma\seq\Delta$}
\LeftLabel{$\ur$}
\UnaryInfC{$\relat{\pi}{\sigma\cup\sigma'},\Gamma\seq\Delta$}
\DisplayProof
\end{center}

\section{Supplement to the Proof of Lemma~\ref{lem:complete:wrt:hilbert}}\label{app:proof:complete:wrt:hilbert}

We need to show that all axioms of the Hilbert system $\Hsystem{\Logic}$ are provable in the labelled calculus $\Gsystem{\Logic}$ (see Definition~\ref{def:provability}). For the sake of brevity, we only consider the case $\Logic=\PIDtensor$ (the other cases are treated in the same way). Showing the provability of the schemes $\AxA{1}$--$\AxA{7}$, $\AxIntro$, $\AxDis$, $\AxCons$ is straightforward. The scheme $\AxSplit$ can be derived in the same way as in \cite[Lemma~3.3.5]{muller:msc:2023} and \cite[p.~137]{muller:2024}. And for the schemes $\AxDN$, $\AxCom$, $\AxMon$, $\AxElim$, we can construct the following derivations in $\GPIDtensor$: 
\begin{center}
\myfontsize
\AxiomC{Lemma~\ref{lem:gen:ax}~(\ref{gen:ax:ii})}
\dashedLine
\UnaryInfC{$\relat{z}{\emptyset}, \ldots \seq \lab{v}{\alpha}, \lab{z}{\bot}$}
\AxiomC{Lemma~\ref{lem:gen:ax}~(\ref{gen:ax:i})}
\dashedLine
\UnaryInfC{$\relat{v}{z}, \ldots, \lab{z}{\alpha} \seq \lab{v}{\alpha}, \lab{z}{\bot}$}
\LeftLabel{$\sg$}
\BinaryInfC{$\relat{z}{v}, \relat{v}{y}, \relat{y}{x}, \lab{y}{\neg\neg\alpha}, \lab{z}{\alpha} \seq \lab{v}{\alpha}, \lab{z}{\bot}$}
\LeftLabel{$\Rrule{\rightarrow}$}
\UnaryInfC{$\relat{v}{y}, \relat{y}{x}, \lab{y}{\neg\neg\alpha} \seq \lab{v}{\alpha}, \lab{v}{\neg\alpha}$\vphantom{$\bot$}}
\AxiomC{}
\LeftLabel{$\axBot$}
\UnaryInfC{$\lab{v}{\bot}, \ldots \seq \lab{v}{\alpha}$}
\LeftLabel{$\Lrule{\rightarrow}$}
\BinaryInfC{$\relat{v}{y}, \relat{y}{x}, \lab{y}{\neg\neg\alpha} \seq \lab{v}{\alpha}$\vphantom{$\bot$}}
\LeftLabel{$\FL$}
\UnaryInfC{$\relat{y}{x}, \lab{y}{\neg\neg\alpha} \seq \lab{y}{\alpha}$\vphantom{$\bot$}}
\LeftLabel{$\Rrule{\rightarrow}$}
\UnaryInfC{$\seq \lab{x}{\neg\neg\alpha\rightarrow \alpha}$\vphantom{$\bot$}}
\DisplayProof
\\[.5\baselineskip]
\AxiomC{Lemma~\ref{lem:gen:ax}~(\ref{gen:ax:iv})}
\dashedLine
\UnaryInfC{$\ldots, \lab{z_2}{\psi} \seq \ldots, \lab{z_2}{\psi}$\vphantom{$()$}}
\AxiomC{Lemma~\ref{lem:gen:ax}~(\ref{gen:ax:iv})}
\dashedLine
\UnaryInfC{$\ldots, \lab{z_1}{\varphi} \seq \ldots, \lab{z_1}{\varphi}$\vphantom{$()$}}
\LeftLabel{$\Rrule{\tensor}$}
\BinaryInfC{$\ldots, \relat{y}{z_2\cup z_1}, \lab{z_1}{\varphi}, \lab{z_2}{\psi} \seq \lab{y}{\psi\tensor\varphi}$\vphantom{$()$}}
\LeftLabel{$\tr$}
\UnaryInfC{$\ldots, \relat{z_1\cup z_2}{z_2\cup z_1}, \relat{y}{z_1\cup z_2}, \lab{z_1}{\varphi}, \lab{z_2}{\psi} \seq \lab{y}{\psi\tensor\varphi}$\vphantom{$()$}}
\LeftLabel{$\ul$}
\UnaryInfC{$\ldots, \relat{z_1}{z_2\cup z_1}, \relat{z_2}{z_2\cup z_1}, \relat{y}{z_1\cup z_2}, \lab{z_1}{\varphi}, \lab{z_2}{\psi} \seq \lab{y}{\psi\tensor\varphi}$\vphantom{$()$}}
\LeftLabel{$\ur$}
\UnaryInfC{$\relat{z_2}{z_2\cup z_1}, \relat{y}{z_1\cup z_2}, \relat{y}{x}, \lab{z_1}{\varphi}, \lab{z_2}{\psi} \seq \lab{y}{\psi\tensor\varphi}$\vphantom{$()$}}
\LeftLabel{$\ur$}
\UnaryInfC{$\relat{y}{z_1\cup z_2}, \relat{y}{x}, \lab{z_1}{\varphi}, \lab{z_2}{\psi} \seq \lab{y}{\psi\tensor\varphi}$\vphantom{$()$}}
\LeftLabel{$\Lrule{\tensor}$}
\UnaryInfC{$\relat{y}{x}, \lab{y}{\varphi\tensor\psi} \seq \lab{y}{\psi\tensor\varphi}$\vphantom{$()$}}
\LeftLabel{$\Rrule{\rightarrow}$}
\UnaryInfC{$\seq \lab{x}{(\varphi\tensor\psi)\rightarrow(\psi\tensor\varphi)}$}
\DisplayProof
\\[.5\baselineskip]
\AxiomC{Lemma~\ref{lem:gen:ax}~(\ref{gen:ax:i})}
\dashedLine
\UnaryInfC{$\relat{z\cap z_1}{z_1}, \ldots, \lab{z_1}{\varphi} \seq \ldots, \lab{z\cap z_1}{\varphi}$\vphantom{$()$}}
\LeftLabel{$\il$}
\UnaryInfC{$\ldots, \lab{z_1}{\varphi} \seq \ldots, \lab{z\cap z_1}{\varphi}$\vphantom{$()$}}
\AxiomC{Lemma~\ref{lem:gen:ax}~(\ref{gen:ax:iv})}
\dashedLine
\UnaryInfC{$\ldots, \lab{z\cap z_1}{\psi} \seq \ldots, \lab{z\cap z_1}{\psi}$\vphantom{$()$}}
\AxiomC{Lemma~\ref{lem:gen:ax}~(\ref{gen:ax:i})}
\dashedLine
\UnaryInfC{$\relat{z\cap z_2}{z_2}, \ldots, \lab{z_2}{\chi} \seq \ldots, \lab{z\cap z_2}{\chi}$\vphantom{$()$}}
\LeftLabel{$\il$}
\UnaryInfC{$\ldots, \lab{z_2}{\chi} \seq \ldots, \lab{z\cap z_2}{\chi}$\vphantom{$()$}}
\LeftLabel{$\Rrule{\tensor}$}
\BinaryInfC{$\ldots, \relat{z}{(z\cap z_1)\cup(z\cap z_2)}, \lab{z\cap z_1}{\psi}, \lab{z_2}{\chi} \seq \lab{z}{\psi\tensor\chi}$\vphantom{$()$}}
\LeftLabel{$\dis$}
\UnaryInfC{$\ldots, \relat{z}{z_1\cup z_2}, \lab{z\cap z_1}{\psi}, \lab{z_2}{\chi} \seq \lab{z}{\psi\tensor\chi}$\vphantom{$()$}}
\LeftLabel{$\Lrule{\rightarrow}$}
\BinaryInfC{$\ldots, \relat{z\cap z_1}{y}, \relat{z}{z_1\cup z_2}, \lab{y}{\varphi\rightarrow\psi}, \lab{z_1}{\varphi}, \lab{z_2}{\chi} \seq \lab{z}{\psi\tensor\chi}$\vphantom{$()$}}
\LeftLabel{$\tr$}
\UnaryInfC{$\relat{z\cap z_1}{z}, \relat{z}{z_1\cup z_2}, \relat{z}{y}, \relat{y}{x}, \lab{y}{\varphi\rightarrow\psi}, \lab{z_1}{\varphi}, \lab{z_2}{\chi} \seq \lab{z}{\psi\tensor\chi}$\vphantom{$()$}}
\LeftLabel{$\il$}
\UnaryInfC{$\relat{z}{z_1\cup z_2}, \relat{z}{y}, \relat{y}{x}, \lab{y}{\varphi\rightarrow\psi}, \lab{z_1}{\varphi}, \lab{z_2}{\chi} \seq \lab{z}{\psi\tensor\chi}$\vphantom{$()$}}
\LeftLabel{$\Lrule{\tensor}$}
\UnaryInfC{$\relat{z}{y}, \relat{y}{x}, \lab{y}{\varphi\rightarrow\psi}, \lab{z}{\varphi\tensor\chi} \seq \lab{z}{\psi\tensor\chi}$\vphantom{$()$}}
\LeftLabel{$\Rrule{\rightarrow}$}
\UnaryInfC{$\relat{y}{x}, \lab{y}{\varphi\rightarrow\psi} \seq \lab{y}{(\varphi\tensor\chi) \rightarrow (\psi\tensor\chi)}$}
\LeftLabel{$\Rrule{\rightarrow}$}
\UnaryInfC{$\seq \lab{x}{(\varphi\rightarrow\psi) \rightarrow ((\varphi\tensor\chi) \rightarrow (\psi\tensor\chi))}$}
\DisplayProof 
\\[.5\baselineskip]
\AxiomC{Lemma~\ref{lem:gen:ax}~(\ref{gen:ax:i})}
\dashedLine
\UnaryInfC{$\ldots, \relat{v}{x_1}, \lab{x_1}{\varphi} \seq \ldots, \lab{v}{\varphi}$\vphantom{$()$}}
\AxiomC{Lemma~\ref{lem:gen:ax}~(\ref{gen:ax:iv})}
\dashedLine
\UnaryInfC{$\ldots, \lab{v}{\alpha} \seq \lab{v}{\alpha}$\vphantom{$()$}}
\LeftLabel{$\Lrule{\rightarrow}$}
\BinaryInfC{$\ldots, \relat{v}{x_1}, \relat{v}{y}, \lab{y}{\varphi\rightarrow\alpha}, \lab{x_1}{\varphi} \seq \lab{v}{\alpha}$\vphantom{$()$}}
\AxiomC{Lemma~\ref{lem:gen:ax}~(\ref{gen:ax:i})}
\dashedLine
\UnaryInfC{$\ldots, \relat{v}{x_2}, \lab{x_2}{\psi} \seq \ldots, \lab{v}{\psi}$\vphantom{$()$}}
\AxiomC{Lemma~\ref{lem:gen:ax}~(\ref{gen:ax:iv})}
\dashedLine
\UnaryInfC{$\ldots, \lab{v}{\alpha} \seq \lab{v}{\alpha}$\vphantom{$()$}}
\LeftLabel{$\Lrule{\rightarrow}$}
\BinaryInfC{$\ldots, \relat{v}{x_2}, \relat{v}{z}, \lab{z}{\psi\rightarrow\alpha}, \lab{x_2}{\psi} \seq \lab{v}{\alpha}$\vphantom{$()$}}
\LeftLabel{$\cd$}
\BinaryInfC{$\ldots, \relat{v}{x_1\cup x_2}, \relat{v}{z}, \relat{v}{y}, \lab{y}{\varphi\rightarrow\alpha}, \lab{z}{\psi\rightarrow\alpha}, \lab{x_1}{\varphi}, \lab{x_2}{\psi} \seq \lab{v}{\alpha}$\vphantom{$()$}}
\LeftLabel{$\tr$}
\UnaryInfC{$\ldots, \relat{v}{x_1\cup x_2}, \relat{v}{z}, \relat{z}{y}, \lab{y}{\varphi\rightarrow\alpha}, \lab{z}{\psi\rightarrow\alpha}, \lab{x_1}{\varphi}, \lab{x_2}{\psi} \seq \lab{v}{\alpha}$\vphantom{$()$}}
\LeftLabel{$\tr$}
\UnaryInfC{$\ldots, \relat{v}{x_1\cup x_2}, \relat{v}{x_0}, \relat{x_0}{z}, \relat{z}{y}, \lab{y}{\varphi\rightarrow\alpha}, \lab{z}{\psi\rightarrow\alpha}, \lab{x_1}{\varphi}, \lab{x_2}{\psi} \seq \lab{v}{\alpha}$\vphantom{$()$}}
\LeftLabel{$\tr$}
\UnaryInfC{$\relat{x_0}{x_1\cup x_2}, \relat{v}{x_0}, \relat{x_0}{z}, \relat{z}{y}, \relat{y}{x}, \lab{y}{\varphi\rightarrow\alpha}, \lab{z}{\psi\rightarrow\alpha}, \lab{x_1}{\varphi}, \lab{x_2}{\psi} \seq \lab{v}{\alpha}$\vphantom{$()$}}
\LeftLabel{$\Lrule{\tensor}$}
\UnaryInfC{$\relat{v}{x_0}, \relat{x_0}{z}, \relat{z}{y}, \relat{y}{x}, \lab{y}{\varphi\rightarrow\alpha}, \lab{z}{\psi\rightarrow\alpha}, \lab{x_0}{\varphi\tensor\psi} \seq \lab{v}{\alpha}$\vphantom{$()$}}
\LeftLabel{$\FL$}
\UnaryInfC{$\relat{x_0}{z}, \relat{z}{y}, \relat{y}{x}, \lab{y}{\varphi\rightarrow\alpha}, \lab{z}{\psi\rightarrow\alpha}, \lab{x_0}{\varphi\tensor\psi} \seq \lab{x_0}{\alpha}$\vphantom{$()$}}
\LeftLabel{$\Rrule{\rightarrow}$}
\UnaryInfC{$\relat{z}{y}, \relat{y}{x}, \lab{y}{\varphi\rightarrow\alpha}, \lab{z}{\psi\rightarrow\alpha} \seq \lab{z}{(\varphi\tensor\psi) \rightarrow \alpha}$}
\LeftLabel{$\Rrule{\rightarrow}$}
\UnaryInfC{$\relat{y}{x}, \lab{y}{\varphi\rightarrow\alpha} \seq \lab{y}{(\psi\rightarrow\alpha) \rightarrow ((\varphi\tensor\psi) \rightarrow \alpha)}$}
\LeftLabel{$\Rrule{\rightarrow}$}
\UnaryInfC{$\seq \lab{x}{(\varphi\rightarrow\alpha) \rightarrow ((\psi\rightarrow\alpha) \rightarrow ((\varphi\tensor\psi) \rightarrow \alpha))}$}
\DisplayProof 
\end{center}
Note that, in the derivations for $\AxDN$ and $\AxElim$, we also use the admissible rule $\FL$ from Figure~\ref{fig:further:admissible:rules}.

\bibliographystyle{eptcs}
\bibliography{references}

\end{document}